\tikzstyle{startstop} = [rectangle, rounded corners, minimum width=3.5cm, minimum height=1cm, text centered, draw=black, fill=blue!10]
\tikzstyle{decision} = [diamond, aspect=2, text centered, draw=black, fill=green!20, inner sep=1pt]
\tikzstyle{process} = [rectangle, minimum width=3.5cm, minimum height=1cm, text centered, draw=black, fill=blue!20]
\tikzstyle{arrow} = [thick,->,>=stealth]
  \newcommand*\l@authors{\@dottedtocline{1}{0pt}{0pt}}
\newtheorem{theorem}{Theorem}
\theoremstyle{definition}
\newtheorem{lemma}{Lemma}
\theoremstyle{definition}
\newtheorem{assumption}{Assumption}
\theoremstyle{definition}
\theoremstyle{definition}
\newtheorem{corollary}{Corollary}
\theoremstyle{definition}
\newcommand{\bigCI}{\mathrel{\text{\scalebox{1.07}{$\perp\mkern-10mu\perp$}}}}
\title{Difference-in-Differences with Unpoolable Data
\thanks{We are grateful to the Canadian Institutes of Health Research (CIHR) for funding this project: grant number PJT-175079.  Thanks also to Jamie Daw, Kim McGrail, James MacKinnon, Douglas Miller, and David Rudoler for many helpful discussions, to Nick Brown, Austin Denteh, Michel Grignon, and Rapha{\"e}l Langevin for detailed comments,  to Kwasi Tabiri for research assistance, and to Eric Jamieson for coding the software packages.  We appreciate comments from audience members at the Canadian Society for Epidemiology and Biostatistics 2023 conference, Canadian Stata Users Group 2023 Meeting,  the Canadian Econometrics Study Group 2023 conference, the Southern Economic Association 2023 Conference, Carleton University, Indiana University, the Canadian Association for Health Services and Policy Research 2024 conference, the Canadian Health Economics Study Group 2024 meeting, the Canadian Economics Association 2024 annual meetings, the Society for Epidemiological Research 2024 conference, American Society for Health Economics 2024 annual meeting, and the International Association of Applied Econometrics 2024 conference.}}
\author{Sunny Karim \and Matthew D. Webb \and Nichole Austin \and Erin Strumpf\thanks{Karim: Carleton University, \url{sunny.karim@cmail.carleton.ca}.  Webb: Carleton University, \url{matt.webb@carleton.ca}. Austin: Dalhousie University, \url{nichole.austin@dal.ca}. Strumpf: McGill University, \url{erin.strumpf@mcgill.ca.}}}
\date{}
\begin{document}
\maketitle

Difference-in-differences (DID) is commonly used to estimate treatment effects but is infeasible in settings where data are unpoolable due to privacy concerns or legal restrictions on data sharing, particularly across jurisdictions. Unpoolable datasets are datasets which are stored on a secure server (sometimes called silos), and researchers have signed an agreement not to export or import data. In this paper, we identify and relax the assumption of data poolability in DID estimation. We propose an innovative approach to estimate DID with unpoolable data (UN-DID) which can accommodate covariates, multiple groups, and staggered adoption. We show that the conventional and the UN-DID estimators can provide equivalent estimates of the ATT and standard errors without covariates with a simulated poolable dataset. We also show that the UN-DID can recover unbiased estimates of the ATT whereas the conventional estimator can be biased when the covariates' effect is not homogeneous across silos. Two empirical examples with real-world data further underscore UN-DID’s utility. The UN-DID method allows the estimation of cross-jurisdictional treatment effects with unpoolable data, enabling new research questions to be answered.

\section{Introduction}
\label{c1intro}

Difference-in-differences (DID) is a widely used method to evaluate the impacts of policies and interventions. It is used to estimate both intended treatment effects and unintended consequences. The source of identifying variation in DID analyses is often a policy change that occurs in some political jurisdictions (e.g., states, countries) but not in others. Treated observations come from jurisdictions or units in which the intervention occurs, while control or untreated observations come from jurisdictions that do not experience that change (at least not at the same time). DID can be used to estimate both the intent-to-treat (ITT) effect and the average treatment effect on the treated (ATT). In the absence of randomized treatment assignment, a widely documented set of assumptions support the unbiased estimation of these effects using conventional DID methods such as the Two-way Fixed Effects (TWFE). These assumptions include parallel pre-intervention trends, no anticipation and no staggered adoption, and new DID estimators have been developed in settings where these assumptions do not hold \citep{roth2022s, abadie2005semiparametric, de2020twott, callaway2021difference, de2020twoseveral}. Since treatment assignment occurs at the jurisdictional level, using conventional DiD estimation relies on an additional assumption that has not yet been made explicit: that data from the treatment group and the control group can be combined, or pooled, for analyses.

This paper develops the DID for unpoolable data (UN-DID) method, enabling the estimation of treatment effects using DID when the poolable data assumption is violated. This occurs when data come from the same jurisdictional level as the policy variation, and may not be combined on the same computer or server, usually due to privacy concerns or legal restrictions on data sharing. Usually researchers accessing data stored in secured servers, often called silos, sign an agreement not to import data to or export data from these servers. The data are therefore siloed at the level of treatment, and traditional DID methods cannot be used to estimate the ATT. Examples include health care  data from different insurers (e.g., Canadian provinces \citep{HDRN}, private and public insurers, all-payer claims databases across states \citep{aspe2023}), restricted-use micro-data held in secure environments (e.g., vital statistics, income tax, business data held in Statistics Canada's or the US Census Bureau's Research Data Centres \citep{CanRDC, USRDC}), and electronic health records federated across countries (e.g., the European Health Data Space \citep{EHDS}). Sometimes data are considered unpoolable because the datasets are not sufficiently harmonized. We do not address that issue here - UN-DID requires that data that are physically unpoolable are sufficiently harmonized (either de facto or after cross-jurisdictional collaboration). That said, UN-DID is able to accommodate datasets from different jurisdictions that measure the same underlying concepts, but code or categorize variables slightly differently (e.g., income in different bins). UN-DID enables the estimation of treatment effects when policies and interventions vary at the level of data silos (county, state, country, etc.), opening opportunities to estimate less-biased treatment effects using better controls and to answer new research questions.

When the ability to pool data across treatment and control jurisdictions is limited, the choice of appropriate counterfactuals is necessarily constrained. Researchers may resort to estimating first-differences \citep{Fischer2021} or using interrupted time series \citep{Tamblyn2001, bernal2017interrupted, Farin2024}, which only provide unbiased treatment effect estimates if no other changes coincide with the intervention of interest. More convincing treatment effect estimates are obtained leveraging within-jurisdiction controls \citep{Rudoler2023, Strumpf2017, LiHurley2014, GIRMA2007} or individual fixed effects \citep{Dumont2008}, both of which may still suffer from selection on unobservables. The impacts of national policies that take effect everywhere at the same time can sometimes be identified using pre-policy variation \citep{Finkelstein2007, Duflo2001}, but in other cases, including nationwide cannabis legalization in Canada in 2018, rigorous policy evaluations using external controls remain extremely limited or non-existent. 

If policy designs permit the use of Regression Discontinuity (RD) or Instrumental Variables (IVs) to estimate local average treatment effects \citep{almond2010, Hutcheon2020}, the identifying variation may be convincing, but the generalizability of the estimates are narrow by definition. Meta-analysis can be used with unpoolable data, and is an effective way to leverage sample size by combining silo-specific treatment effect estimates into a single parameter \citep{xiong2023federated}. For example, the Canadian Network for Observational Drug Effect Studies (CNODES) routinely employs meta-analytical techniques to more comprehensively examine drug safety using siloed provincial or national data \citep{Suissa2012, Filion2014, Durand2021}. However, meta-analysis cannot be used to estimate treatment effects of an intervention in one jurisdiction using another unexposed jurisdiction as a control. UN-DID will allow researchers to estimate treatment effects using better counterfactuals, leveraging data from non-treated jurisdictions and limiting selection bias while maintaining generalizability.

Data poolability limitations can also constrain which research questions get asked, specifically which outcomes are examined in evaluations of policy interventions. National and international surveys include multiple states/provinces or countries, so DID models are easily estimated with treated and control units in the same dataset \citep{Baker2008, Charters2013, Dube2019}. The range of outcomes available in a survey, however, is necessarily limited with respect to content, detail, and method of reporting (self-report vs. administrative data). In the absence of a method like UN-DID to allow treatment effect estimation across separate datasets, the questions that have been asked and answered have been a function of the (pooled) datasets that are available. This constraint has become even more problematic in recent years, precisely because large, detailed, and high-quality data are increasingly available for research. All-payer health insurance claims, electronic medical records, detailed individual- and firm-level tax filings, genetic data, and immigration records offer comprehensiveness, accuracy and detail that allow researchers to ask and answer new questions. However, these data are generally accessible only in secure research data centers and are not shareable or poolable across jurisdictions. By enabling treatment effect estimation using siloed data, UN-DID will open up opportunities to answer novel research questions and exploit new sources of identifying variation using separate datasets across jurisdictions.

The UN-DID method allows the estimation of treatment effects with unpoolable data. This has important implications for research across the fields of applied economics, and across a range of disciplines including public policy, epidemiology, health research, environmental science, political science, sociology, and others. With UN-DID, new research questions can be answered and better counterfactuals can be used to reduce bias when estimating treatment effects. While researchers with treatment and control observations in separate datasets could already calculate the double-difference ``by hand", our formal results justify this approach and extend beyond the 2$\times$2 case, accomodating staggered adoption, more than two groups, and covariates. Furthermore, UN-DID enables the estimation of appropriate standard errors and therefore proper inference. We have developed software packages in R, STATA and Julia, which are described in Section \ref{sec:software}. In a separate publication, we will guide users of the packages through an empirical example using real-world siloed datasets. These will enable UN-DID to be implemented across secure data silos, even when local data analysts are not skilled econometricians.

The intuition behind UN-DID is based on non-parametric DID estimation where conditional mean outcomes are estimated for each of four treatment$\times$timing groups: pre-intervention treatment, pre-intervention control, post-intervention treatment and post-intervention control. The post-pre first difference can then be calculated for each treatment and control group, and then the DID estimate is simply the difference between those first differences. UN-DID estimates first-differences and standard errors within jurisdictional silos using a regression based tool and provides a method to combine these estimates into a treatment effect estimate and to test hypotheses. We provide formal results which show that this method accurately estimates the ATT. UN-DID also estimates the parameters necessary to evaluate parallel pre-intervention trends. 

In this paper, we begin by reviewing the DID identifying assumptions and formally introduce the poolability assumption, that data can be combined across treatment and control groups. We propose a method for estimating the ATT when the poolability assumption is violated, which we call UN-DID (DID for unpoolable data). Because researchers commonly rely on regression-based tools for estimating treatment effects, we introduce the UN-DID method as a regression-based tool for this purpose. Through analytical proofs, we first demonstrate the equivalence of estimates between conventional DID and UN-DID without covariates. We then relax the strong parallel trends to the conditional parallel trends assumption and add the assumption of common causal covariates to extend the analysis using covariates. This assumption was introduced to the literature in \cite{karim2024good} and assumes that each covariate has a common coefficient across silos. In this setting, we show the equivalence of conventional DID and UN-DID with time-invariant covariates. In contrast, the conventional and the UN-DID estimators converge to two distinct population parameters when time varying covariates are used. In addition, we show that the UN-DID is unbiased, while the conventional estimator can be biased when the CCC assumption is violated. 
We also turn to the many groups and many periods (G$\times$T) setting and relax the no staggered adoption assumption to demonstrate how the UN-DID method works in this setting, in the spirit of \cite{callaway2021difference}.   Cluster robust inference is a challenge, but possible, as described in Section \ref{sec:cluster} and a companion project \citep*{karim2025slides}. 

With analytical results in hand, we conduct Monte Carlo experiments with simulated panel data designed to mimic populations in Canada's two largest provinces over 10 years (2$\times$10). We consider six data generating processes (DGPs): either with or without treatment effects and with no covariates, with a time-invariant covariate, and with a time-varying covariate. We consider equal and unequal sample sizes between the two provinces, and vary the sample size from 500 to 50,000 observations. To assess unbiasedness, we plot kernel densities of UN-DID and conventional DID estimates of the ATTs against the known ATT value from the DGP. To assess the asymptotic performance of UN-DID and the conventional DID hetero-robust standard errors, we compare estimates with the true values of the standard errors against to the known true values as the sample size increases.  Specifically, we calculate the mean squared error between SEs for each pair: UN-DID vs. truth, conventional DID vs. truth, and UN-DID vs. conventional DID. Finally, to assess UN-DID's performance and to highlight its utility, we include two empirical examples: one with common treatment timing and one with staggered adoption. Here, real poolable data are treated as if they were unpoolable across treated and control groups, allowing us to compare the estimated ATTs and SEs from UN-DID and conventional DID with real-world data.  

In the simulation study, we demonstrate that UN-DID and conventional DID are both unbiased without covariates. With a time-invariant covariate, the two ATTs are unbiased; the SEs are equivalent but not numerically equal and converge to the true value. With a time-varying covariate, the UN-DID and conventional DID ATTs are unbiased but are not numerically equivalent as they converge to two different population parameters. When the state-varying CCC assumption is violated, we demonstrate that the UN-DID is unbiased and the conventional DID is biased. The above results hold with both equal and unequal sample sizes across silos. These results hold in our two empirical examples using real-world data. Without covariates, UN-DID's ATT and SE estimates are exactly the same as conventional DID. With covariates, the estimates differ slightly but the statistical inference and substantive conclusions remain the same. As the number of covariates and/or data silos increases, the more flexible UN-DID model may differ more from conventional DID which constrains covariates to have the same slopes across silos. These results reinforce the theoretical and simulation study findings, that the UN-DID procedure estimates either identical or very similar treatment effects compared to the conventional method.

This paper contributes to the dynamic literature on DID methods. In the past five years, an important series of papers have looked under the hood of the conventional DID estimator, exploring what is actually being estimated and what can go wrong when each of the identifying assumptions is violated \citep{roth2022s, abadie2005semiparametric, de2020twott, callaway2021difference, de2020twoseveral, goodman2021difference}. They have developed methods to identify the ATT in each case, changing the way that DID analyses are conducted. We highlight the heretofore implicit assumption of data poolabilty and provide a solution to estimate the ATT when this assumption does not hold. We assess UN-DID's performance and demonstrate its equivalence to conventional DID. 

We evidently aim to contribute to the large literature that uses DID to estimate treatment effects of policy changes and other interventions, both in economics and across a wide range of disciplines. In this paper, we apply UN-DID to two published policy evaluations that used conventional DID methods \citep{sabia2017effect, conley_2011}. Our primary goal is to compare the estimates from the two methods, with a secondary goal of highlighting how UN-DID could be used in empirical analyses. We have produced UN-DID software packages in several languages, which are explained in detail in Section \ref{sec:software}. A forthcoming user guide will use UN-DID to evaluate a policy change using real-world siloed data, where a conventional DID analysis is impossible and no ATT estimates exist for comparison.

Throughout the paper, longer proofs are relegated to appendices.  The paper proceeds as follows: Section \ref{sec:theory} presents the theoretical framework and the identifying assumptions underlying DID in the canonical 2x2 setting. It also formally introduces the poolability and the no entry assumptions. Section \ref{sec: UNDIDintroduction} introduces the UN-DID in detail, and how it is utilized in the canonical 2x2 setting. Section \ref{section: matching} discusses why methods such as IPW, RA and DR-DID are infeasible with unpoolable datasets. Section \ref{section: Proofs} highlights two key advantages of the UN-DID over the conventional estimator. 
Section \ref{sec:un_stag} extends UN-DID to settings with multiple silos and/or staggered adoption and Section \ref{sec:mult} further extends UN-DID in the staggered adoption framework where there are multiple silos per treatment time. Section \ref{section: MC} describes the basic Monte Carlo design, and Section \ref{sec:results} presents simulation results for the cases with and without time-fixed and time-varying covariates. Section \ref{sec:examples} compares UN-DID and conventional DID ATT estimates for two empirical examples using real-world data and Section \ref{sec:concl} concludes.

\section{Theoretical Framework} \label{sec:theory}
\label{section: Theoretical Framework}

In this section, we introduce the core assumptions of DID, most of which also apply to UN-DID. Suppose an intervention was introduced at different times across different jurisdictions (e.g., states, provinces). Individual-level data are held by jurisdictions and are not poolable across them - in other words, the data cannot be easily combined for analysis. There are several possible variations of this common scenario: for example, implementation timing may be common or staggered, there may be multiple treatment and control groups (e.g., several states may have enforced a policy, while several others did not), and so on. UN-DID offers a novel approach to handle these analytical contexts.

We begin by describing the canonical case (two groups and two periods, with common adoption and covariates) before expanding to staggered adoption scenarios in Section \ref{sec:stag}. The basic idea of DID estimation is to compare the difference in outcomes before and after the treatment between groups that  received the treatment and groups that did not \citep{bertrand2004much}. 
The change in outcomes for the control group is used as the unobserved counterfactual of the treated group. \cite{card1993minimum} first used the conventional DID to estimate the effect of an increase in minimum wage on unemployment in the American state of Pennsylvania using changes in outcomes over the same period in New Jersey as the control. The conventional DID estimate of the ATT uses the following regression:
\begin{equation}
\label{equation: didsimple}
    Y^{Pooled}_{i,s,t} = \beta_0 + \beta_1 D^{Pooled}_{s} + \beta_2 P^{Pooled}_{t} + \beta_3 P^{Pooled}_{t} * D^{Pooled}_{s} + \beta_4 X^{Pooled}_{i,s,t} + \epsilon^{Pooled}_{i,s,t}.
\end{equation}
Here, $D^{Pooled}_{s}$ is a dummy variable that takes on a value of 1 if group $s$ is in the treatment group, and 0 otherwise. $P^{Pooled}_{t}$ is a dummy variable that takes on a value of 1 if the observation is in the post-intervention period, and 0 otherwise. $X^{Pooled}_{i,s,t}$ are the covariates that researchers include to improve the plausibility of parallel trends. The superscript $Pooled$ implies that we are combining the dataset from both the treated group ($Treat$) and the control group ($Control$) into a common dataset ($Pooled = Treat + Control$). Under very strong assumptions, $\hat{\beta_3}$ (the coefficient of the interaction term between $D^{Pooled}_{s}$ and $P^{Pooled}_{t}$) identifies the ATT \citep{roth2022s}. To simplify notations, we will not use the $Pooled$ superscripts to denote a combined dataset for the rest of the paper. 

\begin{assumption}[Treatment is Binary]
\label{assumption: binary}
    This implies  individual $i$ in state $s$ can be either treated or not treated at time $t$. There are no variations in treatment intensity.
\[ D_{s} = \begin{cases} 
      1 & \mbox{if individual i in state s is treated at time t}.\\
      0 & \mbox{if individual i in state s is not treated at time t}. \\
       \end{cases}
\]
\end{assumption}

\begin{assumption}[Conditional Parallel Trends]
\label{assumption: spt}
    The evolution of untreated potential outcomes conditional on covariates are the same between treated and control groups.
    \begin{align}
        \label{equation: cptundid}
        \begin{split}
           & \biggr[E[Y_{i,s,t}(0)|D_s = 1,P_t=1,X_{i,s,t}] - E[Y_{i,s,t}(0)|D_s = 1,P_t=0,X_{i,s,t}]\biggr] \\
              = & \ \biggr[E[Y_{i,s,t}(0)|D_s = 0,P_t=1,X_{i,s,t}] - E[Y_{i,s,t}(0)|D_s = 0, P_t= 0,X_{i,s,t}] \biggr].
        \end{split}
    \end{align}
\end{assumption}

In Equation \eqref{equation: cptundid}, $Y_{i,s,t}(0)$ is the untreated potential outcome for the relevant group and period. Assumption \eqref{assumption: spt} ensures that the selection bias is 0. A plausibility test for parallel trends is to analyze the evolution of outcomes conditional on covariates for both treated and control groups in the pre-intervention period and check if the trends in outcomes are parallel.

\begin{assumption}[No Anticipation] 
\label{assumption: na}
Treated units do not change behavior before treatment occurs.
\begin{equation}
    \label{equation: na}
    \begin{gathered}
        \biggl[E[Y_{i,s,t}(t)|D_{s} = 1,P_t = 0] = E[Y_{i,s,t}(0)|D_{s} = 1, P_{t} = 0]\biggr] \mbox{\quad\textit{a.s.} for all} \;\; t < t'.
    \end{gathered}
\end{equation}    
\end{assumption}

Under Assumption \eqref{assumption: na}, the treated potential outcome is equal to the untreated potential outcome for all units in the treated group in the pre-intervention period \citep{abadie2005semiparametric, de2020twott}. Here, $t'$ is the period which the treated group is first treated. Violation of no anticipation can also lead to deviations in parallel trends before treatment. Here, $Y_{i,s,t}(t)$ is the treated potential outcome of individual $i$ in state $s$ at calendar year $t$. Under Assumptions \eqref{assumption: spt} and \eqref{assumption: na}, the conditional ATT for a given value of $X_{i,s,t}$ is shown in Equation \eqref{equation: att}. Refer to \cite{callaway2021difference} for a simple proof.
\begin{corollary}
[Unconditional ATT under conditional parallel trends and no anticipation]
\label{theorem: att}
    \begin{equation}
    \label{equation: att}
    \begin{split}
        ATT^{Conditional} = & \biggl[E \bigr[[E[Y_{i,s,t}|D_{s} = 1,P_t = 1,X_{i,s,t} = x] - E[Y_{i,t}|D_{s} = 1, P_{t} = 0,X_{i,s,t} = x]\biggr] \\ - & \biggl[E[Y_{i,s,t}|D_{s} = 0, P_{t} = 1,X_{i,s,t} = x] - E[Y_{i,s,t}|D_{s} = 0, P_t = 0,X_{i,s,t} = x]\bigr] \biggr| D_s = 1 \biggr].
    \end{split}
    \end{equation}
\end{corollary}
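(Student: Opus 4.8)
The plan is to establish the identity at a fixed covariate value $X_{i,s,t}=x$ and then integrate over the conditional distribution of $X_{i,s,t}$ given $D_s=1$. Start from the definition of the conditional ATT,
\[
ATT(x) \;=\; E[Y_{i,s,t}(t)\mid D_s=1,P_t=1,X_{i,s,t}=x] \;-\; E[Y_{i,s,t}(0)\mid D_s=1,P_t=1,X_{i,s,t}=x].
\]
The first term is directly identified: in the post-period treated cell the observed outcome coincides with the treated potential outcome, so it equals $E[Y_{i,s,t}\mid D_s=1,P_t=1,X_{i,s,t}=x]$. The substantive work is to express the unobserved counterfactual $E[Y_{i,s,t}(0)\mid D_s=1,P_t=1,X_{i,s,t}=x]$ through observed conditional means.

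I would first invoke the conditional parallel trends condition (Assumption \ref{assumption: spt}) evaluated at $X_{i,s,t}=x$ to replace the unobserved post-period level of $Y(0)$ in the treated group:
\begin{align*}
E[Y_{i,s,t}(0)\mid D_s=1,P_t=1,x] = {} & E[Y_{i,s,t}(0)\mid D_s=1,P_t=0,x] \\
 & + E[Y_{i,s,t}(0)\mid D_s=0,P_t=1,x] - E[Y_{i,s,t}(0)\mid D_s=0,P_t=0,x].
\end{align*}
Next, by no anticipation (Assumption \ref{assumption: na}) the treated units' pre-period outcomes equal their untreated potential outcomes, so $E[Y_{i,s,t}(0)\mid D_s=1,P_t=0,x]=E[Y_{i,s,t}\mid D_s=1,P_t=0,x]$; and because the control group is untreated in both periods of the $2\times 2$ window, $E[Y_{i,s,t}(0)\mid D_s=0,P_t,x]=E[Y_{i,s,t}\mid D_s=0,P_t,x]$ for $P_t\in\{0,1\}$. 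Substituting these observation equations and returning to $ATT(x)$ produces the double difference of observed conditional means,
\begin{align*}
ATT(x) = {} & \big(E[Y_{i,s,t}\mid D_s=1,P_t=1,x]-E[Y_{i,s,t}\mid D_s=1,P_t=0,x]\big) \\
 & - \big(E[Y_{i,s,t}\mid D_s=0,P_t=1,x]-E[Y_{i,s,t}\mid D_s=0,P_t=0,x]\big).
\end{align*}

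Finally, I would take the expectation of $ATT(x)$ over the conditional law of $X_{i,s,t}$ given $D_s=1$; by the law of iterated expectations the left-hand side becomes the unconditional ATT on the treated and the right-hand side becomes exactly Equation \eqref{equation: att}. In writing this up I would also flag an implicit overlap/common-support requirement, so that the conditional means are well defined on the relevant support of $X_{i,s,t}$ across both $D_s$ groups and both periods. The one genuinely delicate point is the cell-by-cell bookkeeping of the observation equations among the four treatment$\times$timing groups — matching which potential outcome is revealed where (realized-outcome consistency in the treated-post cell, no anticipation in the treated-pre cell, and the untreated status of the controls in both periods) — and keeping straight the measure over which the outer expectation is taken; once those are pinned down, the remainder is the elementary algebra of the double difference. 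A short proof along these lines appears in \cite{callaway2021difference}.
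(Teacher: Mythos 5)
Your proof is correct and follows exactly the standard identification argument that the paper itself defers to, citing \cite{callaway2021difference} rather than writing it out: consistency in the treated--post cell, conditional parallel trends to impute the missing counterfactual, no anticipation plus the untreated status of controls to replace potential outcomes with observed conditional means, and an outer expectation over $X_{i,s,t}$ given $D_s=1$. The overlap caveat you flag is a reasonable addition that the paper leaves implicit.
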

In Equation \eqref{equation: att}, $E[Y_{i,s,t}|D_s = 1, P_t = 1,X_{i,s,t} = x]$ is the expected outcome for the population in the treated group in the post-intervention period with a covariate value $x$ and $E[Y_{i,s,t}|D_s = 1, P_t = 0,X_{i,s,t} = x]$ is the expected outcome for the population in the treated group in the pre-intervention period with a covariate value $x$. Similarly, $E[Y_{i,s,t}|D_i = 0, P_t = 1,X_{i,s,t} = x]$ is the expected outcome for the population in the control group in the post-intervention period with a covariate value $x$ and $E[Y_{i,s,t}|D_s = 1, P_t = 0,X_{i,s,t} = x]$ is the expected outcome for the population in the control group in the pre-intervention period with a covariate value $x$. It is important to distinguish the unconditional ATT from the conditional ATT shown below, which is just the ATT for a single realization of $X_{i,s,t} = x$. 
\vspace{-0.5em}
\begin{corollary}[Conditional ATT given X under conditional parallel trends and no anticipation]
\label{theorem: attcond}
    \begin{equation}
    \label{equation: attcond}
    \begin{split}
        ATT^{Conditional}(x) = & \biggl[E[Y_{i,s,t}|D_{s} = 1,P_t = 1,X_{i,s,t} = x] - E[Y_{i,s,t}|D_{s} = 1, P_{t} = 0,X_{i,s,t} = x]\biggr] \\ - & \biggl[E[Y_{i,s,t}|D_{s} = 0, P_{t} = 1,X_{i,s,t} = x] - E[Y_{i,s,t}|D_{s} = 0, P_t = 0,X_{i,s,t} = x]\biggr].
    \end{split}
    \end{equation}
\end{corollary}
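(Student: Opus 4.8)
The plan is to start from the definition of the conditional ATT at $X_{i,s,t}=x$, namely $ATT^{Conditional}(x) = E[Y_{i,s,t}(t) - Y_{i,s,t}(0)\mid D_s=1, P_t=1, X_{i,s,t}=x]$, and to rewrite each of the two terms using observable conditional means. The only genuinely unobservable object is the counterfactual $E[Y_{i,s,t}(0)\mid D_s=1, P_t=1, X_{i,s,t}=x]$ --- the untreated outcome the treated group would have had in the post period --- and reconstructing it from observables is the crux of the argument.

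First I would handle the observable cells of the $D_s \times P_t$ table. By consistency (the observed outcome equals the potential outcome under realized treatment status), $E[Y_{i,s,t}(t)\mid D_s=1,P_t=1,X_{i,s,t}=x] = E[Y_{i,s,t}\mid D_s=1,P_t=1,X_{i,s,t}=x]$. In the pre period the treated group is not yet treated, so Assumption \ref{assumption: na} gives $E[Y_{i,s,t}(0)\mid D_s=1,P_t=0,X_{i,s,t}=x] = E[Y_{i,s,t}\mid D_s=1,P_t=0,X_{i,s,t}=x]$, and the control group is never treated, so $E[Y_{i,s,t}(0)\mid D_s=0,P_t=p,X_{i,s,t}=x] = E[Y_{i,s,t}\mid D_s=0,P_t=p,X_{i,s,t}=x]$ for $p\in\{0,1\}$. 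Then I would recover the missing counterfactual by rearranging the conditional parallel trends identity of Assumption \ref{assumption: spt}: $E[Y_{i,s,t}(0)\mid D_s=1,P_t=1,X_{i,s,t}=x] = E[Y_{i,s,t}(0)\mid D_s=1,P_t=0,X_{i,s,t}=x] + \bigl(E[Y_{i,s,t}(0)\mid D_s=0,P_t=1,X_{i,s,t}=x] - E[Y_{i,s,t}(0)\mid D_s=0,P_t=0,X_{i,s,t}=x]\bigr)$, in which every right-hand-side term was just identified with an observable mean. Substituting into the definition and collecting terms yields Equation \eqref{equation: attcond} directly; the unconditional statement in Corollary \ref{theorem: att} then follows by integrating both sides against the conditional law of $X_{i,s,t}$ given $D_s=1$, which is why the argument parallels the one in \cite{callaway2021difference}.

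The main obstacle is bookkeeping rather than analysis: stating precisely which potential-outcome-to-observed-outcome substitution is licensed in each of the four $D_s\times P_t$ cells, and making explicit that the post-period treated counterfactual is the unique term that actually invokes parallel trends, while no anticipation is what is needed (beyond consistency) to read off the pre-period treated cell. Once those identifications are pinned down, what remains is a one-line rearrangement.
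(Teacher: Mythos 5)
Your proposal is correct and is essentially the argument the paper relies on: the paper does not prove this corollary itself but defers to \cite{callaway2021difference}, whose identification proof is exactly your three-step chain of consistency in the observed cells, no anticipation for the pre-period treated cell, and a rearrangement of the conditional parallel trends identity to recover the post-period treated counterfactual. Nothing is missing; the substitution and collection of terms yields Equation \eqref{equation: attcond} as you describe.
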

\begin{assumption}[No Staggered Adoption]
\label{assumption: sa}
    All units in the treated group are treated at calendar year $t$. There is no variation in treatment timing.
\end{assumption}
\vspace{-0.75em}
Assumption \eqref{assumption: sa} implies treatment occurs only once \citep{callaway2021difference,de2020twott}. For now, we introduce Assumption \eqref{assumption: sa} to develop an intuitive understanding of the ``simple'' two group, two period building blocks of the UN-DID with staggered adoption. This assumption will be relaxed later when we explore the more complex setup where multiple groups adopt treatment at different times.   

The papers cited above have  contributed to a rich recent literature investigating what occurs when each of the aforementioned assumptions are violated. In this paper, we introduce a new assumption that has been implied in previous literature but has not been explicitly explored or discussed in the context of DID. 

\begin{assumption}[Data are Poolable]
\label{assumption:pooled}
    Data from the treatment and control groups are available on a single server and can be combined together for analyses. 
    \[ Y^{pooled} = \begin{bmatrix} Y^{Treat}\\ Y^{Control} \end{bmatrix} = \begin{bmatrix} y_{Treat,1}\\ y_{Treat,2}\\ \vdots\\ y_{Treat,N_{Treat}} \\ y_{Control,1}\\ y_{Control,2}\\ \vdots\\ y_{Control, N_{Control},} \end{bmatrix}, X^{Pooled} = \begin{bmatrix} X^{Treat}\\ X^{Control} \end{bmatrix} = \begin{bmatrix} x_{Treat,1}\\ x_{Treat,2}\\ \vdots\\ x_{Treat, N_{Treat}} \\ x_{Control,1}\\ x_{Control,2}\\ \vdots\\ x_{Control, N_{Control}} \end{bmatrix} \]
\end{assumption}

Here, $Y^{Treat}$ and $Y^{Control}$ are two matrices containing the outcome variable for group $Treat$ and $Control$, respectively, and $N_{Treat}$ and $N_{Control}$  the total number of observations in the matrices, respectively. $X^{Treat}$ and $X^{Control}$ are the matrices of covariates for the treated and the control groups, respectively, which can also be stacked together into a common matrix $X$.

\[ Y^{Treat} = \begin{bmatrix} y_{Treat,1}\\ y_{Treat,2}\\ \vdots\\ y_{Treat,N_T} \end{bmatrix} , Y^{Control} = \begin{bmatrix} y_{Control,1}\\ y_{Control,2}\\ \vdots\\ y_{Control, N_{Control}} \end{bmatrix}, \]

\[  X^{Treat} = \begin{bmatrix} x_{Treat,1}\\ x_{Treat,2}\\ \vdots\\ x_{Treat, N_T} \end{bmatrix} , X^{Control} = \begin{bmatrix} x_{Control,1}\\ x_{Control,2}\\ \vdots\\ x_{Control, N_{Control}} \end{bmatrix} \]

Under Assumption \eqref{assumption:pooled}, the two matrices can be stacked together into a common matrix $Y^{pooled}$, which is not feasible when there are legal restrictions preventing them from being stacked together (data are siloed or unpoolable). For settings where Assumption \eqref{assumption:pooled} is violated and the data are not poolable, we introduce the UN-DID method to estimate the ATT. In such cases, when we access the data from the treated silo, we have no knowledge of the data from the control silo. Similarly, once we have access to the data from the untreated silo, we have no information about the data from the treated silo. The conventional DID method becomes impractical if siloed datasets cannot be combined for regression analysis. 

\begin{assumption}[No Entry/Extraction Only]
\label{assumption: noentry}
The various silos contain the individual level observations. The data sharing agreements permit summary statistics, coefficients and standard errros to be extracted, but no new information can be added to a silo's secure environment. 
\end{assumption}

Assumption \eqref{assumption: noentry} implies that when data are unpoolable, researchers are unable to exchange any information between silos, including coefficients and standard errors from regressions conducted in a different silo. The extracted coefficients and standard errors can only be combined in a server outside the respective data silos. 

\section{Unpooled Difference-in-differences (UN-DID) Estimator}
\label{sec: UNDIDintroduction}

In this section we introduce the UN-DID estimator in the simplest setting, where there are only two silos and two time periods.  The estimator is extended to multiple time periods in Section \ref{sec:un_stag} and multiple silos per time period in Section \ref{sec:mult}.

\subsection{UN-DID with 2 Silos and 2 Time periods}

When Assumptions \eqref{assumption:pooled} and \eqref{assumption: noentry} are violated, we can visit each silo (physically or virtually) and run regressions shown in equations \eqref{equation: undid1} and \eqref{equation: undid2} for the treated and untreated silos, respectively. In the simplest case, we assume that there are only two silos: the treated silo ($Treat$) and the untreated silo ($Control$). 
\begin{equation}
\label{equation: undid1}
    \mbox{For treated:} \; Y^{Treat}_{i,t} = \lambda_1^{Treat} pre_t^{Treat} + \lambda_2^{Treat} post_t^{Treat} + \lambda_3^{Treat} X^{Treat}_{i,t} + \nu^{Treat}_{i,t}.
\end{equation}
\begin{equation}
\label{equation: undid2}
    \mbox{For untreated:} \; Y^{Control}_{i,t} = \lambda_1^{Control} pre_t^{Control} + \lambda_2^{Control} post_t^{Control} + \lambda_3^{Control} X^{Control}_{i,t} + \nu^{Control}_{i,t}.
\end{equation}

Here, $Y^{Treat}_{i,t}$ is the outcome for an individual from the silo that is treated at time $t$ and $Y^{Control}_{i,t}$ is the outcome of an individual from the silo that is untreated at time $t$. $post_t^{Treat}$ is a dummy variable that takes on a value of 1 when the treated observation is in the post-intervention period, and 0 otherwise. Similarly, $post_t^{Control}$ is a dummy variable that takes on a value of 1 when the untreated observation is in the post-intervention period, and 0 otherwise. $pre_t^{Treat}$ is a dummy variable that takes on a value of 1 if the treated observation is in the pre-intervention period, hence $pre_t^{Treat} = 1 - post_t^{Treat}$. Similarly, $pre_t^{Control}$ is a dummy variable that takes on a value of 1 if the untreated observation is in the pre-intervention period, hence $pre_t^{Control} = 1 - post_t^{Control}$. $X^{Treat}_{i,t}$ and $X^{Control}_{i,t}$ are the covariates for the treated and the control groups respectively. Note that the regressions in equations \eqref{equation: undid1} and \eqref{equation: undid2} do not include a constant. Therefore, none of the variables are dropped because of multicollinearity.

These regressions are silo-specific. Because data are siloed, the treated regression does not contain any data from the untreated silo and the untreated regression does not contain any data from the treated silo. Here, $(\widehat{\lambda}_2^{Treat} - \widehat{\lambda}_1^{Treat}) - (\widehat{\lambda}_2^{Control} - \widehat{\lambda}_1^{Control})$ is the estimate of the ATT, as shown in Equation \eqref{equation: undidattmain}. The associated hetero-robust standard error of the ATT is estimated as the square root of the sum of the variances used to estimate the ATT estimated from the above UN-DID regressions minus twice the covariances between the relevant coefficients. This is shown in Equation \eqref{equation: undidse}. 
\begin{equation}
\label{equation: undidattmain}
    \widehat{ATT^{UNDID}} = (\widehat{\lambda}_2^{Treat} - \widehat{\lambda}_1^{Treat}) - (\widehat{\lambda}_2^{Control} - \widehat{\lambda}_1^{Control}).
\end{equation}
\begin{equation}
\label{equation: undidse}
\widehat{SE}(\widehat{ATT^{UNDID}}) = 
\sqrt{
\begin{aligned}
&\widehat{SE}(\widehat{\lambda}_1^{Treat})^2 
+ \widehat{SE}(\widehat{\lambda}_2^{Treat})^2 
+ \widehat{SE}(\widehat{\lambda}_1^{Control})^2 
+ \widehat{SE}(\widehat{\lambda}_2^{Control})^2 \\
&- 2\,\widehat{Cov}(\widehat{\lambda}_1^{Treat}, \widehat{\lambda}_2^{Treat}) 
- 2\,\widehat{Cov}(\widehat{\lambda}_1^{Control}, \widehat{\lambda}_2^{Control})
\end{aligned}
}
\end{equation}

\subsection{Cluster-Robust Inference for UN-DID} \label{sec:cluster}
In this paper, we assume that the error terms are independent. Accordingly, we can estimate standard errors that are robust to heteroskedasticity but not robust to clustering.  It is straightforward to extend the procedures that we discuss to clustering at the sub-silo level.  For instance, one could cluster at the person-level when using panel data, or at the city level if the silos are state level.  However, researchers will typically cluster at the level of the policy change \citep{bertrand2004much, MACKINNON2023}, which we have assumed is by silo.  Silo level clustering is challenging  when models like those in equations \eqref{equation: undid1}  and \eqref{equation: undid2} are also estimated at the silo level.  Conventional methods of cluster-robust inference will not work with silo-specific data because there is only one cluster in each dataset. With only one cluster, the variance matrix is rank deficient and cannot be computed. In a companion project \cite{karim2025slides}, we explore two methods for cluster robust inference.  The first is a cluster jackknife, similar to \cite{MNW-bootknife, hansen2025standard}.  The second is a Randomization Inference procedure similar to  \citep{mackinnon2020randomization}. However, the finite sample properties of cluster-robust inference under silo level clustering are not addressed in this paper. The \texttt{UNDID} software packages, see Section \ref{sec:software}, produce cluster robust standard errors and p-values using these two methods. 

\subsection{Differing Covariates by Silo }

One interesting feature of siloed data is that the information may be collected and coded differently according to the 
processes in place in each jurisdiction. Each silo is primarily interested in analyzing the data in isolation for their own purposes, not in comparisons with other jurisdictions. It is therefore likely that not all variables will be standardized or harmonized across silos. UN-DID is designed for settings where data are unpoolable in the sense that they cannot be combined on one server or in one regression, but where the information they capture is similar enough that fundamental issues of data harmonization are not a problem. It is convenient, therefore, that UN-DID can accomodate settings where data are sufficiently harmonized so that variables capture the same concepts across silos, even if they are coded somewhat differently. 

Consider, for instance, a variable like family income. Researchers must confirm that this variable is comparable across silos. For example, they should check whether this variable includes the same types of income (wages, investment earnings, government transfers, etc.) and is continuous and not-censored in both silos. However, it is plausible that both silos record family income as categorical: one using bins in increments of \$10,000, while the other uses increments of \$25,000.  This would cause issues for 
conventional DID, as the categories are non-overlapping across silos without a problematic degree of aggregation. With UN-DID, the silo-specific coefficients can handle the different categories easily, each first-difference can be estimated conditional on the covariate, and the conditional ATT and SEs can also be estimated. While data must be sufficiently harmonized to capture the same underlying concepts and the effects of covariates must be homogeneous across silos (formally stated in section \ref{sec3:ccc} as the common causal covariates assumption), the UN-DID can more flexibly accommodate differences in datasets across data silos than conventional DID. 

 \section{Alternative Re-weighting Estimators} \label{section: matching}

In this section, we will discuss three alternatives to the conventional regression which are used to estimate the ATT with covariates: Inverse Probability Weighting (IPW), Regression Adjustment (RA) and the Double Robust Difference-in-differences (DR-DID) estimators. In particular, we show that these methods are infeasible when Assumption \eqref{assumption:pooled} and \eqref{assumption: noentry} are violated. The IPW method is a semi-parametric approach, which involves re-weighting the first difference of outcomes for the control group which are more similar to the treated group based on observable characteristics \citep{abadie2005semiparametric}. In other words, the units in the control group are re-weighted based on the propensity scores, which can be estimated using any econometric tool of our choice. However, the IPW method relies on Assumption \eqref{assumption:pooled} in order to estimate the propensity scores. As a result, this is not feasible when Assumption \eqref{assumption:pooled} is violated. The DR-DID, which is a combination of both the IPW and the OR method \citep{sant2020doubly}, is also infeasible when Assumption \eqref{assumption:pooled} is violated.

Now, let us explore the RA with covariates in more details. In RA with the canonical two-group and two-period setup with repeated cross-sectional data, the dataset is sub-divided into four parts based on treatment status and treatment timing \citep{heckman1997matching}. For each of the sub-groups, we regress the outcome variable on the covariates, and store the fitted values from these regressions for the treated group. In other words, we obtain four fitted values $\widehat{Y_{i,t}}$ from the following regressions:

\begin{equation}
\label{equation: RAfit}
    \mbox{for each} \; j = \{Treat,Control\} \; \mbox{\&} \; p=\{pre,post\}: \\
        Y^{j,p}_{i,t} = \varrho^{j,p}_0 + \varrho^{j,p}_1 X_{i,t} + \varepsilon^{j,p}_{i,t}
\end{equation}

The fitted values can be estimated using any econometric model of our choice. However, to keep things simple, we use the linear specification shown in Equation \eqref{equation: RAfit}. Since the above regressions run separate regressions using each sub-division, RA can provide unbiased estimates of the ATT under violations of the state-invariant CCC assumption formally stated in Section \ref{sec3:ccc}. The next step involves estimating the ATT using the following formula:

\begin{equation}
\label{equation: RARC}
  \widehat{ATT} = \frac{1}{N^{Treat}} \sum_i \biggr( \widehat{Y^{{Treat},post}_{i,t}} - \widehat{Y^{{Treat},pre}_{i,t}} \biggr) - \biggr( \widehat{Y^{{Control},post}_{i,t}} - \widehat{Y^{{Control},pre}_{i,t}} \biggr) \; \mbox{if} \; j = Treat.
\end{equation}

In the above equation, $\widehat{Y^{j,p}_{i,t}}$ are the fitted values for group $j$ in period $p$. When Assumption \eqref{assumption:pooled} is violated, it may seem feasible to transfer $\varrho^{j,p}_0$ and $\varrho^{j,p}_1$ from the control silo to the treated silo, and calculate the fitted values $\widehat{Y^{{Control},post}_{i,t}}$ and  $\widehat{Y^{{Control},pre}_{i,t}}$ for each treated unit. However, this is not allowed, as under Assumption (\ref{assumption: noentry}) information cannot be brought into silos.

When researchers have access to a balanced panel, RA no longer involves estimating the four fitted values shown in Equation \eqref{equation: RARC}. Instead, we use the individual level data for the first difference in outcomes for the treated group. We only estimate $\widehat{Y^{{Control},post}_{i,t}}$ and $\widehat{Y^{{Control},pre}_{i,t}}$ for each treated unit using the coefficients from the control silo, and then estimate the ATT using the following formula shown in Equation \eqref{equation: RA}. However, this approach is also not feasible with unpoolable data, as information cannot be transferred into the treated silo from the control silo (Assumption \eqref{assumption: noentry}).

\begin{equation}
\label{equation: RA}
  \widehat{ATT} = \frac{1}{N^{Treat}} \sum_i \biggr( Y^{{Treat},post}_{i,t} - Y^{{Treat},pre}_{i,t} \biggr) - \biggr( \widehat{Y^{{Control},post}_{i,t}} - \widehat{Y^{{Control},pre}_{i,t}} \biggr) \; \mbox{if} \; j = Treat.
\end{equation}

\section{The UN-DID Estimator with covariates}\label{sec:UNDID}
\label{section: Proofs}

In this section, we highlight two key advantages of UN-DID over the conventional estimator. The first advantage of UN-DID is that it provides researchers with a simple yet innovative econometric tool to circumvent the problem of ``unpoolable" data and conduct difference-in-differences analysis.
The second advantage is that it allows for a greater variety of covariates, specially time-varying covariates. We then present theoretical results of the UN-DID estimator relative to the conventional estimator.  We begin with the simplest case with no covariates and with poolable data. Here, running separate first difference regressions for the treated and control groups and taking the difference of the two first differences can easily recover an estimate of the ATT which is numerically equivalent the estimate from a conventional DID regression.

\subsection{UN-DID with two groups/two periods}

We begin by analyzing the simplest case with no covariates. Theorem \ref{lemma: equalATT} proves the equivalence of the conventional and UN-DID estimators without covariates. A proof of Theorem \ref{lemma: equalATT} is shown in the Appendix \ref{ssection: Lemma1}.  

\begin{theorem}[Estimated ATT between conventional and UN-DID methods with no covariates are equal]
\label{lemma: equalATT}
    \begin{equation}
        \label{equation: lemma1}
            \widehat{\beta_3} =      (\widehat{\lambda}_2^{Treat} - \widehat{\lambda}_1^{Treat}) - (\widehat{\lambda}_2^{Control} - \widehat{\lambda_1}^{Control}).
    \end{equation}
\end{theorem}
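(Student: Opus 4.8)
The plan is to observe that, once covariates are removed, both sides of Equation \eqref{equation: lemma1} are contrasts of the same four group-by-period sample means, namely $\bar{Y}^{Treat,pre}$, $\bar{Y}^{Treat,post}$, $\bar{Y}^{Control,pre}$, and $\bar{Y}^{Control,post}$. The argument is purely algebraic in the realized sample; it requires no distributional assumptions and does not depend on whether the data form a balanced panel or repeated cross-sections.

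First I would treat the conventional regression. Dropping $\beta_4 X_{i,s,t}$ from Equation \eqref{equation: didsimple}, the regressors $1$, $D_s$, $P_t$, and $P_t D_s$ are linearly independent and their span coincides with the span of the four cell indicators $\mathbf{1}\{D_s=d,\,P_t=p\}$. Hence the model is saturated in the treatment$\times$timing cells, so the OLS fitted values equal the cell means, and the normal equations reduce to $\widehat{\beta_0}=\bar{Y}^{Control,pre}$, $\widehat{\beta_0}+\widehat{\beta_2}=\bar{Y}^{Control,post}$, $\widehat{\beta_0}+\widehat{\beta_1}=\bar{Y}^{Treat,pre}$, and $\widehat{\beta_0}+\widehat{\beta_1}+\widehat{\beta_2}+\widehat{\beta_3}=\bar{Y}^{Treat,post}$. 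Solving for $\widehat{\beta_3}$ gives $\widehat{\beta_3}=(\bar{Y}^{Treat,post}-\bar{Y}^{Treat,pre})-(\bar{Y}^{Control,post}-\bar{Y}^{Control,pre})$.

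Next I would treat the two UN-DID regressions. Dropping $\lambda_3^{Treat}X_{i,t}^{Treat}$ from Equation \eqref{equation: undid1}, the only regressors in the treated silo are $pre_t^{Treat}$ and $post_t^{Treat}$, with no intercept; since these two dummies are mutually exclusive and exhaustive within that silo, the design matrix is block diagonal and OLS returns the within-cell means, $\widehat{\lambda}_1^{Treat}=\bar{Y}^{Treat,pre}$ and $\widehat{\lambda}_2^{Treat}=\bar{Y}^{Treat,post}$. The identical argument applied to Equation \eqref{equation: undid2} gives $\widehat{\lambda}_1^{Control}=\bar{Y}^{Control,pre}$ and $\widehat{\lambda}_2^{Control}=\bar{Y}^{Control,post}$. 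Substituting these four equalities into the right-hand side of Equation \eqref{equation: lemma1} and comparing with the expression for $\widehat{\beta_3}$ obtained above yields the claim.

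The only step needing care --- and the one I would isolate as a short auxiliary lemma invoked twice --- is the statement that OLS of an outcome on a collection of mutually exclusive and exhaustive group dummies (with or without an intercept, so long as the dummy span is unchanged) returns exactly the group sample means. This is standard, but stating it explicitly keeps the no-constant parameterization of Equations \eqref{equation: undid1}--\eqref{equation: undid2} transparent and makes clear why no multicollinearity arises there. Beyond that bookkeeping there is no genuine obstacle: the content of the theorem is simply that the conventional interaction coefficient and the difference of silo-specific first differences are two ways of writing the same linear combination of four cell means.
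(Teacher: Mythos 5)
Your proof is correct, and it reaches the paper's conclusion by a genuinely different and more elementary route. The paper proves the same identity via the Frisch--Waugh--Lovell theorem: it partials $P_t D_s$ out on $1, P_t, D_s$, computes the residual for each of the four treatment$\times$timing cells, writes $\widehat{\beta}_3$ as an explicit OLS ratio, and then needs a separate appendix argument (interpreting $\widehat{\alpha}_0,\widehat{\alpha}_1,\widehat{\alpha}_2$ as cell probabilities) to show that the resulting weights are reciprocals of the cell counts $N_{d,p}$; an analogous FWL computation, after reparameterizing the no-constant two-dummy model as an intercept plus a $post$ dummy, handles the silo regressions. You instead observe that both models are saturated in the relevant cells --- the conventional regressors span the four cell indicators, and each silo regression is an OLS on two mutually exclusive and exhaustive dummies --- so all coefficients are read off directly from the normal equations as cell means. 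Your approach buys brevity and transparency (no residual bookkeeping, no auxiliary lemma about reciprocal sample sizes), and it makes the ``same four cell means'' structure of the theorem immediate; the paper's FWL machinery is heavier here but is reused almost verbatim in its covariate proofs (Lemmas \eqref{lemma: conventionaltimevaryingatt} and \eqref{lemma: undidtimevaryingatt}), where the saturation argument no longer applies, so the authors get economies of scale from setting it up in the no-covariate case. The only implicit hypothesis you should state is that all four treatment$\times$timing cells are nonempty, which is what guarantees the linear independence you invoke; the paper's proof needs the same condition.
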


The above proof relies on a stronger version of Assumption \eqref{assumption: spt}, called the \textbf{strong parallel trends} assumption, introduced in Equation \eqref{equation: pt}. The proof of Theorem \eqref{lemma: equalATT} also requires the data to be poolable between groups. It is important to note that, the UN-DID can be run on both poolable and unpoolable datasets, allowing for a comparison of results between the conventional and UN-DID estimators. However, the conventional regression is no longer feasible when the dataset is unpoolable.

\begin{assumption}[Strong Parallel Trends]
\label{assumption: pt}
    The evolution of untreated potential outcomes are the same between treated and control groups.
    \begin{align}
        \label{equation: pt}
        \begin{split}
           & \biggr[E[Y_{i,s,t}(0)|D_s = 1,P_t=1] - E[Y_{i,t}(0)|D_s = 1,P_t=0]\biggr] \\
              = & \ \biggr[E[Y_{i,s,t}(0)|D_s = 0,P_t=1] - E[Y_{i,s,t}(0)|D_s = 0, P_t= 0] \biggr].
        \end{split}
    \end{align}
\end{assumption}

In the more realistic case, researchers typically want to incorporate covariates in conventional DID to ensure the plausibility of parallel trends given covariates, particularly when strong parallel trends do not hold \citep{heckman1997matching, abadie2005semiparametric}. Likewise, we are including only covariates that are required for conditional parallel trends to be plausible. For valid estimation of the ATT with covariates, we require that the covariates are correlated with the outcome but not with the treatment on the basis of the conditional independence assumption.

\begin{assumption}[Conditional Independence Assumption] 
\label{assumption: CIA}
The treatment assignment is independent of potential outcomes after conditioning on a set of observed covariates \citep{masten2018identification}. 
    \begin{equation}
        Y_i(t),Y_i(0) \bigCI D_i | X_{i,t}. 
    \end{equation}
\end{assumption}

The second main advantage of the UN-DID estimator is that it allows researchers to include a greater variety of covariates compared to the conventional DID estimator. The existing DID literature emphasizes on the importance of the thoughtful selection of covariates in DiD analyses. In particular, it suggests using covariates that are time-invariant \citep{abadie2005semiparametric, callaway2021difference, caetano2022timevarying, roth2022s, sant2020doubly}. When dealing with time-varying covariates, researchers frequently select a specific value of these covariates during a pre-treatment period, as if it were a time-invariant covariate \citep{caetano2022timevarying}. The UN-DID enables researchers to use  time-varying covariates with differing effects on outcomes, which is discussed in more details in the following subsection. In particular, we will show that the UN-DID can provide unbiased estimates of the ATT when time varying covariate with differing effects on outcome are used, while the conventional estimator is biased.

\subsection{Common Causal Covariates} \label{sec3:ccc}

Time-varying covariates are known to complicate DID analysis \citep{caetano2022timevarying}. Additionally, \cite{karim2024good} demonstrates that the conventional TWFE estimator can be biased when the common causal covariates (CCC) assumption is violated. The CCC assumption has been implied in previous DiD literature, but has not been explicitly addressed. Their paper introduces three types of CCC assumptions, but this study focuses on the state-invariant CCC assumption, as stated below.

\begin{assumption}[State-invariant Common Causal Covariate]
\label{assumption: stateinvariantccc} The causal effect of a covariate on the outcome variable is equal between silos.  Consider Data Generating Processes (DGPs) for the outcome variables in Equations \eqref{equation: undid1} and \eqref{equation: undid2}.

\begin{align}
\text{For treated:} \quad 
    Y^{Treat}_{i,t} &\equiv \alpha_{\circ}^{Treat} pre_t^{Treat} + \beta_{\circ}^{Treat} post_t^{Treat} + \lambda_{\circ}^{Treat} X^{Treat}_{i,t} + \mu^{Treat}_{i,t} \\
\text{For untreated:} \quad 
    Y^{Control}_{i,t} &\equiv \alpha_{\circ}^{Control} pre_t^{Control} + \beta_{\circ}^{Control} post_t^{Control} + \lambda_{\circ}^{Control} X^{Control}_{i,t} + \mu^{Control}_{i,t}
\end{align}

\noindent Where the subscript $\circ$ denotes the true value of a parameter in the DGP.  The true effect of the covariates on the outcome are assumed to be equal in the treatment and control groups: $\lambda^{Treat}_{\circ} = \lambda^{Control}_{\circ}$.

\end{assumption}

One important distinction between the conventional DID regression using poolable data in Equation \eqref{equation: didsimple} and the UN-DID regressions using siloed data in Equations \eqref{equation: undid1} and 
\eqref{equation: undid2} is how the covariates are handled. In the conventional DID regression there is a single coefficient, $\beta_4$, whereas in the UN-DID regressions there are two $\lambda_3^{Treat}$ and $\lambda_3^{Control}$. Whether or not this matters depends on the relationship between the covariate(s) and the outcome variable and if that varies across silos. When Assumption \eqref{assumption: stateinvariantccc} is violated, the conventional DID estimator and other estimators such as \cite{callaway2021difference} can be biased \citep{karim2024good}.  Conversely, the UN-DID estimator, is unbiased in these settings.  So is the closely related estimator developed for pooled data in \cite{karim2024good}.  This is called the state-varying Intersection Difference-in-Differences (DID-INT). 

\subsection{Conventional and UN-DID ATT estimates with  covariates}

In this sub-section, we show that the estimate of the ATT from the conventional and the UN-DID are not numerically equivalent with time varying covariates. For now, we do not impose Assumption \eqref{assumption: stateinvariantccc} on the covariates. To prove equivalence, it is sufficient to show the equivalence of the conditional ATTs shown in equation \eqref{equation: att} between the two methods with a common poolable dataset.

\begin{theorem}[The estimated ATT from the conventional and the UN-DID are not equivalent when the covariates are time-varying, and when Assumption \eqref{assumption: stateinvariantccc} is violated.]
\label{lemma: equivtimevaryingcov}
    \begin{equation}
        \mbox{With time-varying covariates:} \;
            \widehat{\beta_3} \not=     (\widehat{\lambda}_2^{Treat} - \widehat{\lambda}_1^{Treat}) - (\widehat{\lambda}_2^{Control} - \widehat{\lambda}_1^{Control}).
    \end{equation}
\end{theorem}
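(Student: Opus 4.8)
The plan is to put each estimator into an explicit closed form in terms of cell means of $Y$ and $X$, and then simply subtract. Index the four treatment-by-timing cells by $(D,P)\in\{(0,0),(0,1),(1,0),(1,1)\}$, write $\bar Y_{d,p}$, $\bar X_{d,p}$ for the corresponding sample means, and set $\mathrm{DiD}(\bar Y)=\bar Y_{1,1}-\bar Y_{1,0}-\bar Y_{0,1}+\bar Y_{0,0}$, similarly $\mathrm{DiD}(\bar X)$. Because $\{1,D,P,PD\}$ spans the four cell-indicator dummies, \eqref{equation: didsimple} is a cell-saturated regression with a single common covariate slope; by Frisch--Waugh--Lovell $\widehat{\beta_4}$ equals the pooled within-cell OLS slope of $Y$ on $X$, each fitted cell intercept is $\bar Y_{d,p}-\widehat{\beta_4}\bar X_{d,p}$, and collecting the interaction term gives $\widehat{\beta_3}=\mathrm{DiD}(\bar Y)-\widehat{\beta_4}\,\mathrm{DiD}(\bar X)$. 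Running the same argument separately inside each silo for \eqref{equation: undid1}--\eqref{equation: undid2}, where $\{pre_t,post_t\}$ spans that silo's two cell dummies, yields $\widehat{\lambda}_2^{Treat}-\widehat{\lambda}_1^{Treat}=(\bar Y_{1,1}-\bar Y_{1,0})-\widehat{\lambda}_3^{Treat}\Delta\bar X^{Treat}$ with $\Delta\bar X^{Treat}=\bar X_{1,1}-\bar X_{1,0}$, and the analogue for the control silo, hence $\widehat{ATT^{UNDID}}=\mathrm{DiD}(\bar Y)-\big[\widehat{\lambda}_3^{Treat}\Delta\bar X^{Treat}-\widehat{\lambda}_3^{Control}\Delta\bar X^{Control}\big]$. (Equivalently, as the text notes: both regressions fit conditional means linear in $x$, so each implied $ATT^{Conditional}(x)$ is constant in $x$, equal to $\widehat{\beta_3}$ resp.\ $\widehat{ATT^{UNDID}}$, and comparing conditional ATTs reduces exactly to comparing these two numbers.)

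Subtracting the two closed forms cancels $\mathrm{DiD}(\bar Y)$ and leaves
\[
\widehat{\beta_3}-\widehat{ATT^{UNDID}}
=\widehat{\lambda}_3^{Treat}\Delta\bar X^{Treat}-\widehat{\lambda}_3^{Control}\Delta\bar X^{Control}
-\widehat{\beta_4}\big(\Delta\bar X^{Treat}-\Delta\bar X^{Control}\big).
\]
The next step uses that the pooled within-cell slope is a convex combination of the two within-silo slopes: with $A$ and $B$ the within-silo total sums of squares of $X$ (summed over each silo's two cells), $\widehat{\beta_4}=w\widehat{\lambda}_3^{Treat}+(1-w)\widehat{\lambda}_3^{Control}$ with $w=A/(A+B)\in(0,1)$ whenever $A,B>0$. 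Substituting collapses the gap to
\[
\widehat{\beta_3}-\widehat{ATT^{UNDID}}
=\big(\widehat{\lambda}_3^{Treat}-\widehat{\lambda}_3^{Control}\big)\big[(1-w)\Delta\bar X^{Treat}+w\Delta\bar X^{Control}\big].
\]

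From this identity the statement follows. The bracketed factor vanishes whenever $X$ is time-invariant ($\Delta\bar X^{Treat}=\Delta\bar X^{Control}=0$), which reproduces the equivalence of Theorem \ref{lemma: equalATT} (no covariate sets both terms to zero) and of the time-invariant-covariate case; but with a genuinely time-varying covariate the within-silo shifts $\Delta\bar X^{j}$ are generically nonzero, and when Assumption \eqref{assumption: stateinvariantccc} is violated the within-silo slopes have distinct probability limits $\lambda_\circ^{Treat}\neq\lambda_\circ^{Control}$ (OLS in each correctly specified silo DGP is consistent for that silo's own coefficient), so the product is nonzero outside a measure-zero locus on which $(1-w)\Delta\bar X^{Treat}+w\Delta\bar X^{Control}$ happens to cancel. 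To upgrade ``generically nonzero'' to an unconditional ``$\neq$'', I would exhibit one minimal configuration — two silos, two periods, covariate means shifting by different amounts across periods in the two silos, and outcome-on-covariate slopes chosen unequal across silos — and evaluate the two closed forms to read off $\widehat{\beta_3}\neq\widehat{ATT^{UNDID}}$.

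The main obstacle is not conceptual but careful bookkeeping: I must check that the cell-saturation/Frisch--Waugh--Lovell reduction is legitimate in the constant-free parametrization of \eqref{equation: undid1}--\eqref{equation: undid2} and for pooled (repeated cross-section) data with no individual effects, verify $A,B>0$ so the convex-combination identity for $\widehat{\beta_4}$ holds, and handle the knife-edge locus where the bracket cancels — either by arguing it is non-generic or, more cleanly, by routing the final ``$\neq$'' through the explicit counterexample. The remaining steps are standard OLS-on-dummies algebra.
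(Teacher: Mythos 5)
Your proposal is correct, and it takes a genuinely different route from the paper. The paper proves this theorem at the level of population estimands: it first establishes (Lemmas \ref{lemma: conventionaltimevaryingatt} and \ref{lemma: undidtimevaryingatt}, proved in Appendices \ref{sec:proof_conventionaltimevaryingatt} and \ref{section:proof_undidtimevaryingatt}) that $\widehat{\beta_3}$ and the UN-DID difference converge in probability to $ATT^{Conditional}$ and $ATT^{UNDID}$ respectively, and then argues that these estimands differ because the pooled covariate coefficient $\varphi_3$ is a weighted average of the silo-specific coefficients $\vartheta_3^{Treat}$ and $\vartheta_3^{Control}$, which are unequal when Assumption \ref{assumption: stateinvariantccc} fails. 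Your argument is instead an exact finite-sample decomposition: the identity $\widehat{\beta_3}-\widehat{ATT^{UNDID}}=\bigl(\widehat{\lambda}_3^{Treat}-\widehat{\lambda}_3^{Control}\bigr)\bigl[(1-w)\Delta\bar X^{Treat}+w\Delta\bar X^{Control}\bigr]$ is algebraically correct (I verified the cell-mean normal equations, the FWL reduction, and the convex-combination identity for $\widehat{\beta_4}$), and it buys something the paper's proof does not: it isolates \emph{exactly} when the two estimators disagree, namely when the within-silo slopes differ \emph{and} the weighted within-silo covariate shifts do not cancel, and it simultaneously recovers Theorem \ref{lemma: equalATT}, Theorem \ref{lemma: CequalUPCCC}, and Theorem \ref{lemma: CequalUPtimeinvariant} as the cases where one factor vanishes. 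The one caveat you correctly flag applies equally to the paper's own proof: an unconditional ``$\neq$'' cannot hold on every sample (the bracketed factor can vanish on a knife edge), so both arguments really establish generic inequality; your suggestion to close this with an explicit numerical configuration is the cleanest fix, and your derivation extends to vector covariates only after replacing the scalar convex-combination step with its matrix analogue.
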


To prove Theorem \eqref{lemma: equivtimevaryingcov}, we first establish and prove Lemma \eqref{lemma: conventionaltimevaryingatt} and Lemma \eqref{lemma: undidtimevaryingatt}. The key difference between Lemma \eqref{lemma: conventionaltimevaryingatt} and Lemma \eqref{lemma: undidtimevaryingatt} lies in how the four expectations are conditioned. In Lemma \eqref{lemma: conventionaltimevaryingatt}, the four expectations are conditional on the stacked matrix X, which contains the covariates for both the treated and the control silo. This follows from the structure of the conventional regression, where there is a single coefficient of X, $\beta_4$. If Assumption \eqref{assumption: stateinvariantccc} does not hold, $\beta_4$ is a weighted average of the true coefficients of Xs across all silos. However, in Lemma \eqref{lemma: undidtimevaryingatt}, the expectations of the treated group are conditioned on $X^{Treat}$, which is a matrix of covariates for the treated silo. Similarly, the expectations of the control group are conditioned on $X^{Control}$, which is a matrix of covariates for the control silo. This follows from the structure of the UN-DID regressions, where separate first difference regressions are run for each silo. 

\begin{lemma}[The estimated ATT from the conventional regression converges to a ``difference-in-differences'' of four expectations, conditional on the stacked matrix of convariates X]
\label{lemma: conventionaltimevaryingatt}
    \begin{align}
    \label{expression: ATTC}
    \begin{split}
     \widehat{\beta_3} \xrightarrow{p} ATT^{Conditional}.
    \end{split}
    \end{align}
    \end{lemma}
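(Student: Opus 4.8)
The plan is to identify the probability limit of the OLS interaction coefficient $\widehat{\beta_3}$ from the conventional regression \eqref{equation: didsimple} and then show that this limit equals the right-hand side of \eqref{equation: att}, with the understanding that the four conditional means appearing there are the \emph{pooled linear projections} of $Y$ onto $(1,D_s,P_t,D_sP_t,X_{i,s,t})$ --- i.e.\ each is modelled as linear in the stacked covariate $X$ with a single common slope, which is exactly what ``conditional on the stacked matrix $X$'' encodes. The first step is to invoke consistency of OLS: under i.i.d.\ (or stationary ergodic) sampling from the pooled population, $E\lVert Z\rVert^{2}<\infty$ for $Z=(1,D_s,P_t,D_sP_t,X_{i,s,t})'$, and $E[ZZ']$ nonsingular, the laws of large numbers give $\widehat{\beta}\xrightarrow{p}\beta^{*}:=(E[ZZ'])^{-1}E[ZY]$, so in particular $\widehat{\beta_3}\xrightarrow{p}\beta^{*}_3$. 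Nonsingularity is the only substantive regularity condition, and it holds because \eqref{equation: didsimple} has no collinearity problem when all four treatment$\times$timing cells are populated and $X$ is not almost surely constant within them.

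The second step exploits the fact that $D_s$ and $P_t$ are binary and \eqref{equation: didsimple} includes $1,D_s,P_t$ and $D_sP_t$, so the $(D,P)$ block is \emph{saturated}: $\beta_0+\beta_1 D+\beta_2 P+\beta_3 DP$ is a bijective reparametrization of $\sum_{(d,p)\in\{0,1\}^{2}}a_{dp}\,\mathbf{1}\{D{=}d,P{=}p\}$ with $\beta_3=a_{11}-a_{10}-a_{01}+a_{00}$. Writing the population projection as $\mu^{*}(d,p,x)=a^{*}_{dp}+\beta^{*}_4 x$ and reading off the normal equation attached to each cell indicator yields $a^{*}_{dp}=E[Y\mid D{=}d,P{=}p]-\beta^{*}_4\,E[X\mid D{=}d,P{=}p]$; the normal equation attached to $X$ then pins down $\beta^{*}_4 = E[\mathrm{Cov}(Y,X\mid D,P)]\,/\,E[\mathrm{Var}(X\mid D,P)]$, which under the data generating process of Assumption \eqref{assumption: stateinvariantccc} is a variance-weighted average of $\lambda^{Treat}_{\circ}$ and $\lambda^{Control}_{\circ}$ --- the ``$\beta_4$ is a weighted average of the true coefficients across silos'' fact noted above.

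The third step substitutes $\mu^{*}$ for the conditional mean on the right-hand side of \eqref{equation: att}. Since $\beta^{*}_4$ is common across the four cells, each inner ``post minus pre'' difference collapses to a difference of intercepts, $\mu^{*}(1,1,x)-\mu^{*}(1,0,x)=a^{*}_{11}-a^{*}_{10}$ and $\mu^{*}(0,1,x)-\mu^{*}(0,0,x)=a^{*}_{01}-a^{*}_{00}$, both of which are free of $x$. The outer expectation over $X\mid D_s{=}1$ in \eqref{equation: att} is therefore vacuous, and the right-hand side reduces to $(a^{*}_{11}-a^{*}_{10})-(a^{*}_{01}-a^{*}_{00})=\beta^{*}_3$. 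Combining this with the first step gives $\widehat{\beta_3}\xrightarrow{p}\beta^{*}_3=ATT^{Conditional}$, which is the claim.

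The main obstacle is conceptual rather than computational: making precise that ``$ATT^{Conditional}$'' here denotes the double difference of the \emph{pooled, common-slope} projections rather than of the true conditional expectations, and verifying that the common-slope restriction forces $\beta^{*}_4$ out of the double difference, so that the quantity is well defined and the reweighting over the treated covariate distribution is innocuous. This is also the step that sets up the contrast with Lemma \eqref{lemma: undidtimevaryingatt}, where each silo's first difference is de-trended with its \emph{own} slope ($\lambda^{Treat}_3$ or $\lambda^{Control}_3$): with time-varying covariates and Assumption \eqref{assumption: stateinvariantccc} violated, the within-silo covariate means differ across periods, so $\beta^{*}_3$ picks up terms such as $(\lambda^{Treat}_{\circ}-\beta^{*}_4)\bigl(E[X\mid D{=}1,P{=}1]-E[X\mid D{=}1,P{=}0]\bigr)$ that the UN-DID limit does not --- which is precisely what Theorem \eqref{lemma: equivtimevaryingcov} goes on to exploit. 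The remaining ingredients, namely the LLN/nonsingularity conditions and the saturation reparametrization, are routine, and I would state them without working through the algebra.
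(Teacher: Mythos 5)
Your proposal is correct, but it reaches the conclusion by a genuinely different route from the paper. The paper's proof (Appendix \ref{sec:proof_conventionaltimevaryingatt}) is a finite-sample algebraic argument: it applies the FWL theorem to \eqref{equation: didsimple}, computes the residual of $P_t D_s$ cell by cell, restricts $X$ to be binary ``for simplicity,'' decomposes $\widehat{\beta}_3$ exactly into a weighted average $\omega_{x=1}\widehat{\beta}_3^{(x=1)}+\omega_{x=0}\widehat{\beta}_3^{(x=0)}$ of covariate-cell double differences, and only then invokes the WLLN on the four adjusted cell means. You instead characterize the probability limit directly as the population projection coefficient $\beta_3^{*}$ via the normal equations, use the saturation of the $(D,P)$ block to read off $a^{*}_{dp}=E[Y\mid D{=}d,P{=}p]-\beta_4^{*}E[X\mid D{=}d,P{=}p]$ and $\beta_4^{*}=E[\mathrm{Cov}(Y,X\mid D,P)]/E[\mathrm{Var}(X\mid D,P)]$, and observe that the common slope drops out of the double difference of intercepts. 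Your route buys three things the paper's does not: it handles general (non-binary, multivariate) $X$ without the hand-wave at the end of the paper's proof; it delivers a closed form for $\beta_4^{*}$ that substantiates the ``weighted average of the silo slopes'' claim the paper merely asserts before Lemma \eqref{lemma: undidtimevaryingatt}; and it isolates exactly the contamination term $(\lambda^{Treat}_{\circ}-\beta_4^{*})\bigl(E[X\mid 1,1]-E[X\mid 1,0]\bigr)-(\lambda^{Control}_{\circ}-\beta_4^{*})\bigl(E[X\mid 0,1]-E[X\mid 0,0]\bigr)$ that drives Theorem \eqref{lemma: equivtimevaryingcov}. What the paper's route buys is an exact pre-asymptotic decomposition of $\widehat{\beta}_3$ into conditional-on-$x$ DID estimates, which ties the estimator to the nonparametric objects in \eqref{equation: attcond}. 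One caveat on interpretation rather than correctness: you are right that the lemma only holds if $ATT^{Conditional}$ is read as the double difference of the \emph{common-slope pooled projections} rather than of the true conditional expectations --- under a violation of Assumption \eqref{assumption: stateinvariantccc} with time-varying $X$ the two differ by exactly your contamination term. The paper adopts this same reading implicitly (Equation \eqref{equation: condpooled} in the proof of Theorem \eqref{lemma: equivtimevaryingcov}), but its appendix proof of this lemma elides the distinction; your version makes it explicit, which is the more defensible presentation.
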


A proof of  Lemma \eqref{lemma: conventionaltimevaryingatt} is found in Appendix \ref{sec:proof_conventionaltimevaryingatt}.

\begin{lemma}[The estimated ATT from the UN-DID regressions converges to a ``difference-in-differences'' of four expectations, conditional on the matrix of covariates for the treated group and control group separately.]
\label{lemma: undidtimevaryingatt}
\begin{align}
\label{expression: ATTUP}
\begin{split}
    (\widehat{\lambda}_2^{Treat} - \widehat{\lambda}_1^{Treat}) - (\widehat{\lambda}_2^{Control} - \widehat{\lambda}_1^{Control}) \xrightarrow{p} ATT^{UNDID}.
\end{split}
\end{align}
\end{lemma}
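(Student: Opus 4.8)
The plan is to reduce the UN-DID estimand to a pair of within-silo OLS coefficients, take probability limits silo by silo, and then subtract. First I would exploit that in each regression $pre_t^{j}$ and $post_t^{j}$ partition the sample, so that $\lambda_1^{j} pre_t^{j} + \lambda_2^{j} post_t^{j} = \lambda_1^{j} + (\lambda_2^{j} - \lambda_1^{j}) post_t^{j}$ for $j \in \{Treat, Control\}$. Hence equation \eqref{equation: undid1} is algebraically the OLS regression of $Y^{Treat}$ on a constant, $post_t^{Treat}$, and $X^{Treat}_{i,t}$, with $\widehat{\lambda}_2^{Treat} - \widehat{\lambda}_1^{Treat}$ equal to the fitted coefficient on $post_t^{Treat}$; the analogous statement holds for \eqref{equation: undid2}. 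This reparametrization makes the target $(\widehat{\lambda}_2^{Treat} - \widehat{\lambda}_1^{Treat}) - (\widehat{\lambda}_2^{Control} - \widehat{\lambda}_1^{Control})$ a difference of two covariate-adjusted post-minus-pre contrasts, each computed using only its own silo's covariates.

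Next I would invoke standard OLS consistency within each silo: assuming i.i.d.\ sampling inside a silo, finite second moments, and that $post_t^{j}$ is not perfectly collinear with $(1, X^{j}_{i,t})$ (ruling out degenerate cases such as the covariate being a deterministic function of the period), each estimated coefficient converges in probability to its population linear-projection counterpart. Applying the Frisch--Waugh--Lovell theorem in the population then gives $\text{plim}\,(\widehat{\lambda}_2^{Treat} - \widehat{\lambda}_1^{Treat}) = \mathrm{Cov}(Y^{Treat}, r^{Treat}) / \mathrm{Var}(r^{Treat})$, where $r^{Treat}$ is the residual from projecting $post_t^{Treat}$ on a constant and $X^{Treat}_{i,t}$ within the treated population, and likewise for the control silo. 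I would then identify this ratio with the treated silo's first difference of covariate-adjusted conditional means: under the additively separable DGP posited in Assumption \eqref{assumption: stateinvariantccc} read silo by silo (and without yet imposing $\lambda_\circ^{Treat} = \lambda_\circ^{Control}$), the conditional expectation is linear in $(1, post, X)$, so the coefficient on $post$ equals $E[Y \mid D=1, P=1, X^{Treat}] - E[Y \mid D=1, P=0, X^{Treat}]$ and does not depend on $X$; the same computation in the control silo yields $E[Y \mid D=0, P=1, X^{Control}] - E[Y \mid D=0, P=0, X^{Control}]$.

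Finally, because the treated and control silo samples are independent and convergence in probability is preserved under subtraction, the UN-DID estimator converges to the difference of the two silo limits, which is precisely the four-expectation object $ATT^{UNDID}$ in which the treated pair is conditioned on $X^{Treat}$ and the control pair on $X^{Control}$. The argument deliberately mirrors the proof of Lemma \eqref{lemma: conventionaltimevaryingatt} in Appendix \ref{sec:proof_conventionaltimevaryingatt}; the only change is that the partialling-out of covariates is done separately within each silo rather than from the stacked matrix $X$, and that change is exactly what produces the non-equivalence asserted in Theorem \eqref{lemma: equivtimevaryingcov}.

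I expect the main obstacle to be the identification in the second step: cleanly showing that the partialled-out OLS coefficient equals the first-difference-of-conditional-expectations object when covariates are time-varying, so that the within-silo distribution of $X$ differs across the pre- and post-periods. If one leans on the linear conditional expectation implied by the maintained DGP this is immediate, and $\widehat{\lambda}_2^{j} - \widehat{\lambda}_1^{j} \xrightarrow{p} \beta_\circ^{j} - \alpha_\circ^{j}$ directly; if instead one wants a functional-form-free statement, then $ATT^{UNDID}$ must be defined as the specific projection-weighted average that OLS returns, and one must keep that weighting consistent with the (analogous) weighting used for $ATT^{Conditional}$ in the appendix. A secondary point is the rank condition for each silo's design matrix — with a binary $post_t$ and possibly coarsely binned covariates this reduces to requiring that the covariates are not collinear with the period indicator, which I would record as a mild regularity condition.
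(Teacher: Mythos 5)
Your proposal is correct and shares the same skeleton as the paper's proof in Appendix \ref{section:proof_undidtimevaryingatt}: both reparametrize each silo regression as a constant-plus-$post_t^{j}$-plus-$X^{j}_{i,t}$ regression (the equivalence proved in Appendix \ref{section: transformedequation}), both partial out the covariates silo by silo via Frisch--Waugh--Lovell, and both finish by subtracting the two silo limits. Where you genuinely diverge is in how the probability limit of the partialled-out coefficient is computed. The paper stays at the level of finite-sample algebra: it writes out the FWL residuals from Equation \eqref{equation: undidcovfwlresiduals} explicitly, takes $X$ to be discrete, decomposes $\widehat{\gamma}_1^{j}$ into a weighted average of cell estimators $\widehat{\gamma}_1^{j,(x)}$ as in Equation \eqref{equation: weightedavgundid}, shows each cell estimator is a difference of conditional sample means as in Equation \eqref{equation: undidcovgamma1simple}, and only then applies the WLLN to those means. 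You instead invoke OLS consistency to the population linear projection and identify the projection coefficient with $E[Y^{j}\mid P_t=1,X^{j}]-E[Y^{j}\mid P_t=0,X^{j}]$ via the linearity of the conditional expectation implied by the maintained DGP. Your route is shorter, but, as you correctly diagnose yourself, it buys that brevity by leaning on the linear CEF; the paper's cell-mean decomposition is functional-form-free for discrete covariates and makes the weights $\omega^{j}_{x}$ explicit, which is precisely what lets the limit be read as the weighted average of conditional ATTs that defines $ATT^{UNDID}$ in Equation \eqref{equation: attunpooledestimand}. If you want a version that matches the paper's estimand without assuming linearity, you must define $ATT^{UNDID}$ as the projection-weighted average, which is effectively what the paper's weights accomplish. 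Your explicit rank condition ruling out collinearity of $post_t^{j}$ with $(1, X^{j}_{i,t})$ is a genuine regularity condition that the paper leaves implicit, and is a worthwhile addition.
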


Here, the unconditional ATT, $ATT^{UNDID}$, is defined as:
    \begin{equation}
    \label{equation: attunpooledestimand}
    \begin{split}
        ATT^{UNDID} = & E\biggl[ \bigr[E[Y^{Treat}_{i,t}|P_t = 1,X^{Treat}_{i,t} = x] - E[Y^{Treat}_{i,t}|P_{t} = 0,X^{Treat}_{i,t} = x]\bigr] \\ - &  \bigr[E[Y^{Control}_{i,t}|P_t = 1,X^{Control}_{i,t} = x] - E[Y^{Control}_{i,t}|P_{t} = 0,X^{Control}_{i,t} = x]\bigr] \biggr|D_i = 1 \biggr].
    \end{split}
    \end{equation}

Here, the conditional ATTs, $ATT^{UNDID}(x)$, is defined as:
\begin{equation}
\label{equation: attunpooledestimandconditional}
    \begin{split}
        ATT^{UNDID}(x) = & \bigr[E[Y^{Treat}_{i,t}|P_t = 1,X^{Treat}_{i,t} = x] - E[Y^{Treat}_{i,t}|P_{t} = 0,X^{Treat}_{i,t} = x]\bigr] \\ - &  \bigr[E[Y^{Control}_{i,t}|P_t = 1,X^{Control}_{i,t} = x] - E[Y^{Control}_{i,t}|P_{t} = 0,X^{Control}_{i,t} = x]\bigr].       
    \end{split}
\end{equation}

The main difference between the two conditional estimands shown in Equations \eqref{equation: attcond}  and Equation \eqref{equation: attunpooledestimand} lies in how they incorporate covariates. The estimand for the conventional regression is derived using the full set of covariates across both the treated and control groups. In contrast, the UN-DID estimates two separate regressions for the treated and control groups. Therefore, the conditional expectations for each group are conditioned only on the covariates for its own group. A proof of this Lemma is found in Appendix \ref{section:proof_undidtimevaryingatt}.

\begin{proof}[Proof: Theorem \eqref{lemma: equivtimevaryingcov}]

To prove Theorem \eqref{lemma: equivtimevaryingcov}, it is sufficient to compare the conditional ATTs from the conventional and the UN-DID methods. If these conditional ATTs differ, the unconditional ATTs (which is a weighted average of the conditional ATTs) will also be different. We impose Assumption \eqref{assumption:pooled}, as it enables a valid comparison between the expressions for $ATT^{Conditional}(x)$ and $ATT^{UNDID}(x)$ shown in expressions \eqref{equation: attcond} and \eqref{equation: attunpooledestimandconditional}, respectively. In Equation \eqref{equation: attcond}, the four conditional expectations can be written as:
\begin{equation}
\label{equation: condpooled}
    E[Y_{i,s,t}|D_{s},P_t,X_{i,s,t}] = \varphi_0 + \varphi_1 D_i + \varphi_2 P_t + \varphi_3 X_{i,t}.
\end{equation}

Similarly, the four expectations in Equation \eqref{equation: attunpooledestimandconditional} can be written as:
\begin{equation}
\label{equation: condunpooled}
    E[Y^j_{i,t}|P_t,X^j_{i,t}] = \vartheta^j_0 + \vartheta^j_1 P_t + \vartheta^j_3 X_{i,t}.
\end{equation}

Let us look compare the expressions shown in Equations \eqref{equation: condpooled} and \eqref{equation: condunpooled} for the treated group in the post-intervention period ($D_i=1$ and $P_t = 1$) for the same pooled sample. Without covariates, the two equations would yield equivalent expressions because they rely on the same underlying sample. However, a difference between the two expressions arises when we include covariates. If Assumption \eqref{assumption: stateinvariantccc} does not hold, $\varphi_3 \neq \vartheta^j_3$. This distinction arises as Equation \eqref{equation: condpooled} relies on the full sample, whereas, Equation \eqref{equation: condunpooled} conditions only on the observations in the treated group. Therefore, $\varphi_3$ is a weighted average of $\vartheta^{Treat}_3$ and $\vartheta^{Control}_3$. The same holds for the other conditional expectations.
\begin{equation}
    {ATT}^{Conditional} \not\equiv {ATT}^{UNDID}.
\end{equation}

Therefore, the estimates of the above expressions will also differ.
\begin{equation}
    \widehat{\beta_3} \not= (\widehat{\lambda}_2^{Treat} - \widehat{\lambda}_1^{Treat}) - (\widehat{\lambda}_2^{Control} - \widehat{\lambda_1}^{Control}).
\end{equation}

\end{proof}

The above problem may be circumvented by interacting the covariates with an indicator for each silo in the conventional regression shown in Equation \eqref{equation: didsimple}. This approach allows researchers to capture the differing effect of the covariates between silos, and mitigate the bias which arises from CCC violations as highlighted in \cite{karim2024good}.  A recent alternative is the FLEX estimator proposed in \cite{deb2024flexible}, however, this only considers treatment cohort group level variation in coefficients rather than silo level variation. However, researchers have typically not done this in practice.

\begin{theorem}[The estimated ATT from the conventional DID and the UN-DID are equivalent when the covariates are time-varying, and when Assumption \eqref{assumption: stateinvariantccc} holds.]
\label{lemma: CequalUPCCC}
    \begin{equation}
        \mbox{When Assumption \eqref{assumption: stateinvariantccc} holds:} \; \widehat{\beta_3} \equiv      (\widehat{\lambda}_2^{Treat} - \widehat{\lambda}_1^{Treat}) - (\widehat{\lambda}_2^{Control} - \widehat{\lambda}_1^{Control}).
    \end{equation}  
\end{theorem}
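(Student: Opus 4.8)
The plan is to reduce the claim to an equality of the two population estimands and then use the state-invariant CCC assumption to establish it. Theorem \ref{lemma: equivtimevaryingcov} was proved by combining Lemma \ref{lemma: conventionaltimevaryingatt} ($\widehat{\beta_3} \xrightarrow{p} ATT^{Conditional}$) and Lemma \ref{lemma: undidtimevaryingatt} ($(\widehat{\lambda}_2^{Treat} - \widehat{\lambda}_1^{Treat}) - (\widehat{\lambda}_2^{Control} - \widehat{\lambda}_1^{Control}) \xrightarrow{p} ATT^{UNDID}$) with the fact that, absent Assumption \ref{assumption: stateinvariantccc}, $ATT^{Conditional}\neq ATT^{UNDID}$. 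Here I would invoke the same two lemmas and instead show that Assumption \ref{assumption: stateinvariantccc} forces $ATT^{Conditional}\equiv ATT^{UNDID}$, so that the two estimators share a probability limit; this is what the symbol $\equiv$ in the statement should be read to mean here, namely asymptotic equivalence (as with the standard errors under a time-invariant covariate), not the exact finite-sample identity of Theorem \ref{lemma: equalATT}. As in the proof of Theorem \ref{lemma: equivtimevaryingcov}, I would impose Assumption \ref{assumption:pooled} so that the comparison of the two estimands is well posed.

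For the core step I would revisit the comparison already drawn in that proof between the pooled conditional expectation $E[Y_{i,s,t}\mid D_s,P_t,X_{i,s,t}] = \varphi_0 + \varphi_1 D_i + \varphi_2 P_t + \varphi_3 X_{i,t}$ (with the $P_t D_s$ interaction understood, so that the non-covariate part of the conventional specification in Equation \eqref{equation: didsimple} is saturated in the four group$\times$time cells) and the silo-specific conditional expectations $E[Y^j_{i,t}\mid P_t,X^j_{i,t}] = \vartheta^j_0 + \vartheta^j_1 P_t + \vartheta^j_3 X_{i,t}$. That proof showed $\varphi_3$ is a weighted average of $\vartheta^{Treat}_3$ and $\vartheta^{Control}_3$. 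Under Assumption \ref{assumption: stateinvariantccc} the treated and control DGPs share the common covariate slope $\lambda^{Treat}_\circ = \lambda^{Control}_\circ =: \lambda_\circ$, so $\vartheta^{Treat}_3 = \vartheta^{Control}_3 = \lambda_\circ$ and the weighted average collapses: $\varphi_3 = \lambda_\circ$ as well. Since the pooled specification already admits fully cell-specific intercepts and time effects, the pooled CEF restricted to $D_s = 1$ (resp. $D_s = 0$) then coincides with the true treated (resp. control) CEF, so all four conditional expectations entering $ATT^{Conditional}(x)$ agree term by term with those entering $ATT^{UNDID}(x)$. Taking expectations over the common distribution of $X$ among the treated — a single well-defined object under poolability — yields $ATT^{Conditional} = ATT^{UNDID}$, and the theorem follows from Lemmas \ref{lemma: conventionaltimevaryingatt} and \ref{lemma: undidtimevaryingatt}.

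An equivalent and arguably cleaner organization is purely algebraic. The UN-DID ATT is numerically equal to the coefficient on $P_t D_s$ in the \emph{fully interacted} pooled regression $Y = \gamma_0 + \gamma_1 D + \gamma_2 P + \gamma_3 PD + \gamma_4 X + \gamma_5 XD + \eta$ — this is just the standard fact that interacting a linear model with a binary group indicator reproduces the two separate silo regressions \eqref{equation: undid1}--\eqref{equation: undid2}, after reparametrizing $\{pre_t, post_t\}$ as $\{1, post_t\}$. The conventional regression \eqref{equation: didsimple} is this same regression with $\gamma_5 = 0$ imposed, and Assumption \ref{assumption: stateinvariantccc} is precisely the statement that the true value of $\gamma_5$ — namely $\lambda^{Treat}_\circ - \lambda^{Control}_\circ$ — is zero. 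Omitting from a correctly specified, mean-zero-error DGP a regressor whose true coefficient is zero leaves the probability limit of every remaining coefficient unchanged, so $\widehat{\beta_3}$ and the interacted $\widehat{\gamma_3}$ (which equals the UN-DID ATT) have the same limit.

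The main obstacle is conceptual rather than computational: one must be careful that the theorem asserts only asymptotic equivalence, not the finite-sample identity of Theorem \ref{lemma: equalATT}. In a finite sample the single pooled slope $\widehat{\beta_4}$ generally differs from both $\widehat{\lambda}_3^{Treat}$ and $\widehat{\lambda}_3^{Control}$ even when the true slopes coincide, so $\widehat{\beta_3}$ and the UN-DID ATT need not be numerically equal; the content of the theorem is that the wedge between them — the finite-sample cost of pooling the covariate slope, equivalently of omitting $XD$ — vanishes in probability under Assumption \ref{assumption: stateinvariantccc}. Once that framing is fixed, the two ingredients (the interaction-equals-separate-regressions identity, and the omitted-zero-coefficient argument, or equivalently the collapse of the weighted-average slope $\varphi_3$) are routine.
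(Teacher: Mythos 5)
Your proposal is correct and follows essentially the same route as the paper's proof: under Assumption \ref{assumption: stateinvariantccc} the pooled slope $\varphi_3$, being a weighted average of $\vartheta^{Treat}_3$ and $\vartheta^{Control}_3$, collapses to their common value, so the conditional expectations in Equations \eqref{equation: condpooled} and \eqref{equation: condunpooled} coincide, the two estimands are equal, and the estimators agree only asymptotically rather than in finite samples --- exactly the reading of $\equiv$ the paper intends. Your alternative organization via the fully interacted pooled regression plus an omitted-zero-coefficient argument is also sound (it is essentially Theorem \ref{lemma: RPeqUNDID} combined with consistency under a true zero interaction coefficient), but it is not the route the paper takes.
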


\begin{proof}[Proof: Theorem \eqref{lemma: CequalUPCCC}]
    When Assumption \eqref{assumption: stateinvariantccc} holds, $\vartheta^{Treat}_3 = \vartheta^{Control}_3$, and also equal to $\varphi_3$, which is a weighted average of the two. Therefore, the two expressions shown in Equation \eqref{equation: condpooled} and Equation \eqref{equation: condunpooled} will be the same if estimated on the same underlying sample. Note: the sample means converge in probability to the population expectations due to the WLLN, as sample size increases.  As a result, the two estimates from the conventional and UN-DID may not be equivalent for smaller sample sizes.
   
        \begin{equation}
            ATT^{Conditional} \equiv ATT^{UNDID}.
        \end{equation}
The estimates of the above expressions will also be equivalent, provided we have a large enough sample size.
\begin{equation}
    \widehat{\beta_3} = (\widehat{\lambda}_2^{Treat} - \widehat{\lambda}_1^{Treat}) - (\widehat{\lambda}_2^{Control} - \widehat{\lambda}_1^{Control}).
\end{equation}
    
\end{proof}

\begin{theorem}
\label{lemma: CequalUPtimeinvariant}
    \begin{equation}
        \mbox{With time-invariant covariates:} \; \widehat{\beta_3} = (\widehat{\lambda}_2^{Treat} - \widehat{\lambda}_1^{Treat}) - (\widehat{\lambda}_2^{Control} - \widehat{\lambda}_1^{Control}).
    \end{equation}  
\end{theorem}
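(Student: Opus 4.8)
The plan is to reduce the claim to the no-covariate equivalence already established in Theorem \ref{lemma: equalATT}, by showing that including a time-invariant covariate changes \emph{neither} the conventional interaction coefficient $\widehat{\beta_3}$ \emph{nor} either silo-specific first difference $\widehat{\lambda}_2^{j}-\widehat{\lambda}_1^{j}$, on a common poolable sample so that the comparison is well posed (as in Theorems \ref{lemma: equivtimevaryingcov} and \ref{lemma: CequalUPCCC}). Both reductions are Frisch--Waugh--Lovell (FWL) arguments resting on one elementary fact: if $X$ is time-invariant (exactly so for a balanced panel, where $X_{i,pre}=X_{i,post}$; in probability limit under within-silo time-stationarity of $X$ with repeated cross-sections), then inside every silo the pre-period sample mean of $X$ equals its post-period sample mean.

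First I would treat the two UN-DID regressions \eqref{equation: undid1}--\eqref{equation: undid2}. Reparametrise each from regressors $\{pre_t,post_t,X\}$ to $\{1,post_t,X\}$, so that $\widehat{\lambda}_2^{j}-\widehat{\lambda}_1^{j}$ is precisely the coefficient on $post_t$ in silo $j$. Since the within-silo pre- and post-period means of $X$ coincide, the sample covariance of $post_t$ with $X$ inside silo $j$ is zero, so residualising $post_t$ on $\{1,X\}$ yields the same vector as residualising on $\{1\}$ alone. By FWL the coefficient on $post_t$ is then $\widehat{\mathrm{Cov}}(Y,post_t)/\widehat{\mathrm{Var}}(post_t)=\overline{Y}^{j,post}-\overline{Y}^{j,pre}$, whether or not $X$ is in the regression; hence $\widehat{\lambda}_2^{j}-\widehat{\lambda}_1^{j}$ equals its no-covariate value for $j\in\{Treat,Control\}$.

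Next I would do the analogous computation for the conventional regression \eqref{equation: didsimple}. Decompose $X_i=\overline{X}_{c(i)}+\eta_i$, where $c(i)$ is the treatment$\times$period cell of observation $i$ and $\eta_i$ the within-cell deviation ($\sum_{i\in c}\eta_i=0$ for each cell $c$). Time-invariance forces the four cell means of $X$ to depend only on the silo, so $\overline{X}_{c(\cdot)}\in\mathrm{span}\{1,D_s\}\subseteq\mathrm{span}\{1,D_s,P_t,P_tD_s\}$; the residual piece $\eta$, having zero mean within every cell, is orthogonal to $P_tD_s$ (constant within cells) and to $\{1,D_s,P_t\}$. Thus including $X$ enlarges the regressor set only by the single direction $\eta$, and $\eta$ is orthogonal both to the target regressor $P_tD_s$ and to the remaining regressors $\{1,D_s,P_t\}$. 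Consequently the residual of $P_tD_s$ after projecting on $\{1,D_s,P_t,X\}$ equals its residual after projecting on $\{1,D_s,P_t\}$ alone, and by FWL $\widehat{\beta_3}$ is unchanged --- it equals its no-covariate value. Combining the two reductions with Theorem \ref{lemma: equalATT} applied to the no-covariate quantities yields $\widehat{\beta_3}=(\widehat{\lambda}_2^{Treat}-\widehat{\lambda}_1^{Treat})-(\widehat{\lambda}_2^{Control}-\widehat{\lambda}_1^{Control})$.

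The main obstacle I anticipate is the bookkeeping around ``time-invariant'' rather than the algebra: one must pin down the intended notion (literal constancy across periods for panel data, versus a stationarity assumption for repeated cross-sections, under which the needed mean equalities hold only in the limit and the conclusion becomes an equality of probability limits, as with Theorem \ref{lemma: CequalUPCCC}), and one must dispose of the degenerate cases in which the statement is vacuous or immediate --- an empty treatment$\times$period cell, or a covariate collinear with the within-silo intercept so that $\lambda_3^{j}$ is not separately identified. It is worth emphasising that the argument nowhere invokes the common causal covariates assumption \eqref{assumption: stateinvariantccc}: with a time-invariant covariate the silo-specific slopes $\lambda_3^{Treat},\lambda_3^{Control}$ and the conventional $\beta_4$ simply drop out of all the relevant contrasts, whether or not they coincide.
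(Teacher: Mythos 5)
Your proposal is correct, and it shares the paper's overall skeleton --- a Frisch--Waugh--Lovell reduction showing that a time-invariant covariate drops out of the residualization of the interaction term (conventional side) and of $post_t$ (UN-DID side), so that both estimators collapse to their no-covariate values and Theorem \ref{lemma: equalATT} finishes the job. Where you genuinely diverge is in how the key orthogonality step is justified on the conventional side. The paper asserts that $\widehat{\alpha}_3=0$ in the auxiliary regression \eqref{equation: didcovfwlresiduals} by appealing to the conditional independence assumption to get $X\perp D_s$ (together with $X\perp P_t$ from time-invariance). Your cell-mean decomposition $X_i=\overline{X}_{c(i)}+\eta_i$ replaces that appeal with an exact finite-sample argument: the cell-mean component is absorbed into $\mathrm{span}\{1,D_s\}$ and the within-cell deviation $\eta$ is orthogonal to $P_tD_s$ and to the remaining regressors, so the residualized interaction --- and hence $\widehat{\beta}_3$ --- is unchanged \emph{whether or not} $X$ is correlated with $D_s$ in the sample. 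This buys two things: it removes the dependence on the CIA entirely (which is desirable, since $Y(1),Y(0)\bigCI D\mid X$ does not in fact imply $X\bigCI D$, and population independence would not deliver exact sample orthogonality anyway), and it makes the equality an exact numerical identity for balanced panels rather than one holding only asymptotically. Your flagging of the panel versus repeated-cross-section distinction, and of the degenerate collinear cases, is also a point of care the paper's two-line proof glosses over. The UN-DID half of your argument is essentially identical to the paper's.
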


\begin{proof}[Proof: Theorem \eqref{lemma: CequalUPtimeinvariant}]
  In the context of time-invariant covariates, the regression coefficients for $X_{i}$ in  equations \eqref{equation: didcovfwlresiduals} and \eqref{equation: undidcovfwlresiduals} are observed to be 0. This is a consequence of the conditional independence assumption (\ref{assumption: CIA}), indicating that $X_i$ is independent of $D_i$ ($X_{s} \perp D_s$). Additionally, because $X_{i}$ is a time-invariant covariate, it is independent of $P_t$ as well ($X_{i} \perp P_t$). As a result, both $\widehat{\alpha}_3$ and $\widehat{\eta}^j_1$ are equal to 0.
        \begin{equation}
            ATT^{Conditional} \equiv ATT^{UNDID}.
        \end{equation}
Therefore, the estimates of the above expressions will also be equivalent, provided we have a large enough sample size.
\begin{equation}
    \widehat{\beta_3} = (\widehat{\lambda}_2^{Treat} - \widehat{\lambda}_1^{Treat}) - (\widehat{\lambda}_2^{Control} - \widehat{\lambda}_1^{Control}).
\end{equation}
        
\end{proof}

\subsection{Unbiasedness of the UN-DID estimator under CCC violations}

\cite{karim2024good} show that the Two-Way Fixed Effects (TWFE) DID estimator is biased when any of the three types of CCC assumptions stated in their paper are violated. They also demonstrate that the three types of CCC violations are plausible in real datasets. In light of this issue, \cite{karim2024good} propose a new estimator called the Intersection Difference-in-differences (DID-INT) which can provide an unbiased estimate of the ATT under the three types of CCC violations. In a simple two group and two time period setup, the TWFE estimator is equivalent to the the conventional regression shown in Equation \eqref{equation: didsimple}, and is therefore biased. 

In this paper, we show that the UN-DID estimator is equivalent to the state-varying DID-INT without staggered adoption, which is shown in Equation \eqref{equation: DIDINT}. The state-varying DID-INT is DID-INT variant designed to handle violations of Assumption \eqref{assumption: stateinvariantccc}. It is important to note that, the UN-DID is not able to capture violations of the time-invariant CCC violations.  However, a modified version of UN-DID could handle two-way violations. 

\begin{equation}
    \label{equation: DIDINT}
        Y_{i,t} = \psi_1 pre_t^{Treat} + \psi_2 post_t^{Treat} + \psi_3 pre_t^{Control} + \psi_4 post_t^{Control} + \psi_5 X^{Treat}_{i,t} + \psi_6 X^{Control}_{i,t} + e_{i,t}
\end{equation}

In the above regression, $pre_t^{Treat} = pre_t * D_i$; $post_t^{Treat} = post_t * D_s$; $pre_t^{Control} = pre_t * (1 - D_s)$; $post_t^{Control} * (1 - D_s)$; $X^{Treat}_{i,t} = X_{i,t} * D_s$ and $X^{Control}_{i,t} = X_{i,t} *(1 - D_s)$. In other words, $pre_t$, $post_t$ and $X_{i,t}$ are interacted with each silo. In the above regression, $\left[ (\hat{\psi_2} - \hat{\psi_1}) - (\hat{\psi_4} - \hat{\psi_3})\right]$ is the estimate of the ATT and is unbiased \citep{karim2024good}.

\begin{corollary}[The estimated ATT from the regression shown in Equation \eqref{equation: DIDINT} is unbiased and is derived as follows:]
    \begin{equation}
        \widehat{ATT^{DT}} = \left[ (\hat{\psi_2} - \hat{\psi_1}) - (\hat{\psi_4} - \hat{\psi_3})\right]
    \end{equation}
\end{corollary}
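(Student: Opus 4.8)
The plan is to show that the single pooled regression in Equation \eqref{equation: DIDINT} is numerically identical to the pair of silo-specific UN-DID regressions in Equations \eqref{equation: undid1} and \eqref{equation: undid2}, and then to inherit unbiasedness from the UN-DID estimator (and from \cite{karim2024good}), the point being that DID-INT, unlike the conventional regression, does not force a single covariate coefficient. First I would note that every regressor in Equation \eqref{equation: DIDINT} is interacted with either $D_s$ or $1-D_s$, so after sorting observations by silo the design matrix is block-diagonal and the normal equations separate. Consequently, the OLS fit for the columns $\{pre_t^{Treat}, post_t^{Treat}, X^{Treat}_{i,t}\}$ uses only treated observations and coincides with OLS of Equation \eqref{equation: undid1}, giving $\widehat{\psi}_1 = \widehat{\lambda}_1^{Treat}$, $\widehat{\psi}_2 = \widehat{\lambda}_2^{Treat}$, and symmetrically $\widehat{\psi}_3 = \widehat{\lambda}_1^{Control}$, $\widehat{\psi}_4 = \widehat{\lambda}_2^{Control}$ from Equation \eqref{equation: undid2}. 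Substituting into $\widehat{ATT^{DT}} = (\widehat{\psi}_2 - \widehat{\psi}_1) - (\widehat{\psi}_4 - \widehat{\psi}_3)$ reproduces exactly $\widehat{ATT^{UNDID}}$ of Equation \eqref{equation: undidattmain}.

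Next I would establish unbiasedness directly from the data-generating process. Under Assumption \eqref{assumption: stateinvariantccc}, the conditional mean of $Y^{Treat}_{i,t}$ is linear in $pre_t^{Treat}$, $post_t^{Treat}$, $X^{Treat}_{i,t}$, so the treated silo regression is correctly specified; with errors that are mean-zero given the regressors this yields $E[\widehat{\psi}_1] = \alpha_{\circ}^{Treat}$ and $E[\widehat{\psi}_2] = \beta_{\circ}^{Treat}$, and likewise $E[\widehat{\psi}_3] = \alpha_{\circ}^{Control}$, $E[\widehat{\psi}_4] = \beta_{\circ}^{Control}$ — and crucially this holds whether or not $\lambda_{\circ}^{Treat} = \lambda_{\circ}^{Control}$, which is exactly why the conventional regression (whose single slope $\beta_4$ is forced to a weighted average of $\lambda_{\circ}^{Treat}$ and $\lambda_{\circ}^{Control}$, as in the proof of Theorem \eqref{lemma: equivtimevaryingcov}) can be biased while DID-INT is not. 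Hence $E[\widehat{ATT^{DT}}] = (\beta_{\circ}^{Treat} - \alpha_{\circ}^{Treat}) - (\beta_{\circ}^{Control} - \alpha_{\circ}^{Control})$. The remaining step is to identify this population quantity with the ATT: the treated within-silo first difference decomposes into the change in the treated silo's untreated potential outcome plus the ATT, the control first difference is the change in the control silo's untreated potential outcome, and the two untreated trends cancel by Assumptions \eqref{assumption: spt} and \eqref{assumption: na}, exactly as in Corollary \eqref{theorem: attcond}.

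The main obstacle is this last identification step when the covariate is time-varying: the within-silo first difference $\beta_{\circ}^{j} - \alpha_{\circ}^{j}$ also absorbs any shift in the covariate distribution between periods, so one must argue carefully that, because Equation \eqref{equation: DIDINT} contains no constant and the $pre^{j}$/$post^{j}$ dummies partition each silo, $\alpha_{\circ}^{j}$ and $\beta_{\circ}^{j}$ are the intercepts of the period-specific conditional means evaluated at $X=0$, whose difference is the covariate-conditional first difference that is common to every $x$ under the linear DGP. That difference is $ATT^{UNDID}(x)$ in the notation of Equation \eqref{equation: attunpooledestimandconditional}, and aggregating over the treated covariate distribution as in Corollary \eqref{theorem: att} delivers the unconditional ATT; for the full argument in the staggered and multi-silo case I would simply defer to \cite{karim2024good}, since the corollary is a restatement of their DID-INT unbiasedness result specialized to the state-varying, two-period case.
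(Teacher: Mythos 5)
Your proposal is correct, but it takes a genuinely different route from the paper's. The paper does not really prove this corollary itself: the unbiasedness claim is attributed directly to \cite{karim2024good}, and the numerical identity between the DID-INT coefficients and the silo-specific UN-DID coefficients is handled separately in Theorem \ref{lemma: RPeqUNDID}, whose proof in Appendix \ref{section:proof_RPeqUNDID} runs a lengthy Frisch--Waugh--Lovell computation (partialling each of $pre_t^{Treat}$, $post_t^{Treat}$, etc.\ out against the remaining interacted regressors, invoking the orthogonality Lemma \ref{lemma: independence}, and matching the resulting residual regressions to those of Equations \eqref{equation: undid1} and \eqref{equation: undid2}). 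Your observation that every column of the design matrix in Equation \eqref{equation: DIDINT} is supported on exactly one silo, so that the cross moments vanish, the Gram matrix is block diagonal, and the normal equations separate, delivers the stronger coefficient-by-coefficient identities ($\widehat{\psi}_1=\widehat{\lambda}_1^{Treat}$, $\widehat{\psi}_2=\widehat{\lambda}_2^{Treat}$, and so on) in one step; it is the cleaner and more elementary argument, and it is exactly the mechanism that Lemma \ref{lemma: independence} captures piecemeal. Your second half --- correct specification of each silo regression under the DGP of Assumption \ref{assumption: stateinvariantccc} without imposing $\lambda_{\circ}^{Treat}=\lambda_{\circ}^{Control}$, hence $E[\widehat{\psi}_2-\widehat{\psi}_1]=\beta_{\circ}^{Treat}-\alpha_{\circ}^{Treat}$ and similarly for the control silo, followed by cancellation of the untreated trends under Assumptions \eqref{assumption: spt} and \eqref{assumption: na} --- supplies a self-contained unbiasedness argument where the paper simply outsources to \cite{karim2024good}. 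The only caveats are ones you already flag: you need errors that are mean-zero conditional on the regressors within each silo, and the final identification of $(\beta_{\circ}^{Treat}-\alpha_{\circ}^{Treat})-(\beta_{\circ}^{Control}-\alpha_{\circ}^{Control})$ with the ATT leans on the linear, effect-homogeneous DGP so that the covariate-conditional first difference is constant in $x$; both are consistent with how the paper treats the surrounding results, so I see no gap.
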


\begin{theorem}[The estimated ATT between the state-varying DID-INT without staggered adoption is equivalent to the UN-DID, regardless of whether Assumption \eqref{assumption: stateinvariantccc} holds or is violated]
\label{lemma: RPeqUNDID}
    \begin{equation}
        \left[ (\hat{\psi_2} - \hat{\psi_1}) - (\hat{\psi_4} - \hat{\psi_3})\right] = (\widehat{\lambda}_2^{Treat} - \widehat{\lambda}_1^{Treat}) - (\widehat{\lambda}_2^{Control} - \widehat{\lambda}_1^{Control})
    \end{equation}
\end{theorem}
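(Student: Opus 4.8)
The plan is to exploit the block-diagonal structure of the design matrix in the pooled DID-INT regression \eqref{equation: DIDINT} (which presupposes Assumption \eqref{assumption:pooled}, so that the pooled regression can even be written down). The key observation is that every regressor in \eqref{equation: DIDINT} is a within-silo variable ($pre_t$, $post_t$, or $X_{i,t}$) interacted with either $D_s$ or $1-D_s$. Hence, for any treated observation ($D_s=1$) the three ``control'' regressors $pre_t^{Control}$, $post_t^{Control}$, $X^{Control}_{i,t}$ vanish identically, and for any control observation ($D_s=0$) the three ``treated'' regressors vanish identically. Ordering the stacked sample by silo, the full regressor matrix therefore takes the form $Z = \mathrm{diag}(Z^{Treat},\, Z^{Control})$, where $Z^{Treat}$ collects $(pre_t^{Treat}, post_t^{Treat}, X^{Treat}_{i,t})$ over the treated rows --- exactly the regressor matrix of \eqref{equation: undid1} --- and $Z^{Control}$ is the regressor matrix of \eqref{equation: undid2}. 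Because \eqref{equation: DIDINT}, like \eqref{equation: undid1}--\eqref{equation: undid2}, contains no overall intercept, there is no shared column coupling the two blocks.

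First I would write out the normal equations for \eqref{equation: DIDINT}. With the block-diagonal design, $Z'Z = \mathrm{diag}\big((Z^{Treat})'Z^{Treat},\, (Z^{Control})'Z^{Control}\big)$ and $Z'Y = \big((Z^{Treat})'Y^{Treat},\, (Z^{Control})'Y^{Control}\big)'$. Each diagonal block is invertible --- this is precisely the no-multicollinearity remark already made for \eqref{equation: undid1}--\eqref{equation: undid2}, which holds because the constant has been omitted so that $pre_t$ and $post_t$ are not collinear. Inverting a block-diagonal matrix block by block then yields $(\hat{\psi}_1,\hat{\psi}_2,\hat{\psi}_5)' = ((Z^{Treat})'Z^{Treat})^{-1}(Z^{Treat})'Y^{Treat} = (\hat{\lambda}_1^{Treat},\hat{\lambda}_2^{Treat},\hat{\lambda}_3^{Treat})'$ and, by the same computation on the other block, $(\hat{\psi}_3,\hat{\psi}_4,\hat{\psi}_6)' = (\hat{\lambda}_1^{Control},\hat{\lambda}_2^{Control},\hat{\lambda}_3^{Control})'$.

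The conclusion then follows immediately by substitution: $(\hat{\psi}_2 - \hat{\psi}_1) - (\hat{\psi}_4 - \hat{\psi}_3) = (\widehat{\lambda}_2^{Treat} - \widehat{\lambda}_1^{Treat}) - (\widehat{\lambda}_2^{Control} - \widehat{\lambda}_1^{Control})$. Crucially, nothing in the argument invokes Assumption \eqref{assumption: stateinvariantccc}: the coefficient vectors coincide as an exact algebraic identity of OLS on a block-separable design, whether or not $\lambda^{Treat}_{\circ} = \lambda^{Control}_{\circ}$, which is exactly the ``regardless of whether Assumption \eqref{assumption: stateinvariantccc} holds or is violated'' claim. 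I do not expect a genuine obstacle; the only subtlety worth flagging explicitly is the absence of a common intercept --- were one included in \eqref{equation: DIDINT}, the two blocks would share a column, the design would cease to be block-diagonal, and the clean coefficient-by-coefficient identification would break (although one could still recover the double-difference contrast via a Frisch--Waugh--Lovell partialling-out argument).
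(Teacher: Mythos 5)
Your proof is correct, and it takes a genuinely cleaner route than the paper's. The paper proves Theorem \ref{lemma: RPeqUNDID} coefficient by coefficient via the Frisch--Waugh--Lovell theorem: it first establishes (as Lemma \ref{lemma: independence}) that each treated-interacted regressor is orthogonal to each control-interacted regressor, uses this to simplify the auxiliary regressions \eqref{equation: DIDINTpre1} and \eqref{equation: DIDINTpre2}, computes the residuals explicitly, and then matches each resulting FWL regression for $\hat{\psi}_1,\dots,\hat{\psi}_4$ to the corresponding FWL regression for $\hat{\lambda}_1^{Treat},\hat{\lambda}_2^{Treat},\hat{\lambda}_1^{Control},\hat{\lambda}_2^{Control}$ in the silo-specific models \eqref{equation: undid1}--\eqref{equation: undid2}. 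Your argument packages the same underlying fact --- that the off-diagonal blocks of $Z'Z$ vanish because every regressor in \eqref{equation: DIDINT} is interacted with $D_s$ or $1-D_s$ and there is no shared intercept column --- into a single block-diagonal normal-equations computation. What this buys you is brevity and a slightly stronger conclusion: you obtain in one step that \emph{all six} DID-INT coefficients equal their silo-specific counterparts (including $\hat{\psi}_5=\hat{\lambda}_3^{Treat}$ and $\hat{\psi}_6=\hat{\lambda}_3^{Control}$), from which the double-difference identity is immediate; the paper's FWL route only needs, and only derives, the four that enter the ATT, at the cost of several pages of residual algebra. Your observation that the argument nowhere uses Assumption \eqref{assumption: stateinvariantccc}, and your caveat about what would break if a common intercept were included, are both accurate and match the role the theorem plays in the paper.
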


A proof of Theorem \ref{lemma: RPeqUNDID} can be found in Appendix \ref{section:proof_RPeqUNDID}. Theorem \ref{lemma: RPeqUNDID} imples that the UN-DID regression yields the same estimate of the ATT as the DID-INT regression in the absence of staggered adoption. Therefore, the UN-DID regression can provide an unbiased estimate of the ATT even under violations of Assumption \eqref{assumption: stateinvariantccc}. In contrast, the estimate of the ATT from the conventional regression is biased.

To summarize, this section demonstrates two key advantages of the UN-DID over the conventional estimator. First, we show that the UN-DID method can estimate the ATT with unpoolable data, while the conventional DID regression cannot. Second, we show that UN-DID allows time-varying covariates with differing slopes accross silos to be used. We also establish the equivalence of the UN-DID and the conventional estimators in settings without covariates, or with time-invariant covariates. However, with time varying covariates, the two estimators converge in probability to two different ATT estimands, but remain unbiased. In addition, we have introduced the state-invariant CCC assumption, and showed that the conventional estimator becomes biased when the assumption is violated and time-varying covariates are used. In contrast, the UN-DID remains unbiased.

\section{UN-DID with staggered adoption} \label{sec:un_stag}

UN-DID is easily extended to staggered adoption contexts. Suppose we have data for multiple silos, labeled as $s = 1, 2, ..., S$. The silos which received the intervention or treatment are called the treated groups, and those that did not are called the control group. For expositional clarity, consider that only one silo is included in each timing group, and only one silo is a control. We will relax this requirement in Section \ref{sec:mult}. We also assume data for several calendar years, $t = 1,2,.....,T$. In this setting, different silos adopt treatment at different points in time, creating a staggered adoption framework. The period before a group $s$ is treated is called the pre-intervention period. We impose an additional restriction called absorptive state, which implies that, once a unit is treated, it remains treated for the rest of the period of the study.   Several papers in the existing DID literature, such as \cite{callaway2021difference}, \cite{de2020twott} and \cite{borusyak2017revisiting}, have already proposed methods to estimate the ATT when data are poolable.

Similar to \cite{callaway2021difference}, UN-DID with staggered adoption decomposes the dataset into several $2\times2$ group--time blocks, which are the building blocks for estimating the ATT under staggered treatment timing. 
Following \cite{callaway2021difference}, we group together each individual into $s \in S$ cohorts based on the treatment start-time dummies. Let $t^s$ be the year group $s$ is first treated. For a particular group $s$, any period where $t<t^s$ are called pre-treatment periods. However, following \cite{callaway2021difference}, we define the pre-treatment period for group $s$ as the year before the group is treated ($t^s-1$). Each of these group--time blocks contains a group that is currently treated and a group that has not yet been treated. For each $2\times2$ group--time block, we will run the following two regressions:

\begin{equation}
\label{equation: SAtreat}
    \mbox{For treated:} \; Y^{Treat}_{i,t} = \zeta_1^{Treat} pre_t^{Treat} + \zeta_2^{Treat} post_t^{Treat} + \zeta_3^{Treat} X^{Treat}_{i,t} + \nu^{Treat}_{i,t}.
\end{equation}
\begin{equation}
\label{equation:SAcontrol}
    \mbox{For untreated:} \; Y^{Control}_{i,t} = \zeta_1^{Control} pre_t^{Control} + \zeta_2^{Control} post_t^{Control} + \zeta_3^{Control} X^{Control}_{i,t} + \nu^{Control}_{i,t}.
\end{equation}

In the above regressions, $Treat$ is an index for the treated group and $Control$ is an index for the control group. $X^j_{i,t}$'s are the covariates researchers may be interested in controlling for, where $j = \{Treat,Control\}$. $\nu^j_{i,t}$ are the identically and independently distributed (i.i.d) error terms. After these two regressions are estimated, the estimate of the ATT(s,t) is:
\begin{equation}
\widehat{ATT(s,t)} = (\widehat\zeta^{Treat}_2 - \widehat\zeta^{Treat}_1)-(\widehat\zeta^{Control}_2 - \widehat\zeta^{Control}_1).
\label{equation: ATTsts}
\end{equation}

\noindent Note, the ATT${(s,t)}$ defined here is identical to the ATT${(g,t)}$ from \cite{callaway2021difference} when there is only one silo per treatment timing group.  However, that is not the case when there are additional silos, as in Section \ref{sec:stag}.  There are many cases in which ATT${(s,t)}$ are the more relevant objects as discussed in detail in our companion project \cite{karim2025slides}.

The above process is repeated for all relevant $2\times2$ group--time blocks.  These ATT(s,t) cells are then aggregated together to get an overall estimate of the ATT based on Equation \eqref{eq:sa}. $w_{s,t}$ are the weights associated with each of the ATT(s,t) cells. Common weighting schemes include equal weighting for each cell and population-based weighting. Refer to \cite{callaway2021difference} and \cite{xiong2023federated} for details. 

\begin{equation}
\label{eq:sa}
        \widehat{ATT} = \sum_{s=2}^{S} \sum_{t=2}^\mathcal{T} 1\{t^s \leq t\} w_{s,t} ATT(s,t).
\end{equation}

\section{Multiple Silos Per Treatment Time} \label{sec:mult}

Up until this point, we have concentrated on settings in which there is only one silo per treatment time.  However, even with common timing it is typically the case that
many treated and many comparison silos will be used.  In that case, it is easier to estimate the ATT by means of a second stage regression.  The details differ slightly depending on whether there is common or staggered adoption. Our software packages always use this approach, in part because it makes it computationally much faster to conduct cluster robust inference.

\subsection{Common Adoption}

Imagine that there are two treated silos, A and B, and one control silo C. Under Assumption \eqref{assumption: noentry}, calculation of the treatment effect must occur on a fourth computer, call this a server.  Let us also assume that we are not interested in estimating dynamic treatment effects, so that aggregating time points into pre- and post-intervention will suffice.  In that case, we need three pieces of information from each silo: the difference between the pre-period and the post-period, as estimated as $\lambda_2^{j} - \lambda_2^{j}$ for the $j$th silo in Equation \eqref{equation: undidattmain}, the standard error for this difference, and possibly the weight for the silo. We typically recommend the weight for silos $s$ in period $t$ is $w_{s,t} = N_{s,t} / \sum_{s,t}(N_{s,t}) \; \forall s \in S^T$. Here, $S^T$ is the set of all silos which are treated. 

The first stage of the analysis creates a dataset which contains this information, along with treatment status for each of the three silos.  In this case, the dataset would have three observations, and four variables:  $\textrm{diff}_s$ is the within silo pre-post difference, potentially after controlling for covariates;  $se_s$ is the standard error for $\textrm{diff}_s$; $w_s$ is the silo weight; and $d_s$ is whether that silo is treated $d_s=1$ or not $d_s=0$.  We can calculate the ATT by estimating the regression:

\begin{equation}
    \textrm{diff}_s = \alpha + \beta d_s + \epsilon_s.
    \label{eq:diffreg}
\end{equation} 

In this regression, $\beta$ is the estimate of the $ATT^{UNDID}$, which will give the same estimate that one would get from a regression using poolable data when the groups are of equal size.  When the groups are of unequal size, a weighted least squares version of the regression is needed to match the regression with pooled data.  This second stage regression approach is particularly useful for cluster robust inference.  This is being explored in a companion paper in development by the authors of this paper.  

\subsection{Staggered Adoption}\label{sec:stag}

Staggered adoption imposes an additional computational burden with multiple silos.  Each individual $ATT(s,t)$ can 
be calculated as before, and then an overall ATT can be estimated as a weighted average (see Section \ref{section: Theoretical Framework}). To fix ideas, consider the 
simplest staggered adoption setting, where there are three silos (A,B,C) and three time periods (1,2,3).  In this section we use the treatment timing notation from \cite{callaway2021difference}. Silo A
is first treated in period 2 ($g=2$).  Silo B is first treated in period 3 ($g=3$).  Silo C is never treated ($g=\infty$).  
In this 3$\times$3 example, without covariates, and using the never-treated cells as controls, the three $ATT{s,t}$s are:
\begin{equation*}
	\begin{array}{rcl}
		ATT_{22} & = & (\bar Y_{A2} - \bar Y_{A1}) - (\bar Y_{C2}  - \bar Y_{C1}) \\[8pt]
		ATT_{23} & = & (\bar Y_{A3} - \bar Y_{A1}) - (\bar Y_{C3}  - \bar Y_{C1}) \\[8pt]
		ATT_{33} & = & (\bar Y_{B3} - \bar Y_{B2}) - (\bar Y_{C3}  - \bar Y_{C2})
	\end{array}
\end{equation*}

In this example, we need to calculate differences in mean outcomes within each silo for different sets of years for each silo. Specifically, we need 
to estimate the difference between year 2 and 1, and 3 and 1 for silo A.  For silo B, we only need the difference between year 3 and 2.  Finally, for silo C, we need all three of these differences. Similar to the multiple silo with common timing example, we propose storing these differences, and their standard errors, in a separate matrix.  The staggered treatment timing requires us to also keep track of which ``block'' we are estimating.  To do this, we introduce a new index $h$. This index is equal to $g$ for the treated silos.  For the control silos, $h$ takes on the value of $h$ for the relevant comparisons.  For example,for the control silo in the equation above, $h=2$ for the first two lines, and $h=3$ for the last line.  We also need to keep track of $t$ which is analogous to the $t$ from \cite{callaway2021difference}. Again, for the control silo in the equation above, this $t=2$, $t=3$, and $t=3$.  We also need to record whether a given silo is a treatment or control silo. To fill this matrix, each of the within-silo, two-year differences can be calculated using a separate regression. Table \ref{tab:stagmat} illustrates this matrix for a simple 3$\times$3 example.  With multiple silos per group, each silo would have its own row for each difference across years.

\begin{table}[ht]
    \centering
    \caption{Second Stage Matrix}
    \begin{tabular}{l l c c c c}
        \toprule
        Contrast & Silo & d & h & T & diff \\
        \midrule
        A22 & A & 1 & 2 & 2 & $\bar{Y}_{A2} - \bar{Y}_{A1}$ \\
        A23 & A & 1 & 2 & 3 & $\bar{Y}_{A3} - \bar{Y}_{A1}$ \\
        B33 & B & 1 & 3 & 3 & $\bar{Y}_{B3} - \bar{Y}_{B2}$ \\
        C22 & C & 0 & 2 & 2 & $\bar{Y}_{C2} - \bar{Y}_{C1}$ \\
        C23 & C & 0 & 2 & 3 & $\bar{Y}_{C3} - \bar{Y}_{C1}$ \\
        C33 & C & 0 & 3 & 3 & $\bar{Y}_{C3} - \bar{Y}_{C2}$ \\
        \bottomrule
    \end{tabular}
    \label{tab:stagmat}
\end{table}

 \noindent The various  $ATT(g,t)$ terms, can then be estimated using this matrix estimating this regression:

 \begin{equation}
     \textrm{diff}_{s,h,t} = \alpha + \beta d_{s,h,t} + \epsilon_{s,h,t}  \textrm{ if } h=h \textrm{ and } T=t. 
    \label{equation:second_stage}
 \end{equation}

 In this regression $\beta$ estimates the  ATT${s,t}$ given by the sample restriction.  For the 3$\times$3 example,
 the three restrictions needed would be h=2, T=2; h=2, T=3; and h=3, T=3.  Our software packages implement this algorithm.

\subsection{UN-DID Software Packages}\label{sec:software}

We have developed a set of software packages to aid in the implementation of the UN-DID estimator.  There are versions available in \texttt{R}, \texttt{Stata}, \texttt{Python} and \texttt{Julia}.  At present, the \texttt{R} version is the most polished and has been accepted on CRAN. The documentation for the R-package can be found at \cite{undidr}.  The \texttt{Stata} package is available here: \url{https://github.com/ebjamieson97/undid}. 
The \texttt{Julia} program can be found here: \url{https://github.com/ebjamieson97/Undid.jl}.
The \texttt{Python} version works as a wrapper to call the \texttt{Julia} program, and can be found here: 
\url{https://github.com/ebjamieson97/undidPyjl}. Work is in progress to fully document using the software with worked examples.  As mentioned, these software packages estimate the treatment effects using the two stage procedure.  They also estimate cluster $P$-values using both a cluster jackknife, and a Randomization Inference routine. Which one of these procedures is preferred depends on many things, such as how many silos there are, and how many are treated \citep{karim2025slides}.

\section{Monte Carlo Design} \label{section: MC}

We conduct a set of Monte Carlo experiments with synthetic data to demonstrate the unbiasedness of the UN-DID and its relative performance compared to the conventional DID estimator. The overall experimental design is depicted in Figure \ref{fig:scheme}, where we apply both conventional and UN-DID regression methods to the same simulated dataset to estimate ATTs and assess their properties.  Note that, with poolable data, we can apply both the conventional and UN-DID methods to estimate the ATT. With unpoolable data, we are not able to run the conventional regression. Because these are synthetic data, they are poolable. Therefore, we can run both methods on the dataset.

Our generated dataset is intended to replicate samples from census data for Ontario and Quebec. To be more precise, we aligned the age and gender distributions to closely resemble those observed in the populations of these provinces in 2001 for individuals aged 65 years and older. We keep the gender constant for all individuals and age them by one year for the next nine years, resulting in our final dataset. Our outcome is a continuous variable. In our initial simulation, we created panels of 50 individuals each in both Ontario and Quebec, covering the period from 2000 to 2009. This results in a total of 500 observations evenly distributed over 10 years. We chose a small sample size to ensure the robustness of the results even when dealing with small datasets. For the simulation, we assume that Quebec is treated by some arbitrary policy in the year 2005.  This approach ensures a balanced and symmetrical dataset, maintaining an equal number of observations in both pre- and post-treatment periods and in the treatment and control groups. We then repeat the simulations with roughly two thirds of the total observations in Ontario  and one third of the total observations in Quebec. 

\begin{figure}[ht!]
    \centering
    \includegraphics[width=0.75\textwidth]{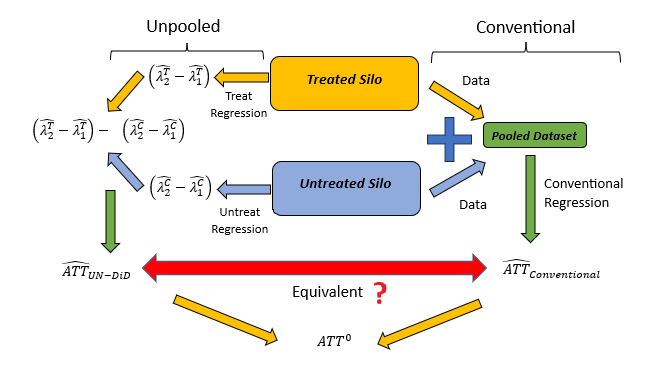}
    \caption{Schematic for Simulations}
    \label{fig:scheme}
\end{figure}

We also consider four types of data-generating processes (DGPs) to assess whether the estimates of the ATT align between conventional  and UN-DID regressions: one without covariates, one with time-invariant covariates, one with time-varying covariates when Assumption \eqref{assumption: stateinvariantccc} holds and one with time-varying covariates where Assumption \eqref{assumption: stateinvariantccc} does not hold. Within each DGP, we analyze two cases. In the first case, we set the true treatment effect for the treated group at 0. In the second case, the true treatment effect for the treated group is set at 0.1. After generating the data, we estimate ATTs using both conventional and UN-DID regressions and assess their unbiasedness properties. This is then repeated 1,000 times for each case. The data-generating process for the two cases without covariates are as follows:

\begin{itemize}
	\item \textbf{Case 1}: No treatment effect, no covariates:\vspace{-20pt}

	\begin{equation*}
		Y^s_{it} = e^s_{it} \; \mbox{where:} \; s = \{Treat,Control\}.
	\end{equation*}
	
	\item \textbf{Case 2:} Treatment effect, no covariates:\vspace{-20pt}
	
	\begin{equation*}
		Y^s_{it} = 0.1 * DID^s_{it} + e^s_{it} \; \mbox{where:} \; s = \{Treat,Control\}.
	\end{equation*}
 \end{itemize}

In all the simulations in the paper, the error term is idiosyncratic and follows a normal distribution, denoted as $e \sim \mathcal{N}(0, 1)$. The $DID^s_{it}$ represents the interaction term between $P_t$ and $D_s$, corresponding to the post and treat dummy variables.  The coefficient associated with this interaction term signifies the ``true" effect of the treatment in both cases. 

With time-invariant covariates, we introduce a modification to our DGP, where the outcome is now influenced by a time-invariant covariate, denoted by $X_i$. In our simulations, the time-invariant covariate is represented by a binary variable indicating gender. Specifically, it assumes a value of 1 if the individual is female, and 0 otherwise. Additionally, we assign a common slope parameter of 0.5 to the covariate in both the treated and the control silos. As proven in Theorem \eqref{lemma: CequalUPtimeinvariant}, the ATTs from both methods are equivalent regardless of whether Assumption \eqref{assumption: stateinvariantccc} holds. Therefore, we just explore one scenario.

\begin{itemize}
	\item \textbf{Case 1}: No treatment effect, time-invariant covariate \vspace{-20pt}

	\begin{equation*}
		Y^s_{it} = 0.5 X^s_{i} + e^s_{it} \; \mbox{where:} \; s = \{Treat,Control\}..
	\end{equation*}
	
	\item \textbf{Case 2:} Treatment effect, time-invariant covariate \vspace{-20pt}
	
	\begin{equation*}
		Y^s_{it} = 0.1 * DID^s_{it} + 0.5 X^s_{i} + e^s_{it} \; \mbox{where:} \; s = \{Treat,Control\}.
	\end{equation*}
 \end{itemize}
 
With time varying covariates, we replace the time-invariant covariate with a time-varying covariate, denoted by $X_{it}$. In our simulations, the time-varying covariate is age, which is unaffected by treatment even though it changes over time.
We consider two types of time-varying covariates: one where Assumption \eqref{assumption: stateinvariantccc} holds, and one where Assumption \eqref{assumption: stateinvariantccc} is violated. 

\begin{itemize}
	\item \textbf{Case 1}: No treatment effect, time-varying covariate, Assumption \eqref{assumption: stateinvariantccc} holds \vspace{-20pt}

	\begin{equation*}
		Y^s_{it} = \varsigma X^s_{it} + e^s_{it} \; \mbox{where:} \; s = \{Treat,Control\}.
	\end{equation*}
	
	\item \textbf{Case 2:} Treatment effect, time-varying covariate, Assumption \eqref{assumption: stateinvariantccc} violated \vspace{-20pt}
	
	\begin{equation*}
		Y^s_{it} = 0.1 * DID^s_{it} + \varsigma^s X^s_{it} + e^s_{it} \; \mbox{where:} \; s = \{Treat,Control\}..
	\end{equation*}
 \end{itemize}

When Assumption \eqref{assumption: stateinvariantccc} holds, $\varsigma$ is 1.5 for both silos, and when Assumption \eqref{assumption: stateinvariantccc} is violated, $\varsigma^T = 0.5$ and $\varsigma^C = 2$. 

In this simulation study, the generated dataset is a balanced panel. The results displayed in the following sections also apply to repeated cross sectional data. Furthermore, the results for time invariant covariate with repeated cross sectional data are comparable to the results with time varying covariates with panel data, as the distribution of the covariate varies over time due to different individuals being present in the data in each year. After generating the datasets and estimating both the conventional and the UN-DID ATTs, we create kernel density plots to evaluate the bias in both estimators. 

The proofs shown in the preceding section are limited to the estimation of the ATT and do not show the equivalence of the Standard Errors between the two methods. We employ scatter plots, where the conventional DID standard errors are plotted on the y-axis and the UN-DID standard errors are plotted on the x-axis, to assess the equivalence of the standard errors between the two methods. If all the points lie along the $45^{\circ}$ line, it demonstrates equivalence in standard errors among the two methods. However, any deviations from the $45^{\circ}$ line indicates differences between them, which may arise due to small sample size. 

In order to assess whether the conventional and the UN-DID standard errors approach the true standard errors as sample size increases, we repeat the Monte Carlo simulation by increasing the sample size to 1,000, 2,000, 4,000, 8,000, 10,000, and 50,000 while maintaining unequal sample sizes between both silos.  Subsequently, we compute the mean squared error (MSE) between the estimated SE using UN-DID and the true SE from the underlying DGP according to Equation \eqref{equation: MSESE}:

    \begin{equation}
        \label{equation: MSESE}
        	MSE(\widehat{SE}) = \frac{1}{1000} \sum_j^{1000} \biggl( \widehat{SE}(\widehat{ATT_j})^{UN-DID} - SE_0\biggr)^2.
    \end{equation}

The true standard error of the ATT is determined using the following formula:

    \begin{equation}
        \label{equation: TrueSE}
            SE_0 = \frac{\sum_i \sum_t \epsilon_{it}^2}{(n-f)(DID-\overline{DID})^2}.
    \end{equation}

Note that the $\epsilon_{it}$ are the actual disturbances used in the DGP in the Monte Carlo.  Here, $\overline{DID}$ is the mean of the DID term from the generated dataset and $f$ is the number of coefficients. In the case with no covariates, the number of coefficients is set at 4, following the conventional regression shown in Equation \eqref{equation: didsimple}. We then check whether the MSE decreases as sample size increases. This will allow us to verify that any variations in the scatter plots are simply a result of small sample sizes. The following sub-sections present the simulation study results for the DGPs with unequal sample size, as it is more complicated than the DGPs with equal sample size. However, the results for both DGPs are similar.

\section{Results} \label{sec:results}

We now present the results on a case by case basis.

\subsection{Results: No Covariates}\label{sec:nocovar}

\begin{figure}[ht!]
    \centering
    \includegraphics[width=0.70\textwidth]{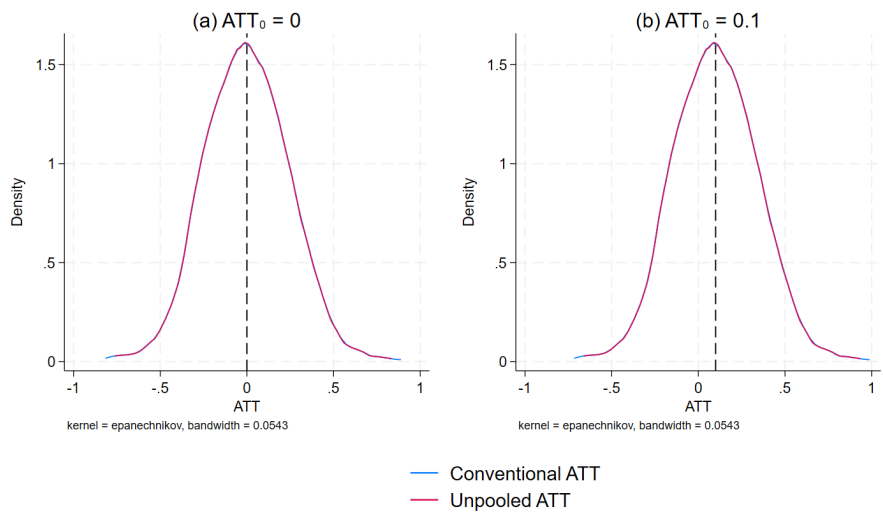}
    \caption{\centering No Covariates with Unequal Sample Sizes: Kernel Density}
    \label{fig:KDensity2}
\end{figure}

\FloatBarrier

In Theorem \ref{lemma: equalATT}, we have shown that the conventional and the UN-DID regressions are numerically equivalent and unbiased. To assess unbiasedness, we generate kernel density plots for both conventional DID and UN-DID estimates of the ATT on a shared axis. The kernel densities for unequal sample sizes are shown in Figure \ref{fig:KDensity2}. The DGP's are generated such that Assumptions \eqref{assumption: binary}, \eqref{assumption: na} and \eqref{assumption: pt} holds. Since we can run both the UN-DID and the conventional method on poolable data, Assumption \eqref{assumption:pooled} is not necessary. In Panel (a), the kernel density estimates for Case 1 are presented, and in Panel (b), the kernel density estimates for Case 2 are presented. Notably, for Case 1, both kernels are centered around 0, aligning with the true value of the ATT. Similarly, for Case 2, both kernels are centered around 0.1, which is the true value of the ATT for Case 2. The blue line representing the conventional ATT lies exactly beneath the red line representing the UN-DID ATT (except at the tails) and is therefore not visible. This demonstrates that the ATTs of the two methods are exactly numerically equivalent. This finding demonstrates that both conventional and UN-DID estimation methods exhibit unbiasedness, as evidenced by the distributions of estimates being centered around the true values of the ATT. 

\FloatBarrier

\subsection{Results: Time-Invariant Covariate}
In Theorem \ref{lemma: CequalUPtimeinvariant}, we have shown that both the conventional DID and the UN-DID methods  can recover numerically equivalent values of the ATT when implemented on a poolable dataset and that both are unbiased. The DGP's for this section are generated such that Assumptions \eqref{assumption: binary}, \eqref{assumption: spt}, \eqref{assumption: na} and \eqref{assumption: stateinvariantccc} holds. Similar to the preceding sub-section, we assess the unbiasedness of the UN-DID estimator with unequal sample sizes and time-invariant covariates using kernel densities, illustrated in Figure \ref{fig:KDensity7}. The kernel densities show that the distribution of ATTs is centered on the true ATT value, implying unbiasedness with time-invariant covariates. As in Figure \ref{fig:KDensity2}, the blue line is not visible as it is perfectly underneath the red line (except at the tails), demonstrating numerical equivalence.

\begin{figure}[ht!]
    \centering
    \includegraphics[width=0.70\textwidth]{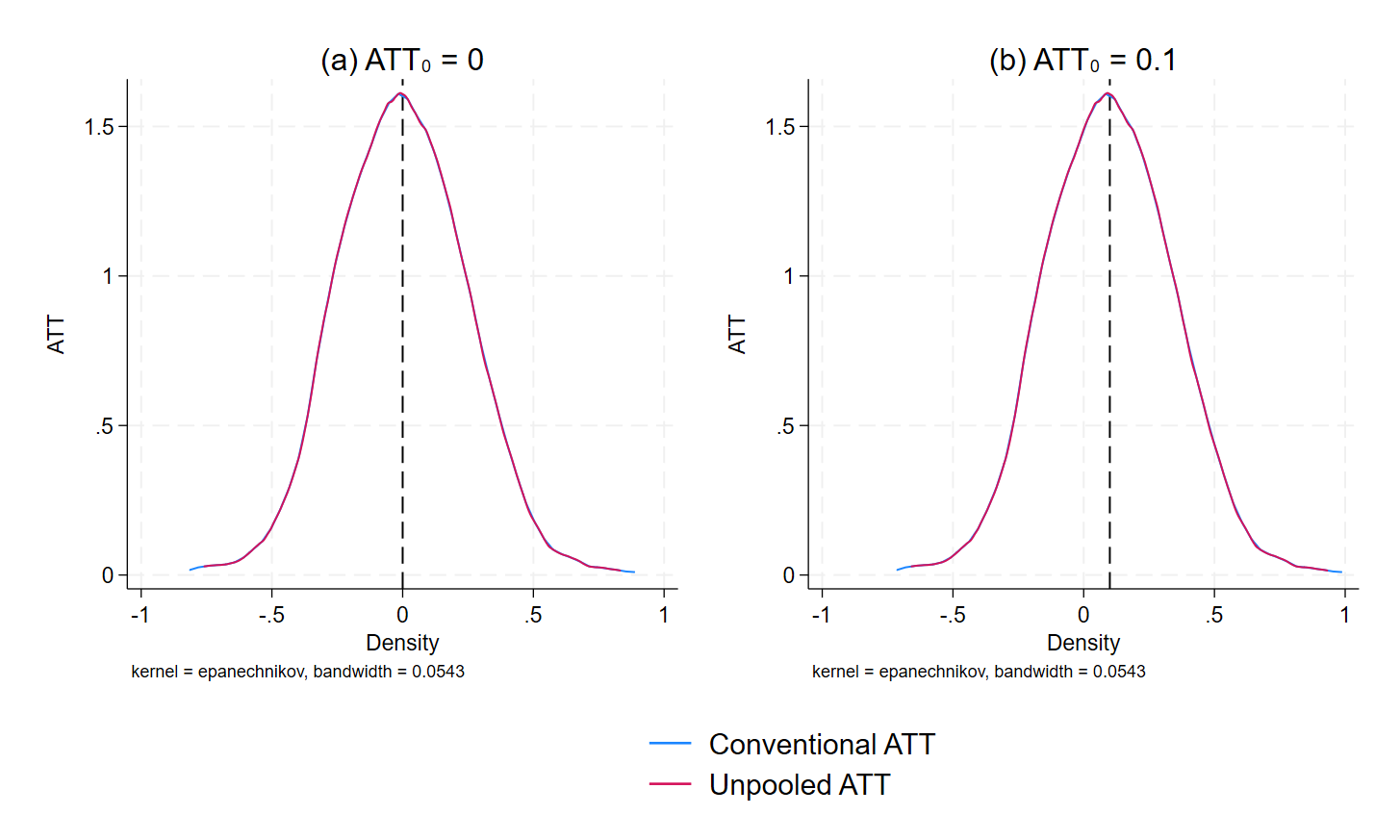}
    \caption{\centering Time-Invariant Covariates with Unequal Sample Sizes: Kernel Density}
    \label{fig:KDensity7}
\end{figure}

Figure \ref{fig:InvarCov} presents the scatter plots assessing the equivalence of the standard errors between the conventional and the UN-DID estimators. Panels (a) and (b) depict the relationships between the conventional DID standard error (y-axis) and the UN-DID standard error (x-axis) for Case 1 and Case 2 with unequal sample sizes and time-invariant covariates, respectively. Reviewing the scatter plots, we note that the standard errors lie along the $45^{\circ}$ line but do not align perfectly with it. This indicates that the standard errors are equivalent but \emph{not numerically equal}.

\begin{figure}[ht!]
    \centering
    \includegraphics[width=0.70\textwidth]{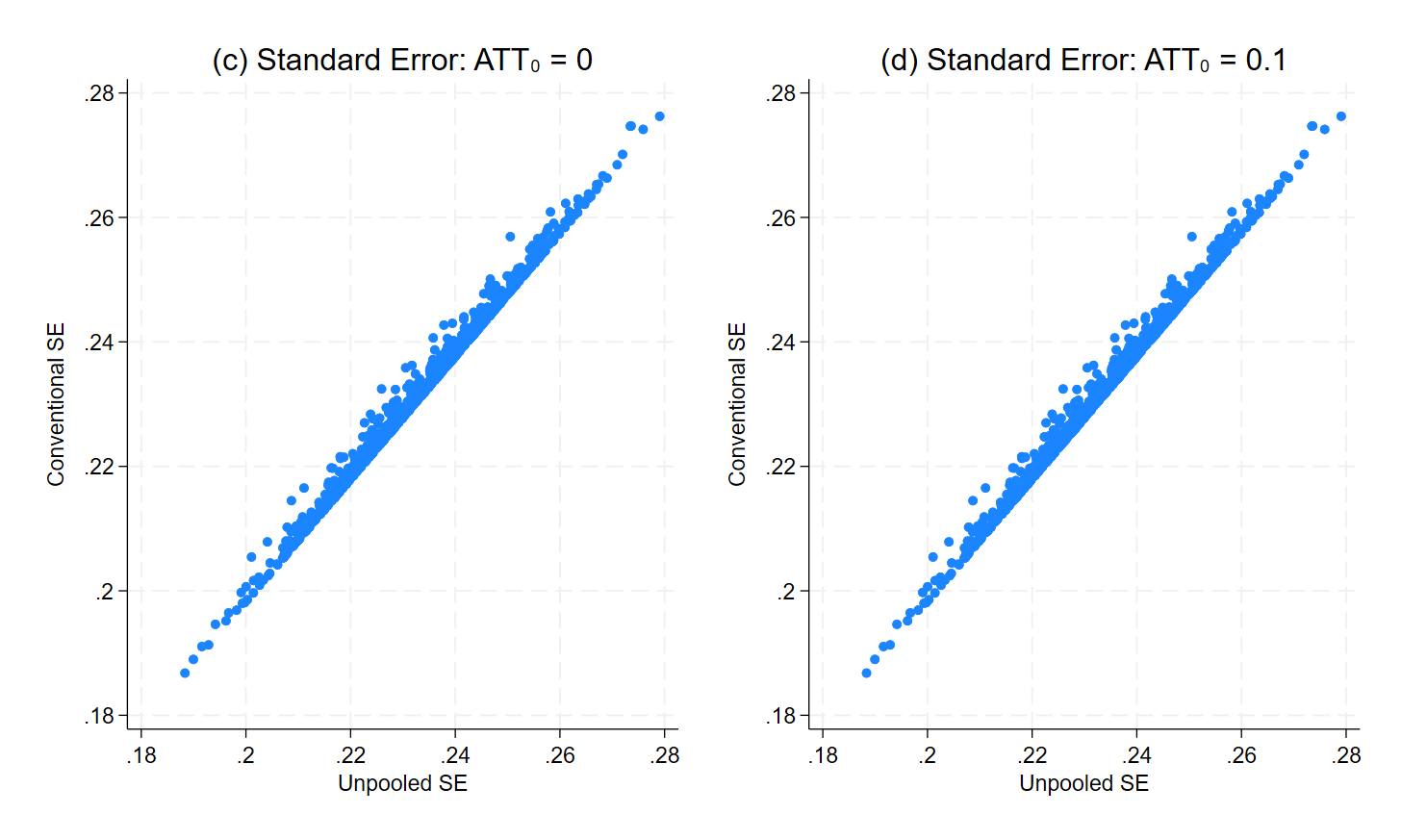}
    \caption{\centering Time-Invariant Covariate with Unequal Sample Sizes: Scatter Plots}
    \label{fig:InvarCov}
\end{figure}

\FloatBarrier

We hypothesize that the standard errors are not \textit{numerically equal} due to small sample sizes. To verify that the two methods converge to the true standard error as sample size increases, we increase the sample sizes and check if the MSE's converge to the true value. The results are shown in Figure \ref{fig:MSE_SE_Invar}. The MSEs on the y-axis are plotted against the sample sizes on the x-axis. The MSEs between the UN-DID and the true SE are illustrated by the red lines in Figure \ref{fig:MSE_SE_Invar}. On the same axis, we plot the MSEs comparing the conventional DID to the true SE (the blue lines in Figure \ref{fig:MSE_SE_Invar}) and the MSEs between the conventional DID and UN-DID methods (the green lines in Figure \ref{fig:MSE_SE_Invar}). Panels (a) and (b) showcase the MSE of standard errors for Case 1 and Case 2, respectively with time-invariant covariates. We observe that standard error MSEs decrease with larger sample sizes, confirming that both the conventional and UN-DID methods converge to the true values of the standard errors as sample size increases. The blue lines (which illustrate the MSEs for the conventional DID vs. true values of SE) are not visible, as they lie underneath the red line (MSE for the UN-DID vs. true values of SE). This also shows the close equivalence between the conventional DID and UN-DID SEs with time-invariant covariates.  Figure \ref{fig:MSE_SE_Invar} also illustrates a near zero MSE between the conventional DID and UN-DID methods, indicating the equivalence between the two estimators. 

\begin{figure}[ht!]
    \centering
    \includegraphics[width=0.8\textwidth]{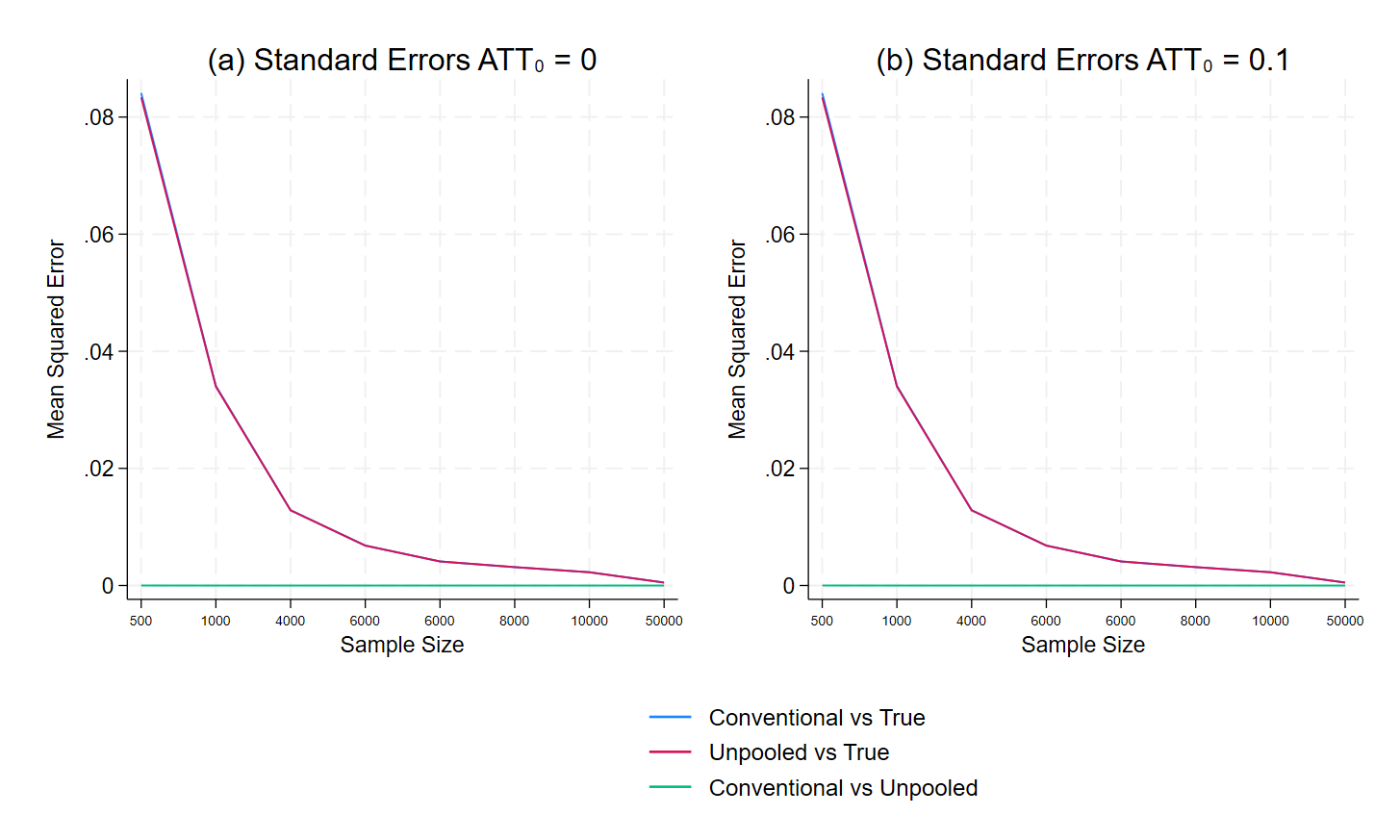}
    \caption{\centering Time-Invariant Covariate with Unequal Sample Sizes: Mean Squared Errors}
    \label{fig:MSE_SE_Invar}
\end{figure}

\FloatBarrier

\subsection{Results: Time-Varying Covariates}

In this section, we examine two DGPs. Assumptions \eqref{assumption: binary}, \eqref{assumption: spt} and \eqref{assumption: na} are satisfied for both DGPs, whereas Assumption \eqref{assumption: stateinvariantccc} holds in the first DGP but is violated for the second. In Theorem \ref{lemma: equivtimevaryingcov}, we have shown that the conventional DID estimator and the UN-DID converge in probability to two distinct population parameters $ATT^{Conditional}$ and $ATT^{UNDID}$ respectively. However, Theorem \eqref{lemma: CequalUPCCC} demonstrates that when Assumption \eqref{assumption: stateinvariantccc} holds, the two population parameters are equivalent (not numerically equal). Therefore, both estimators are unbiased when Assumption \eqref{assumption: stateinvariantccc} holds. In contrast, Theorem \ref{lemma: RPeqUNDID} shows that the two population parameters are not equivalent when Assumption \eqref{assumption: stateinvariantccc} is violated. In this section, we show that the conventional regression is biased when Assumption \eqref{assumption: stateinvariantccc} is violated, whereas the UN-DID estimator is unbiased using a kernel density plot. The results are shown in Figure \eqref{fig: noCCC}. In Panel (a) and Panel (c), both the UN-DID and the conventional estimators are centered around the true value of 0 when Assumption \eqref{assumption: stateinvariantccc} holds. However, when Assumption \eqref{assumption: stateinvariantccc} is violated, we observe that the kernel density for the UN-DID (Panel (b)) is centered around the true value of 0, but the kernel density of the conventional estimator (Panel (d)) is centered roughly around 16. This shows that the UN-DID estimator is unbiased when Assumption \eqref{assumption: stateinvariantccc} is violated, but the conventional estimator is biased. We observe a similar pattern when the true effect is not 0 (figure not shown). 

\begin{figure}[ht!]
    \centering
    \includegraphics[width=1\textwidth]{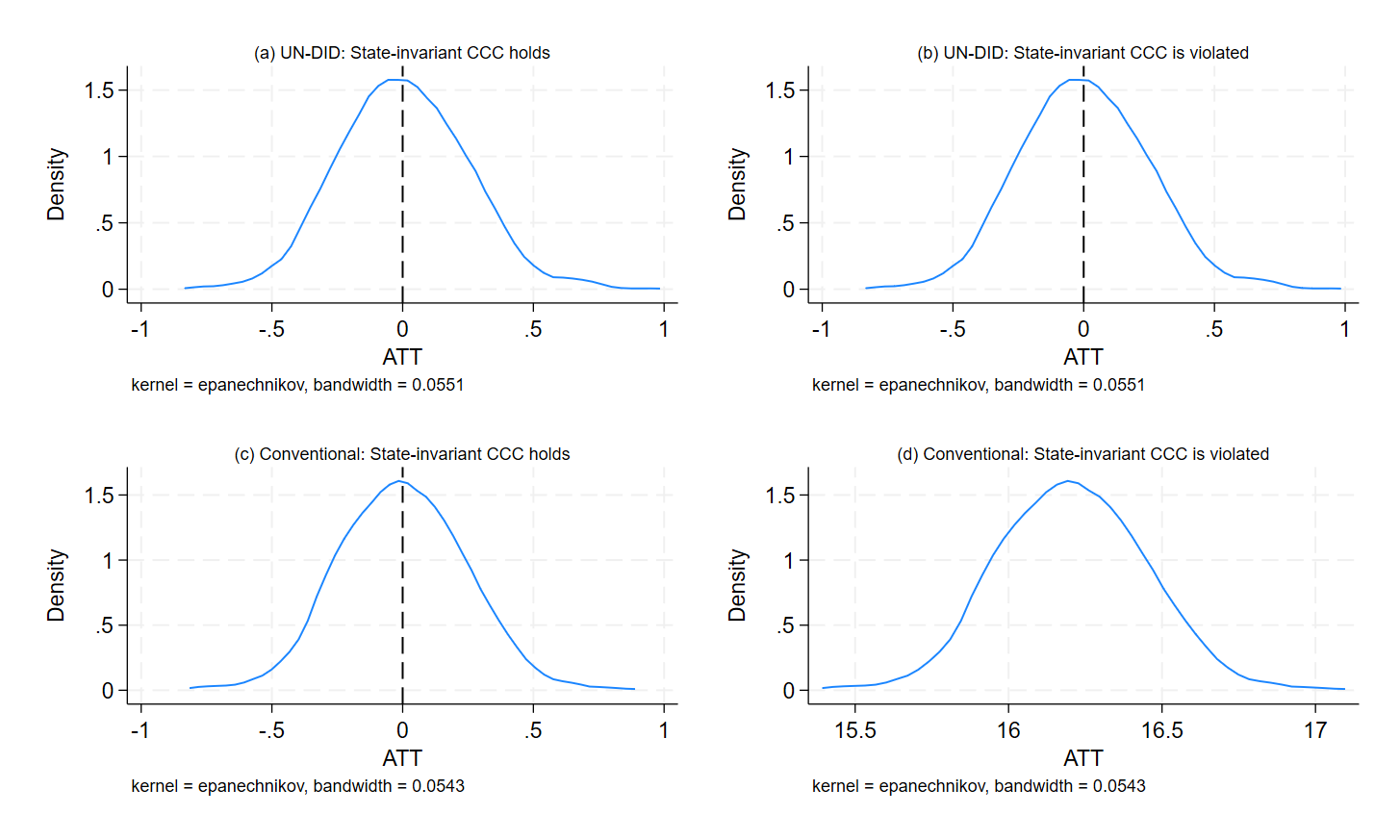}
    \caption{\centering Time-Varying Covariate with Unequal Sample Sizes: Kernel Density}
    \label{fig: noCCC}
\end{figure}

\FloatBarrier

From the simulation study, we observe that both the conventional and UN-DID estimators are unbiased in DGPs without covariates, with time invariant covariates or when time-varying covariates satisfy Assumption \eqref{assumption: stateinvariantccc}. However, when time varying covariates are used and Assumption \eqref{assumption: stateinvariantccc} is violated, the conventional estimator is biased, but the UN-DID remains unbiased. Additionally, the standard errors of both methods converge to the true standard errors as sample size increases. 

\section{Empirical Examples} \label{sec:examples}

We now consider two empirical examples, one with common adoption timing and one with staggered adoption. We 
include these examples to highlight the properties of the UN-DID and conventional DID methods. In contrast to our Monte Carlo simulations, we do not know the true effects of the policies we analyze here. Using survey data that is pooled across US states, we can estimate the ATT using both conventional DID and UN-DID (treating the data as if it were siloed by state), and then compare the ATTs and SEs estimated using these two approaches. Our purpose here is to compare the two methods with real-world data - not to answer the substantive questions underlying the specific regressions.  Accordingly, we diverge somewhat from the designs in the original analyses, and our analysis falls well short of complete empirical analyses of the substantive treatment effects.   

\subsection{The Impact of Medical Marijuana Legalization on Body Mass Index}
We revisit the question asked in \cite{sabia2017effect} of how access to medical marijuana impacts body mass index.  The original study investigates the role of medical marijuana laws in 21 US states and finds that access to medical marijuana decreases the likelihood of being obsese.  To assess the performance of UN-DID in a 2$\times$T setting with common treatment timing, we explore the impact of the District of Columbia's (DC) medical marijuana law, using New York State as the control.  

Data come from the Behavioral Risk Factor Surveillance System (BRFSS). This repeated cross-sectional survey is pooled across US states but can also be analyzed as if the data were siloed by state.  We use data from 2006--2013 from New York and DC, which enacted and enforced their medical marijuana law in 2010. We therefore have a $2 \times 8$ design with an equal number of pre-treatment and post-treatment years.  We restrict our attention to respondents aged 25 or older, resulting in a final sample of 84,783 observations.

With these data, we estimate treatment effects using four different models: a conventional DID regression and UN-DID, each with and without controls.  The control variables are a binary variable female and categorical variables for age, race, and education, all of which are time-varying covariates in a repeated cross-sectional dataset.  The UN-DID estimates are calculated by artificially siloing the data by state.  The conventional DID model is shown in Equation \eqref{eq:maryj}.    

\begin{equation}
\footnotesize
    \label{eq:maryj}
    \text{BMI}_{ist} =  \beta_0 + \beta_1\text{post}_{t}  + \beta_2\text{treat}_{s} + \beta_1\text{did}_{st} + \delta_1\text{female}_{ist} + \delta_2 \text{RACE}_{ist} + \delta_3 \text{AGE}_{ist} + \delta_4 \text{EDUC}_{ist} +  \epsilon_{ist}.
\end{equation}

We also calculate the UN-DID ATT by first estimating the following model separately for New York and DC. 

\begin{equation}
\footnotesize
    \label{eq:maryjundid}
    \text{BMI}_{ist} =  \beta_1 \text{pre}_{t} + \beta_2\text{post}_{t} + \delta_1\text{female}_{ist} + \delta_2 \text{RACE}_{ist} + \delta_3 \text{AGE}_{ist} + \delta_4 \text{EDUC}_{ist} +  \epsilon_{ist}.
\end{equation}

Table \ref{tab:maryj} presents the ATT and standard error estimates.  As expected, without covariates the conventional DID and UN-DID yield the same estimates: an estimated ATT of 0.2560 and a standard error of 0.0853. When we add covariates, the estimated coefficient drops to 0.1322 for the conventional DID and 0.1204 for UN-DID. These estimates are neither substantively nor statistically different from one another. The small difference between them is expected given that the models include 25 covariates which are restricted to common slopes with conventional DID and allowed to vary as silo-specific slopes with UN-DID.  The standard errors are also slightly different, but substantively equivalent: 0.0807 for the conventional DID and 0.0811 for UN-DID.

This empirical example demonstrates the strong performance of UN-DID relative to conventional DID in a 2$\times$T setting with common or single treatment timing. We reiterate that our effect estimates in this simplified example should not be interpreted as responding to the substantive question. However, this analysis using real-world, artificially siloed data demonstrates that UN-DID can accurately estimate the ATT and SE as well as conventional DID, both with and without covariates.

\begin{table} 
\begin{center}
    \caption{Effect of DC's Medical Marijuana Legalization on BMI}

\begin{tabular}{lcccc}
\hline
             & Conventional DID   & UN-DID  & Conventional DID & UN-DID \\
\hline
\hline
Coefficient      & 0.2560          & 0.2560 & 0.1322       & 0.1204 \\
Standard error     & 0.0853          & 0.0853 & 0.0807       & 0.0811 \\
\hline
Covariates   &       No        &  No    &  Yes         & Yes   \\
\hline
\end{tabular}
\label{tab:maryj}
\end{center}

\end{table}

\subsection{The Impact of Merit Scholarships on College Attendance}
To highlight the use of UN-DID in a staggered adoption setting, we revisit the analysis of merit scholarships from \cite{conley_2011}.  These authors estimate the average impact of 10 state-level merit scholarship programs with different adoption dates. These scholarships vary in their design, generally awarding scholarships to high-achieving high school students to enroll in a college within their home state (see \cite{deming_2010} and the references therein for details).  

Data come from the Current Population Survey (CPS), another repeated cross-sectional survey that is pooled across US states. There are 51 ``states" (including DC) and 12 years (1989-2000) in the dataset, for a total of 42,161 individual-level observations. We estimate the ATT of a state's adoption of a merit scholarship program on the likelihood of a student in that state graduating college.  We first estimate this effect individually for each of the 10 treated states, and then combine those for an aggregate ATT.  This allows us to compare 11 distinct treatment effect estimates using both UN-DID and conventional DID.  

Ten states adopt a merit scholarship across varying adoption dates. Four states adopt the scholarship in 1991, 1993, 1996, and 1999,  respectively, while two states each adopt in 1997, 1998, and 2000. For each treated state, one control state was chosen from the set of  41 untreated states to roughly match the number of treated and untreated observations in each model. We use only one treated state to highlight the properties of the UN-DID estimator rather than to estimate the best possible counterfactuals. We estimate the treatment effects both with and without covariates.  Specifically, using only data from treated state $T$ and control state $C$, we estimate the model:
\begin{equation}
    \label{eq:merit}
    \text{college}_{ist} =  \beta_1\text{post}_{t}  + \beta_2\text{treat}_{s} + \beta_1\text{did}_{st} + \delta_1\text{black}_{ist} + \delta_2 \text{asian}_{ist} + \delta_3\text{male}_{ist}  +  \epsilon_{ist}.
\end{equation}

The coefficients $\delta_1, \delta_2$, and $\delta_3$ are estimated only when the covariates are included.  The remaining
variables are defined the standard way: $treat_s$ is set equal to 1 for observations in state $T$ and is set equal to 0 for observations in state $U$. Similarly, $post_t$ is set equal to 1 for years greater than and equal to the year when state $T$ first offered a merit scholarship. Finally, $did_{st}$ is the product of   $treat_s$ and  $post_t$. We estimate treatment effects, or $\widehat{did}_{st}$, and standard errors using the pooled dataset.  

We then artificially silo the data to estimate treatment effects and standard errors using UN-DID.  Specifically, for each treated  silo we estimate 

\begin{equation}
    \label{eq:meritun}
    \text{college}_{ist} =  \beta_1\text{pre}_{st}  + \beta_2\text{post}_{st} + \delta_1\text{black}_{ist} + \delta_2 \text{asian}_{ist} + \delta_3\text{male}_{ist}  +  \epsilon_{ist} \text{ for } s \in (C,T).
\end{equation}

Here, $\text{pre}_{t}$ is equal to one in the years before the merit program was adopted, and equal to zero afterwards.  Conversely, $\text{post}_{t}$ is equal to zero in the years before the merit program was adopted, and equal to one afterwards.  We estimate the same regression for the control state used for each treated state.  We then calculate the treatment effect using Equation \eqref{equation: undidattmain} and the standard error using Equation \eqref{equation: undidse}.  

The results are found in Table \ref{tab:merit}. Panel A shows the estimates for the models without covariates.  The treatment effect and standard error estimates are identical to the fourth decimal place using conventional DID with the pooled data or UN-DID with the unpooled data. These results match the simulation results presented in Section \ref{sec:nocovar}.

Panel B of Table \ref{tab:merit} shows the treatment effect estimates conditional on covariates.  Here, some of the individual state treatment effect estimates are very similar using conventional DID and UN-DID.  For instance, for state 88, the conventional DID effect estimate is 0.1811 and the UN-DID estimate is 0.1819.  However, some of the estimates are not as close. The estimated treatment effect for state 61 is 0.0370 using conventional DID and 0.0314 with UN-DID. Even though the difference between the estimates is larger, they are not statistically different and may not be substantively different from each other. The estimates of the standard errors differ
more between conventional DID and UN-DID. Again, these differences are not large enough to impact inference in this example.  These findings are consistent with the results in Section \ref{section: MC}.

Lastly, we estimate average treatment effects using data from  all groups and all necessary years per group.  We do this 
using two methods, both including covariates. We estimate the ATT using the \texttt{CSDID} command, which implements the \cite{callaway2021difference} procedure. We again pretend that each state's data are siloed, and estimate the ATT using UN-DID.  We do this using our \texttt{UNDID} command, which excludes forbidden comparisons. Simplifying the procedure, we estimate an ATT for each post treatment period for each treated state.  This is done by estimating a model like in Equation \eqref{eq:meritun} for each treated state, but using only one pre-treatment and one post-treatment year.  Using those same years, we estimate the same model for each of the control states.  The state-specific first differences $\beta_2 - \beta_1$ are then calculated, and added to a table, like Table \ref{tab:stagmat} as described in Section \ref{sec:stag}.  We can then estimate each ATT$({s,t})$ by running a properly specified regression like the one in Equation \eqref{eq:diffreg}.   For both CSDID and UN-DID we estimate one ATT by taking a simple average of the ATT$({s,t})$ (or ATT$({g,t})$) and one by averaging over the $s$ (or $g$) groups. Additionally, we calculate cluster-robust standard errors, assuming state level clustering, for all ATT estimates.  The UN-DID standard errors use the jackknife discussed in Section \ref{sec:cluster}.

The aggregate ATT estimates are shown in Panel C of Table \ref{tab:merit}.  We find the simple average ATT estimates from the two procedures to be quite similar.  Using simple weights, the UN-DID ATT is 0.0485 and the CSDID ATT is 0.0464.  The two standard errors are also very close to one another.

The group-averaged ATTs are not as similar to one another, but still close.  The UN-DID ATT is 0.0459, remaining close to the simple average ATT. 
The standard errors are also similar at 0.0110 for UN-DID and 0.0133 for CSDID.  The CSDID standard errors are cluster-jackknife standard errors using the \texttt{Stata} package \texttt{csdidjack}.\footnote{This is based on a soon to be released paper by MacKinnon, Nielsen, Webb, and Karim. The package can be downloaded here \url{https://github.com/liu-yunhan/csdidjack}.}  Using group weights, the UNDID ATT is 0.0459 while the  CSDID ATT drops to 0.0339.  The standard errors are also similar, with the UNDID standard error equal to 0.0188 and the CSDID one equal to 0.0211.  Note that the CSDID ATT is no longer significant at the 10\% level.  

Finally, we use the DID-INT estimator to estimate ATTs.  These results are very comparable to the the UN-DID results.  With group aggregation, the UN-DID estimate is 0.0459 and the DID-INT estimate (allowing for state level CCC violations) is 0.0458.  The standard errors are more different, where the DID-INT standard error is 0.0084 compared to the UN-DID 0.0188. Results are very similar with simple aggregation. We also investigate the sensitivity to the remaining CCC assumptions.  These results can be found in Appendix \ref{sec:merit_ccc}.  Generally, the ATTs broadly agree with one another, with the smallest being 0.041 when allowing for time varying parameters, and the largest is the 0.0511 when allowing for two-way varying parameters.

\begin{table} 
\caption{Estimates from Merit Example}
\begin{tabular}{lclrrcc}
 \multicolumn{3}{l}{Panel A: Without Covariates}   & & & &        \\
\hline 
Treated & Control & Year & DID    & UN-DID   & DID SE & UN-DID SE \\
\hline
71      & 73      & 1991 & $-0$.0242 & $-0$.0242 & 0.0755  & 0.0754  \\
58      & 46      & 1993 & 0.1386  & 0.1386  & 0.0590  & 0.0590  \\
64      & 54      & 1996 & 0.0182  & 0.0182  & 0.0619  & 0.0619  \\
59      & 23      & 1997 & $-0$.0073 & $-0$.0073 & 0.0362  & 0.0362  \\
85      & 86      & 1997 & 0.1318  & 0.1318  & 0.0601  & 0.0601  \\
57      & 32      & 1998 & $-0$.0291 & $-0$.0291 & 0.0747  & 0.0747  \\
72      & 55      & 1998 & $-0$.0645 & $-0$.0645 & 0.0658  & 0.0658  \\
61      & 62      & 1999 & 0.0335  & 0.0335  & 0.0816  & 0.0816  \\
34      & 33      & 2000 & $-0$.0305 & $-0$.0305 & 0.0666  & 0.0666  \\
88      & 11      & 2000 & 0.1595  & 0.1595  & 0.1337  & 0.1337  \\
\hline
 \multicolumn{3}{l}{Panel B: With Covariates}   & & & &        \\
 \hline
Treated & Control & Year & DID    & UN-DID   & DID SE & UN-DID SE \\
\hline
71      & 73      & 1991 & $-0$.0141 & $-0$.0137 & 0.0750  & 0.0871  \\
58      & 46      & 1993 & 0.1501  & 0.1523  & 0.0582  & 0.0734  \\
64      & 54      & 1996 & 0.0077  & 0.0028  & 0.0614  & 0.0820  \\
59      & 23      & 1997 & $-0$.0179 & $-0$.0139 & 0.0360  & 0.0441  \\
85      & 86      & 1997 & 0.1309  & 0.1277  & 0.0597  & 0.0726  \\
57      & 32      & 1998 & $-0$.0480 & $-0$.0417 & 0.0733  & 0.0876  \\
72      & 55      & 1998 & $-0$.0704 & $-0$.0685 & 0.0655  & 0.0793  \\
61      & 62      & 1999 & 0.0370  & 0.0314  & 0.0806  & 0.0918  \\
34      & 33      & 2000 & $-0$.0282 & $-0$.0303 & 0.0655  & 0.0709  \\
88      & 11      & 2000 & 0.1811  & 0.1819  & 0.1335  & 0.1428 \\
\hline
 \multicolumn{3}{l}{Panel C: Full Sample}   & & & &        \\
 \hline
 & UN-DID & UN-DID & CSDID & CSDID & DID-INT & DID-INT \\
Agg.     & ATT    & SE     & ATT   & SE    & ATT     & SE      \\
simple & 0.0485       & 0.0110       & 0.0464     & 0.0133 & 0.0464   & 0.0102    \\
group  & 0.0459     & 0.0188       & 0.0339        & 0.0211 & 0.0458  & 0.0084  \\

 \hline
 \hline
\end{tabular}
\label{tab:merit}
\end{table}

\clearpage 

\section{Conclusion} \label{sec:concl}

Difference-in-differences is commonly used to estimate treatment effects, particularly for policies or other interventions that vary at the jurisdictional (e.g., state, country) level. However, this approach was previously infeasible in settings where the data are siloed by jurisdiction. In this paper, we propose the unpoolable DID (UN-DID) method, assess its performance, and demonstrate its utility in settings with siloed data. We also formalize the assumption of data poolability in the DID context, which was previously implied but not directly articulated. Specifically, we show that UN-DID is an unbiased and reliable extension to the conventional regression-based DID method in settings with unpoolable data. 

We began by presenting analytical proofs to demonstrate the equivalence of UN-DID and conventional DID with no covariates and time invariant covariates. With time varying covariates, we show that the two estimators converge in probability to two different population parameters. When the state-invariant CCC assumption holds and time varying covariates are used, both estimators remain unbiased, but are no longer equivalent. However, when the state-invariant CCC is violated, the conventional estimator becomes biased, while the UN-DID remains unbiased. 

We then extended our assessment to simulated panel data using a series of Monte Carlo simulations (incorporating a range of DGPs and sample sizes). These experiments permitted a direct assessment of UN-DID’s performance and unbiasedness compared to both conventional DID and the true treatment effects. Across all assessments, we find that the estimated ATT from UN-DID remains unbiased. However, the conventional estimator is biased when time-varying covariates are used and the state-invariant CCC is violated. The estimated standard errors are also not numerically equal, but converge to the true value as the sample size increases. These findings suggest that UN-DID is a valid approach to DID when data are unpoolable. Finally, we compared the performance of UN-DID and conventional DID using pooled data from existing empirical examples, treating these data as unpoolable. 

Our work contributes to a growing body of literature developing DID methods for use in increasingly complex policy and data environments. UN-DID offers an important methodological tool for researchers assessing multi-jurisdictional (or cross-jurisdictional) policy effects. By enabling the use of DID estimation in siloed data settings, UN-DID effectively opens the door to new research questions and new evidence on policy impacts. 

The UN-DID software packages in  \texttt{Stata}, \texttt{R}, \texttt{Python}, and \texttt{Julia} will facilitate the practical application of UN-DID by a range of different end-users; the accompanying user guide will further support this aim while additionally showcasing the use of UN-DID in a truly siloed setting. Taken together, we expect our development of UN-DID (both as an analytical method and associated software packages) to catalyze innovative cross-jurisdictional policy evaluation in siloed data settings, thereby eliminating a common barrier for an otherwise widely used and powerful method.

\bibliography{mybib.bib}

\clearpage

\appendix 

\setcounter{section}{0} 
\setcounter{equation}{0}  
\setcounter{assumption}{0}  

\section[ATT Equivalence (no covariates)]{Without covariates, the ATT from the conventional and the UN-DID regressions are equivalent}
\label{ssection: Lemma1}

In this section, we show that the UN-DID and the conventional DID recover the same estimate of the ATT in the absence of covariates (Theorem \eqref{lemma: equalATT}). Without covariates, the ATT under strong parallel trends and no anticipation is as follows:

\begin{theorem}[ATT under stong parallel trends and no anticipation]
\label{theorem: attstrongpt}
    \begin{equation}
    \label{equation: attstrongpt}
    \footnotesize
    \begin{gathered}
        ATT = \biggl[E[Y_{i,s,t}|D_{s} = 1,P_t = 1] - E[Y_{i,s,t}|D_{s} = 1, P_{t} = 0]\biggr] - \biggl[E[Y_{i,s,t}|D_{s} = 0, P_{t} = 1] - E[Y_{i,s,t}|D_{s} = 0, P_t = 0]\biggr].
    \end{gathered}
    \end{equation}
\end{theorem}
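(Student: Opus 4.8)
The plan is to run the standard identification argument for the canonical $2\times 2$ design, invoking exactly Assumptions \eqref{assumption: binary}, \eqref{assumption: na}, and the strong parallel trends Assumption \eqref{assumption: pt}, together with the fact that the control group is the never-treated group in this design. I start from the definition of the estimand, $ATT = E[Y_{i,s,t}(1) - Y_{i,s,t}(0)\mid D_s = 1, P_t = 1]$, the average effect among the treated in the post period. Because treatment is binary (Assumption \eqref{assumption: binary}) and realized outcomes coincide with the potential outcome indexed by realized treatment status, the treated-group post-period mean of \emph{observed} outcomes equals the corresponding mean of treated potential outcomes, so $ATT = E[Y_{i,s,t}\mid D_s = 1, P_t = 1] - E[Y_{i,s,t}(0)\mid D_s = 1, P_t = 1]$. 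The only unobserved quantity is the counterfactual $E[Y_{i,s,t}(0)\mid D_s = 1, P_t = 1]$.

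Next I express that counterfactual through observables. Rearranging the strong parallel trends condition \eqref{equation: pt} gives
\[
E[Y_{i,s,t}(0)\mid D_s = 1, P_t = 1] = E[Y_{i,s,t}(0)\mid D_s = 1, P_t = 0] + E[Y_{i,s,t}(0)\mid D_s = 0, P_t = 1] - E[Y_{i,s,t}(0)\mid D_s = 0, P_t = 0].
\]
I then eliminate the remaining potential-outcome means: no anticipation (Assumption \eqref{assumption: na}) gives $E[Y_{i,s,t}(0)\mid D_s = 1, P_t = 0] = E[Y_{i,s,t}\mid D_s = 1, P_t = 0]$ for the treated group before treatment, and for the control group — which is untreated in both periods — realized outcomes equal untreated potential outcomes throughout, so $E[Y_{i,s,t}(0)\mid D_s = 0, P_t = p] = E[Y_{i,s,t}\mid D_s = 0, P_t = p]$ for $p \in \{0,1\}$. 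Substituting these into the display and then back into the expression for $ATT$, and collecting terms, yields exactly Equation \eqref{equation: attstrongpt}.

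This result is the covariate-free specialization of Corollary \ref{theorem: att} with strong parallel trends in place of conditional parallel trends, so the argument could equivalently be cited from \cite{callaway2021difference}; I would nonetheless include the short derivation above for completeness. There is no substantive obstacle here — the only thing requiring care is the bookkeeping: making sure each potential-outcome mean that appears is converted to an observed mean by the \emph{appropriate} assumption (no anticipation for the treated pre-period, never-treated status for the control group in both periods) and that no extra assumption is used implicitly.
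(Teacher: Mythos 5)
Your derivation is correct and is exactly the standard identification argument the paper has in mind: the paper does not write out a proof of this theorem, instead stating it and deferring (as it does for the conditional analogue, Corollary \ref{theorem: att}) to the textbook argument in \cite{callaway2021difference}. Your steps --- consistency for the treated post-period mean, no anticipation for the treated pre-period, never-treated status for both control-group means, and strong parallel trends to eliminate the counterfactual --- reproduce that argument with the correct assumption invoked at each substitution, so there is nothing to flag.
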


The strong parallel trends is a more relaxed version of the conditional parallel trends assumption without covariates, and is stated below:

\begin{assumption}[Strong Parallel Trends]
\label{assumption: cpt}
    The evolution of untreated potential outcomes are the same between treated and control groups.
    \begin{align}
        \label{equation: spt}
        \begin{split}
           & \biggr[E[Y_{i,s,t}(0)|D_s = 1,P_t=1] - E[Y_{i,s,t}(0)|D_s = 1,P_t=0]\biggr] \\
              = & \ \biggr[E[Y_{i,s,t}(0)|D_s = 0,P_t=1] - E[Y_{i,s,t}(0)|D_s = 0, P_t= 0] \biggr].
        \end{split}
    \end{align}
\end{assumption}

In order to prove Theorem \eqref{lemma: equalATT}, we will first prove the following Lemmas:

\begin{lemma}[Estimate of the ATT from the conventional regression]
\label{lemma: conventionalnocov}
\begin{equation}
 \hat{\beta}_3 = \left( \overline{Y}_{D_i = 1, P_t = 1} - \overline{Y}_{D_i = 1, P_t = 0} \right) - \left( \overline{Y}_{D_i = 0, P_t = 1} - \overline{Y}_{D_i = 0, P_t = 0} \right).   
\end{equation}
\end{lemma}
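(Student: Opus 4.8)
The plan is to exploit the fact that the regression in Equation \eqref{equation: didsimple} without covariates is \emph{saturated} in the two binary variables $D_s$ and $P_t$: the four regressors $1$, $D_s$, $P_t$, and $D_s P_t$ span exactly the same column space as the four mutually exclusive and exhaustive cell indicators $\mathbf{1}\{D_s = d,\, P_t = p\}$ for $(d,p) \in \{0,1\}^2$. Consequently the OLS fitted value for any observation lying in cell $(d,p)$ equals the within-cell sample mean $\overline{Y}_{D_i = d, P_t = p}$. I would establish this either by the change-of-basis argument (each cell dummy is an explicit linear combination of $1, D_s, P_t, D_s P_t$ and conversely, so the two regressions produce identical fitted values) or, equivalently, by noting that regressing $Y$ on the four cell dummies with no intercept is separable across cells and is minimized by the cell means.

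Next I would pin down each coefficient by evaluating the estimated regression function $\hat\beta_0 + \hat\beta_1 D + \hat\beta_2 P + \hat\beta_3 DP$ at the four points $(d,p)$ and equating it to the corresponding cell mean. This yields the linear system
\begin{align*}
\hat\beta_0 &= \overline{Y}_{D_i = 0, P_t = 0}, \\
\hat\beta_0 + \hat\beta_1 &= \overline{Y}_{D_i = 1, P_t = 0}, \\
\hat\beta_0 + \hat\beta_2 &= \overline{Y}_{D_i = 0, P_t = 1}, \\
\hat\beta_0 + \hat\beta_1 + \hat\beta_2 + \hat\beta_3 &= \overline{Y}_{D_i = 1, P_t = 1}.
\end{align*}
Solving for $\hat\beta_3$ by back-substitution gives $\hat\beta_3 = \overline{Y}_{D_i = 1, P_t = 1} - \overline{Y}_{D_i = 1, P_t = 0} - \overline{Y}_{D_i = 0, P_t = 1} + \overline{Y}_{D_i = 0, P_t = 0}$, which is precisely the claimed double difference once the terms are regrouped. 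As an alternative one could invoke the Frisch--Waugh--Lovell theorem, partialling $1$, $D_s$, and $P_t$ out of both $Y$ and the interaction $D_s P_t$; this reduces to an algebraic identity among the four cell counts and cell sums, but it is messier than the saturation argument, so I would relegate it to a remark.

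The main obstacle is essentially bookkeeping rather than anything deep: making the step ``saturated model $\Rightarrow$ fitted values are cell means'' fully rigorous (the column-space equivalence, and the non-degeneracy condition that all four cells are non-empty, so that the design matrix has full rank and the four cell means are well defined). Once that is in place the remainder is elementary linear algebra. I would state the non-degeneracy of the four cells explicitly as a maintained assumption, consistent with the canonical $2\times 2$ setup already described in Section \ref{sec:theory}.
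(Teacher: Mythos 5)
Your proposal is correct, but it takes a genuinely different route from the paper. You prove the result by observing that the no-covariate regression is saturated: $\{1, D_s, P_t, D_sP_t\}$ spans the same column space as the four cell indicators, so the fitted values are the cell means, and $\hat\beta_3$ drops out of the resulting $4\times 4$ linear system by back-substitution. The paper instead applies the Frisch--Waugh--Lovell theorem directly: it partials $1$, $D_s$, $P_t$ out of the interaction $D_sP_t$, computes the four (cell-constant) residuals in terms of the auxiliary coefficients $\hat\alpha_0,\hat\alpha_1,\hat\alpha_2$, writes $\hat\beta_3$ via the OLS ratio formula, and then proves in a separate appendix (Appendix \ref{ssection: samplesizes}) that each weight such as $(1-\hat\alpha_0-\hat\alpha_1-\hat\alpha_2)/A$ equals the reciprocal of the corresponding cell count, which collapses the expression to the double difference of cell means. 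Your saturation argument is shorter and avoids that auxiliary computation entirely; its only overhead is the full-rank/non-empty-cell condition you correctly flag. What the paper's heavier FWL machinery buys is reusability: essentially the same residualization template is deployed again for the covariate cases (Lemmas \ref{lemma: conventionaltimevaryingatt} and \ref{lemma: undidtimevaryingatt}), where the model is no longer saturated and your change-of-basis shortcut would not apply. Both proofs are valid for the statement at hand.
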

\begin{proof}[Proof: Lemma \eqref{lemma: conventionalnocov}] We begin by re-writing Equation \eqref{equation: didsimple} without the $S$ subscripts. In Equation \eqref{equation: didsimple}, the $S$ subscript indicates that data is pooled from both treated and control groups. This implies that the conventional regression can be used only if data poolability assumption holds. For the proof of Lemma \eqref{lemma: conventionalnocov}, we only require Assumptions \eqref{assumption: binary} and \eqref{assumption:pooled}. Assumptions \eqref{assumption: spt} and \eqref{assumption: na} are only required to derive the estimand of the ATT shown in Theorem \eqref{theorem: att}.
\begin{equation}
\label{equation: didsimplewithouts}
    Y_{i,s,t} = \beta_0 + \beta_1 D_s + \beta_2 P_t + \beta_3 P_t * D_s + \epsilon_{i,s,t}.
\end{equation}

According to the Frisch--Waugh--Lowell (FWL) theorem, $\hat{\beta_3}$ will be the same as the coefficient of $(P_t*D_s - \widehat{P_t*D_s})$ from the regression shown in Equation \eqref{equation: didconventionalFWL}. Here, $(P_t*D_s - \widehat{P_t*D_s})$ are the residuals from the regression shown in Equation \eqref{equation: didconventionalFWLresiduals}.
\begin{equation}
\label{equation: didconventionalFWLresiduals}
    P_t * D_s = \alpha_0 + \alpha_1 P_t + \alpha_2 D_s + u_{i,s,t}.
\end{equation}
\begin{equation}
\label{equation: didconventionalFWL}
    Y_{i,s,t} = \widehat{\beta_3} (P_t*D_s - \widehat{P_t*D_i}) + u'_{i,s,t}. 
\end{equation}

From Equation \eqref{equation: didconventionalFWL}, we compute the residuals ($P_t * D_i - \widehat{P_t * D_i}$) for each observation. For a treated observation in the pre-intervention period, the residual is given by $- \hat{\alpha_0} - \hat{\alpha_2}$.  Conversely, for a treated observation in the post-intervention period, the residual becomes $1 - \hat{\alpha_0} - \hat{\alpha_1} - \hat{\alpha_2}$. Similarly, an untreated observation in the pre-intervention period has a residual of $- \hat{\alpha_0}$, while an untreated observation in the post-intervention period yields a residual of $- \hat{\alpha_0} - \hat{\alpha_1}$. The residuals for all individuals in their respective cohort are the same, as shown by \cite{de2020twott}. We then proceed to derive $\widehat{\beta}_3$ from Equation \eqref{equation: didconventionalFWL} by using the following OLS formula:
\begin{align}
\label{equation: didconventionalols}
\begin{split}
\widehat{\beta}_3= & \sum_i \sum_t \frac{Y_{i,s, t}\left(P_t*D_i - \widehat{P_t*D_s}\right)}{\left(P_t*D_s - \widehat{P_t*D_s}\right)^2}.
\end{split}
\end{align}
We can substitute the computed residuals into the numerator and simplify the expression from Equation \eqref{equation: didconventionalols} as outlined below:
\begin{align*}
    \begin{split}
        & \sum_{i \in \{ t \in pre, d \in {Treat} \}} Y_{i, t}^{pre,{Treat}}\left(-\widehat{\alpha}_0-\widehat{\alpha}_2\right) + \sum_{i \in \{ t \in post, d \in {Treat} \}} Y_{i, t}^{post,{Treat}} \left(1-\widehat{\alpha}_0-\widehat{\alpha}_1-\widehat{\alpha}_2\right) \\ & + \sum_{i \{ t \in pre, d \in {Control} \}} Y_{i, t}^{pre,{Control}}\left(1-\widehat{\alpha}_0-\widehat{\alpha}_1-\widehat{\alpha}_2\right) + \sum_{i \{ t \in post, d \in {Control} \}} Y^{post,{Control}}_{i, t}\left(-\widehat{\alpha}_0-\widehat{\alpha}_1\right).
    \end{split}
\end{align*}
Likewise, by substituting the residuals and simplifying, the denominator of Equation \eqref{equation: didconventionalols} can be expressed as
\begin{align*}
    \begin{split}
        A = & \sum_{i \in \{ t \in pre, d \in {Treat} \}} \left(-\widehat{\alpha}_0-\widehat{\alpha}_2\right)^2 + \sum_{i \in \{ t \in post, d \in {Treat} \}}\left(1-\widehat{\alpha}_0-\widehat{\alpha}_1-\widehat{\alpha}_2\right)^2 \\ & + \sum_{i \{ t \in pre, d \in {Control} \}}\left(-\widehat{\alpha}_0\right)^2 + \sum_{i \{ t \in post, d \in {Control} \}}\left(-\widehat{\alpha}_0-\widehat{\alpha}_1\right)^2.
    \end{split}
\end{align*}
The denominator is the sum of squared residuals from the regression in Equation \eqref{equation: didconventionalFWLresiduals}. Combining the numerator and the denominator, $\widehat{\beta}_3$ can be written as
\begin{align}
\label{equation: didconventionalbeta3}
\begin{split}
 \widehat{\beta}_3= & \frac{\left(1-\widehat{\alpha}_0-\widehat{\alpha}_1-\widehat{\alpha}_2\right)\sum_{i\{ t \in post, d \in {Treat} \}} Y_{i, t}^{post,{Treat}}}{A} - \frac{\left(\widehat{\alpha}_0+\widehat{\alpha}_2\right)\sum_{i \{ t \in pre, d \in Treat \}} Y_{i, t}^{pre,{Treat}}}{A} \\ & - \frac{\left(\widehat{\alpha}_0+\widehat{\alpha}_1\right)\sum_{i \in \{ t \in post, d \in {Control} \}} Y_{i, t}^{post,{Control}}}{A} + \frac{\left(-\widehat{\alpha}_0\right) \sum_{i \in \{ t \in pre, d \in {Control} \}} Y_{i, t}^{pre,{Control}}}{A}.
\end{split}
\end{align}
In Appendix \ref{ssection: samplesizes}, we show that  $\frac{\left(1-\widehat{\alpha}_0-\widehat{\alpha}_1-\widehat{\alpha}_2\right)}{A}$ is the reciprocal of the number of observations in the treated group and the post-intervention period ($N_{1,1})$ and $\frac{\left(\widehat{\alpha}_0+\widehat{\alpha}_2\right)}{A}$ is the reciprocal of the number of observations in the treated group in the pre-intervention period ($N_{1,0}$). Similarly, $\frac{\left(\widehat{\alpha}_0 + \widehat{\alpha}_2\right)}{A}$ is the reciprocal of the number of observations in the control group in the post-intervention period $(N_{0,1})$  and $\frac{\left(-\widehat{\alpha}_0\right)}{A}$ is reciprocal of the number of observations in the control group in the pre-intervention period ($N_{0,0})$. Therefore, Equation \eqref{equation: didconventionalbeta3} can be further simplified as:

\begin{align*}
\begin{split}
\hat{\beta}_3= \frac{1}{N_{1,1}}\sum_{i \in \{ t \in post, d \in {Treat} \}} Y_{i, t}^{post,{Treat}} - \frac{1}{N_{1,0}} \sum_{i \in \{ t \in pre, d \in {Treat} \}} Y_{i, t}^{pre,{Treat}} \\ - \frac{1}{N_{0,1}} \sum_{i \in \{ t \in post, d \in {Control} \}} Y_{i, t}^{post,{Control}} + \frac{1}{N_{0,0}} \sum_{i \in \{ t \in pre, d \in {Control} \}} Y_{i, t}^{pre,{Control}}.
\end{split}
\end{align*}

In simpler notation, the above expression can be re-written as: \begin{align}
\label{equation: didconventionalbeta3simpler}
\begin{split}
\hat{\beta}_3 = & \left( \overline{Y}_{D_s = 1, P_t = 1} - \overline{Y}_{D_s = 1, P_t = 0} \right) - \left( \overline{Y}_{D_s = 0, P_t = 1} - \overline{Y}_{D_s = 0, P_t = 0} \right).
\end{split}
\end{align}
\end{proof}

\begin{lemma}
\label{lemma: UNDIDnocov}
\begin{equation}
(\widehat{\lambda}_2^{Treat} - \widehat{\lambda}_1^{Treat}) - (\widehat{\lambda}_2^C - \widehat{\lambda_1}^C) = \left( \overline{Y}_{D_s = 1, P_t = 1} - \overline{Y}_{D_s = 1, P_t = 0} \right) - \left( \overline{Y}_{D_s = 0, P_t = 1} - \overline{Y}_{D_s = 0, P_t = 0} \right).   
\end{equation}
\end{lemma}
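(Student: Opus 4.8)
The plan is to compute the four UN-DID coefficients $\widehat{\lambda}_1^{Treat}$, $\widehat{\lambda}_2^{Treat}$, $\widehat{\lambda}_1^{Control}$, $\widehat{\lambda}_2^{Control}$ explicitly by ordinary least squares and show that each one equals the relevant subgroup sample mean of the outcome. The key observation is that the UN-DID regressions in Equations \eqref{equation: undid1} and \eqref{equation: undid2}, specialized to the no-covariate case, take the form $Y_{i,t} = \lambda_1 \, pre_t + \lambda_2 \, post_t + \nu_{i,t}$ with no intercept, and $pre_t$ and $post_t$ are complementary indicator variables ($pre_t = 1 - post_t$) that are mutually orthogonal: no observation has both equal to one. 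This makes the design matrix for each silo-specific regression block-diagonal (in fact the Gram matrix is diagonal), so the OLS estimates decouple. First I would write down the normal equations for the treated silo: orthogonality of $pre_t$ and $post_t$ gives $\widehat{\lambda}_1^{Treat} = \left(\sum_{i: t \in pre} Y^{Treat}_{i,t}\right) / N^{Treat}_{pre}$ and $\widehat{\lambda}_2^{Treat} = \left(\sum_{i: t \in post} Y^{Treat}_{i,t}\right) / N^{Treat}_{post}$, i.e.\ $\widehat{\lambda}_1^{Treat} = \overline{Y}_{D_s=1, P_t=0}$ and $\widehat{\lambda}_2^{Treat} = \overline{Y}_{D_s=1, P_t=1}$. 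The identical argument applied to Equation \eqref{equation: undid2} yields $\widehat{\lambda}_1^{Control} = \overline{Y}_{D_s=0, P_t=0}$ and $\widehat{\lambda}_2^{Control} = \overline{Y}_{D_s=0, P_t=1}$.

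Having established these four identifications, the second step is pure substitution: plug the four sample means into the left-hand side $(\widehat{\lambda}_2^{Treat} - \widehat{\lambda}_1^{Treat}) - (\widehat{\lambda}_2^{Control} - \widehat{\lambda}_1^{Control})$, which immediately becomes
\begin{equation*}
\left( \overline{Y}_{D_s = 1, P_t = 1} - \overline{Y}_{D_s = 1, P_t = 0} \right) - \left( \overline{Y}_{D_s = 0, P_t = 1} - \overline{Y}_{D_s = 0, P_t = 0} \right),
\end{equation*}
which is exactly the claimed right-hand side. This completes the proof.

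I do not anticipate a serious obstacle here; the only thing to be careful about is justifying why the no-intercept specification is the right one and why it produces subgroup means rather than something more complicated. The paper has already flagged (after Equation \eqref{equation: undid2}) that these regressions omit the constant precisely so that nothing is dropped for multicollinearity, and the key structural fact — that regressing on a full set of mutually exclusive and exhaustive group dummies with no intercept returns the within-group means — is standard. The mild bookkeeping point is keeping the subgroup sample-size notation consistent with the notation used in Lemma \eqref{lemma: conventionalnocov} (there the counts were written $N_{1,1}, N_{1,0}, N_{0,1}, N_{0,0}$), so I would match that convention. If one prefers, the same conclusion also follows by applying the Frisch–Waugh–Lovell argument exactly as in the proof of Lemma \eqref{lemma: conventionalnocov}, but since the Gram matrix is already diagonal the direct OLS computation is shorter and more transparent.
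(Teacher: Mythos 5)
Your proposal is correct, but it takes a genuinely different and more elementary route than the paper. The paper first reparametrizes each silo regression into an intercept form $Y^j_{i,t} = \gamma_0^j + \gamma_1^j post_t^j + \nu^j_{i,t}$ (Equation \eqref{equation: undidnocovmodified}, with the equivalence $(\widehat{\lambda}_2^j - \widehat{\lambda}_1^j) = \widehat{\gamma}_1^j$ justified separately in Appendix \ref{section: transformedequation}), and then applies the Frisch--Waugh--Lovell theorem: it partials $post_t^j$ on a constant, computes the residuals, and works through the OLS ratio formula, including an argument that the resulting weights are reciprocals of the subgroup sample sizes, to arrive at $\widehat{\gamma}_1^j = \overline{Y}^{post,j} - \overline{Y}^{pre,j}$. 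You instead solve the normal equations of the original no-intercept regression directly, observing that $pre_t^j$ and $post_t^j$ are mutually exclusive and exhaustive indicators so the Gram matrix is diagonal and each $\widehat{\lambda}$ is exactly a subgroup mean; the lemma then follows by substitution. Your argument is shorter, avoids the auxiliary reparametrization lemma and the residual bookkeeping entirely, and identifies each individual coefficient (not just their difference) with a cell mean, which is slightly more informative. What the paper's FWL route buys is uniformity: the same partialling-out machinery is reused verbatim in the covariate case (Lemma \eqref{lemma: undidtimevaryingatt}), where the Gram matrix is no longer diagonal and your direct computation would not go through unchanged. Both proofs are valid for the no-covariate statement as given.
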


\begin{proof}[Proof: Lemma \eqref{lemma: UNDIDnocov}] To prove Lemma \eqref{lemma: UNDIDnocov}, we may assume that data is no longer poolable. It is important to note that, the unpooled regression can be run with both poolable and unpoolable datasets. Therefore, we can compare the results between the conventional and UN-DID with a poolable dataset. However, the conventional regression is no longer feasible when the dataset is unpoolable. Similar to the proof of Lemma 3, Assumption \eqref{assumption: binary} holds. Without covariates and with some modifications, Equations \eqref{equation: undid1} and \eqref{equation: undid2} can be rewritten in the following way:
\begin{equation}
\label{equation: undidnocovmodified}
    \equiv \; Y^j_{i,t} = \gamma_0^j + \gamma_1^j post_t^j + \nu^j_{i,t}, \;\;\;\;\;\; \mbox{where} \; j= \{ Treat, Control \}.
\end{equation}

The proof of this proposition is shown in Appendix \ref{section: transformedequation}. From this transformed equation, $(\widehat{\lambda}_2^{Treat} - \widehat{\lambda}_1^{Treat}) - (\widehat{\lambda}_2^{Control} - \widehat{\lambda}_1^{Control}) = \hat{\gamma}_1^{Treat} - \hat{\gamma}_1^{Control}$.

In Equation \eqref{equation: undidnocovmodified}, the coefficient of interest is $\hat{\gamma}^j_1$. According to the FWL theorem, $\hat{\gamma^j_1}$ will be the same as the coefficient of $(post_t^j - \widehat{post}_t^j)$ from the regression shown in Equation \eqref{equation: undidfwl}. Here, $(post_t^j - \widehat{post}t^j)$ are the residuals from the regression shown in Equation \eqref{equation: undidfwlresiduals}.
\begin{equation}
\label{equation: undidfwlresiduals}
    post_t^j = \eta_0^j + \omega^j_{i,t}, \;\;\;\;\;\; \mbox{where} \;\;\; \eta_0^j= \overline{post}_t^j.
\end{equation}
\begin{equation}
\label{equation: undidfwl}
    Y_{i,t}^j = \widehat{\gamma_1^j} (post^j_t -\widehat{post^j_t}) + \nu^{j'}_{i,t}.
\end{equation}
Similar to in the preceding proof, we calculate the residuals $(post_t^j - \widehat{post_t^j})$ for each observation using Equation \eqref{equation: undidfwlresiduals}. For an observation in group j during the pre-intervention period, the residual is given by $(post^j_t -\widehat{post^j_t}) = - \widehat{\eta_0^j}$. Likewise, for an observation in group j during the post-intervention period, the residual becomes $(post^j_t -\widehat{post^j_t}) = 1 - \widehat{\eta_0^j}$. Following this, we proceed to obtain $\widehat{\gamma}_1^j$ from Equation \eqref{equation: undidfwl} using the OLS formula:
\begin{align*}
\begin{split}
\widehat{\gamma}_1^j= & \frac{\sum_i \sum_{t} Y_{i, t}^j\left(post^j_t-\widehat{post}^j_t\right)}{\sum_i \sum_t\left(post^j_t-\widehat{post}^j_t\right)^2} \\
\Rightarrow \widehat{\gamma}_1^j= & \frac{\sum_{i \in \{ t \in pre, d \in j\}} Y_{i, t}^{pre,j}\left(-\widehat{\eta}_0^j\right)+\sum_{i \in \{ t \in post, d \in j\}} Y_{i, t}^{post,j}\left(1-\widehat{\eta}_0^j\right)}{\sum_{i \in \{ t \in pre, d \in j\}}\left(-\widehat{\eta}_0^j\right)^2+\sum_{i \in \{ t \in post, d \in j\}}\left(1-\widehat{\eta}_0^j\right)^2} \\
\Rightarrow \widehat{\gamma}_1^j= & \frac{1-\widehat{\eta}_0^j}{\sum_{i \in \{ t \in pre, d \in j\}}\left(-\widehat{\eta}_0^j\right)^2+\sum_{i \in \{ t \in post, d \in j\}}\left(1-\widehat{\eta}_0^j\right)^2} \sum_{i \in \{ t \in post, d \in j\}} Y^{post,j}_{i,t} \\ &-\frac{\widehat{\eta}_0^j}{\sum_{i \in \{ t \in pre, d \in j\}}\left(-\widehat{\eta}_0^j\right)^2+\sum_{i \in \{ t \in post, d \in j\}}\left(1-\widehat{\eta}_0^j\right)^2} \sum_{i \in \{ t \in pre, d \in j\}} Y^{pre,j}_{i,t},
\end{split}
\end{align*}
\begin{equation}
\label{equation: undidgamma}
\widehat{\gamma}_1^j = \frac{1 - \widehat{\eta}_0^j}{B} \sum_{i \in \{t \in post, d \in j \}} Y^{post,j}_{i,t} - \frac{\widehat{\eta}_0^j}{B} \sum_{i \in \{t \in pre, d \in j \}} Y^{pre,j}_{i,t}. \quad 
\end{equation}
Here, $B = \sum_{i \in t=0}\left(-\widehat{\eta}_0^j\right)^2+\sum_{i \in t=1}\left(1-\widehat{\eta}_0^j\right)^2$. Following the proof in the Appendix, $\frac{1-\widehat{\eta}_0^j}{B}$ is the reciprocal of the number of observations in group $j$ in the post-intervention period and $\frac{\widehat{\eta}_0^j}{B}$ is the reciprocal of the number of observations in group j for the pre-intervention period. Therefore, Equation \eqref{equation: undidgamma} can be rewritten as:
\begin{equation*}
\widehat{\gamma}_1^j= \frac{1}{N_{j,1}} \sum_{i \in \{t \in post, d \in j \}} Y^{post,j}_{i,t} -\frac{1}{N_{j,0}} \sum_{i \in \{t \in pre, d \in j \}} Y^{pre,j}_{i,t}
\end{equation*}
\begin{equation}
\label{equation: undidgammasimplified}
\Rightarrow \widehat{\gamma}_1^j= \left( \overline{Y}^{pre,j}_{i,t} - \overline{Y}^{pre,j}_{i,t} \right).
\end{equation}
From Equation \eqref{equation: undidgammasimplified}, we can take $\widehat{\gamma}^T_1$ and $\widehat{\gamma}^C_1$ from the treated and untreated regressions, respectively, and then take the difference between them.
\begin{align*}
\begin{split}
\widehat{\gamma}_1^{Treat} - \widehat{\gamma}_1^{Control} = \left( \overline{Y}^{post,{Treat}}_{i,t} - \overline{Y}^{pre,{Treat}}_{i,t} \right) - \left( \overline{Y}^{post,{Control}}_{i,t} - \overline{Y}^{pre,{Control}}_{i,t}.\right)
\end{split}
\end{align*}
\begin{align}
\label{equation: undidatt}
\begin{split}
    \Rightarrow (\widehat{\lambda}_2^{Treat} - \widehat{\lambda}_1^{Treat}) - (\widehat{\lambda}_2^{Control} - \widehat{\lambda}_1^{Control}) = \left( \overline{Y}^{post,{Treat}}_{i,t} - \overline{Y}^{pre,{Treat}}_{i,t} \right) - \left( \overline{Y}^{post,{Control}}_{i,t} - \overline{Y}^{pre,{Control}}_{i,t} \right).
\end{split}
\end{align}
In simpler notation, Equation \eqref{equation: undidatt} can be rewritten as:
\begin{align}
\label{equation: undidattsimplified}
\begin{split}
(\widehat{\lambda}_2^{Treat} - \widehat{\lambda}_1^{Treat}) - (\widehat{\lambda}_2^{Control} - \widehat{\lambda}_1^{Control}) = & \left( \overline{Y}_{D_s = 1, P_t = 1} - \overline{Y}_{D_s = 1, P_t = 0} \right) - \left( \overline{Y}_{D_s = 0, P_t = 1} - \overline{Y}_{D_s = 0, P_t = 0} \right).
\end{split}
\end{align}
\end{proof}

\begin{proof}[Proof: Theorem \eqref{lemma: equalATT}]
The proof follows the proof of Lemma \eqref{lemma: conventionalnocov} and Lemma \eqref{lemma: UNDIDnocov}. Comparing Equations \eqref{equation: didconventionalbeta3simpler} and \eqref{equation: undidattsimplified}, we observe that, provided we are using the same poolable dataset, both the conventional and the UN-DID will provide numerically equivalent results.
    \begin{equation}
        \label{equation: lemma1proved}
            \widehat{\beta_3} =      (\widehat{\lambda}_2^{Treat} - \widehat{\lambda}_1^{Treat}) - (\widehat{\lambda}_2^{Control} - \widehat{\lambda_1}^{Control}).
    \end{equation}
\end{proof}

\begin{lemma}[Estimated ATTs from both the conventional and UN-DID converges in probability to the ATT shown in Equation \eqref{equation: attstrongpt}]
\label{lemma: nocovconvergence}
    \begin{equation}
    \label{equation: lemma4}
        \widehat{\beta_3} \xrightarrow{p} ATT.
    \end{equation}
    \begin{equation}
        (\widehat{\lambda}_2^{Treat} - \widehat{\lambda}_1^{Treat}) - (\widehat{\lambda}_2^{Control} - \widehat{\lambda}_1^{Control}) \xrightarrow{p} ATT.
    \end{equation}
\end{lemma}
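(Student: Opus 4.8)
The plan is to reduce the statement to a single application of the weak law of large numbers. By Lemma \ref{lemma: conventionalnocov} and Lemma \ref{lemma: UNDIDnocov} (and Theorem \ref{lemma: equalATT}, which makes the two estimators numerically identical), both $\widehat{\beta_3}$ and the UN-DID estimator are exactly equal, in any finite sample, to the sample double difference
\[
\bigl(\overline{Y}_{D_s=1,P_t=1} - \overline{Y}_{D_s=1,P_t=0}\bigr) - \bigl(\overline{Y}_{D_s=0,P_t=1} - \overline{Y}_{D_s=0,P_t=0}\bigr).
\]
So it suffices to show that this quantity converges in probability to the right-hand side of Equation \eqref{equation: attstrongpt}, and then to invoke Theorem \ref{theorem: attstrongpt} to identify that limit as the ATT.

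First I would treat each of the four cell means separately. Conditional on $(D_s,P_t)=(d,p)$, the observations $Y_{i,s,t}$ in that cell are drawn i.i.d.\ from the conditional distribution, which has finite mean $E[Y_{i,s,t}\mid D_s=d,P_t=p]$ (finiteness of the relevant moments is in any case needed for the heteroskedasticity-robust variances in Equation \eqref{equation: undidse} to be well defined). As the overall sample grows, the number of observations $N_{d,p}$ in each cell diverges — the Monte Carlo and empirical designs keep each cell share bounded away from zero — so the WLLN gives $\overline{Y}_{D_s=d,P_t=p} \xrightarrow{p} E[Y_{i,s,t}\mid D_s=d,P_t=p]$ for each of the four cells. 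In the balanced-panel case the within-individual dependence across periods does not disturb this step, since each cell mean is still an average over a growing number of independent units; one only needs an LLN valid under that cross-sectional independence.

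Next I would combine the four limits. The map $(a,b,c,d)\mapsto (a-b)-(c-d)$ is continuous, so by the continuous mapping theorem (equivalently, repeated use of Slutsky's theorem) the sample double difference converges in probability to
\[
\bigl(E[Y\mid D_s=1,P_t=1] - E[Y\mid D_s=1,P_t=0]\bigr) - \bigl(E[Y\mid D_s=0,P_t=1] - E[Y\mid D_s=0,P_t=0]\bigr).
\]
Under Assumptions \ref{assumption: binary}, \ref{assumption: cpt} (strong parallel trends) and \ref{assumption: na} (no anticipation), Theorem \ref{theorem: attstrongpt} already identifies this population object as the ATT: no anticipation lets one replace $E[Y\mid D_s=1,P_t=0]$ with $E[Y_{i,s,t}(0)\mid D_s=1,P_t=0]$, the control-group expectations are expectations of $Y(0)$ by construction, and strong parallel trends pins down the missing counterfactual $E[Y_{i,s,t}(0)\mid D_s=1,P_t=1]$. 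Hence both estimators converge in probability to the ATT of Equation \eqref{equation: attstrongpt}.

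The main obstacle is not conceptual but one of stating the sampling framework cleanly: one must specify that observations are i.i.d.\ within cells with finite first (indeed second) moments, that each cell's sample size diverges, and — for panel data — that the LLN holds under the assumed cross-sectional independence of units, so that the WLLN and continuous mapping theorem genuinely apply. Once those regularity conditions are in place, everything else is either already proved (the numerical reduction in Lemmas \ref{lemma: conventionalnocov}–\ref{lemma: UNDIDnocov} and Theorem \ref{lemma: equalATT}) or is the identification step carried out in Theorem \ref{theorem: attstrongpt}.
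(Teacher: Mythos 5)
Your proposal is correct and follows essentially the same route as the paper: reduce both estimators to the four-cell sample double difference via Theorem \ref{lemma: equalATT}, apply the WLLN to each cell mean, and pass the convergence through the differences to the population object in Equation \eqref{equation: attstrongpt}. You are somewhat more explicit than the paper about the needed regularity conditions (diverging cell sizes, finite moments, cross-sectional independence) and about invoking the continuous mapping theorem and the identification step, but these are refinements of, not departures from, the paper's argument.
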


\begin{proof}[Proof: Lemma \eqref{lemma: nocovconvergence}]
    Following Theorem \eqref{lemma: equalATT}, it is sufficient to show that Equation \eqref{equation: lemma4} holds. Applying the \textbf{Weak Law of Large Numbers (WLLN)} on the four means in the expression for $\widehat{ATT}^C$ in Equation \eqref{equation: didconventionalbeta3simpler}:
    \begin{align*}
        \overline{Y}_{D_s = 1, P_t = 1} &\xrightarrow{p} E[Y_{i,t}|D_{s} = 1,P_t = 1], \\
        \overline{Y}_{D_s = 1, P_t = 0} &\xrightarrow{p} E[Y_{i,t}|D_{s} = 1, P_{t} = 0], \\
        \overline{Y}_{D_s = 0, P_t = 1} &\xrightarrow{p} E[Y_{i,t}|D_{s} = 0, P_{t} = 1], \\
        \overline{Y}_{D_s = 0, P_t = 0} & \xrightarrow{p} E[Y_{i,t}|D_{s} = 0, P_t = 0].
    \end{align*}

Since the four sample means converge in probability to the population means as sample size grows, their differences will also converge in probability.

\begin{align*}
     \biggr( \overline{Y}_{D_s = 1, P_t = 1} - \overline{Y}_{D_s = 1, P_t = 0} \biggr) - & \biggr( \overline{Y}_{D_s = 0, P_t = 1} - \overline{Y}_{D_s = 0, P_t = 0} \biggr) \\ \xrightarrow{p} \biggr(E[Y_{i,t}|D_{s} = 1,P_t = 1] - E[Y_{i,t}|D_{s} = 1, P_{t} = 0] \biggr) - & \biggr(E[Y_{i,t}|D_{s} = 0, P_{t} = 1] - E[Y_{i,t}|D_{s} = 0, P_t = 0] \biggr)
\end{align*}

\begin{equation}
    \label{equation: lemma4last}
        \Rightarrow \widehat{\beta_3} \xrightarrow{p} ATT
\end{equation}

\end{proof}

\section[Interpretation Equation (A.5)]{Interpretation of the Coefficients in Equation \eqref{equation: didconventionalFWLresiduals}}
\label{ssection: samplesizes}

To begin, let us rewrite Equation \eqref{equation: didconventionalFWLresiduals} and proceed to derive the values for $\alpha_0$, $\alpha_1$, and $\alpha_2$: 

\begin{equation}
\label{A1}
P_t * D_s = \alpha_0 + \alpha_1 P_t + \alpha_2 D_s + u_{i,s,t}.
\end{equation}
With a constant, we prove that the coefficient of a dummy variable represents the difference in means when the dummy variable takes on a value of 1 compared to when it takes on a value of 0.

\begin{proof}
    Let $M$ be a continuous outcome variable and $N$ be a binary dummy variable. The regression of $M$ on $N$ is expressed as
\begin{equation}
\label{A2}
M = \kappa_0 + \kappa_1 N + \upsilon.
\end{equation}

By taking expectations, once when $N = 1$ and once when $N = 0$, we obtain
\begin{equation}
\label{A3}
E[M|N=1] = \kappa_0 + \kappa_1
\end{equation}
\begin{equation}
\label{A4}
\Rightarrow E[M|N=0] = \kappa_0.
\end{equation}

With strict exogeneity, we know that $E[\upsilon|N=1]=0$ and $E[\upsilon|N=0]=0$. By substituting Equation \eqref{A4} into Equation \eqref{A3}, it becomes evident that
\begin{equation}
\kappa_1 = E[M|N=1] - E[M|N=0].
\end{equation}
Based on this, we can write $\alpha_1$ as
\begin{equation}
\label{A6}
    \alpha_1 = E[P_t * D_|D_s = 1] - E[P_t * D_s|D_s = 0].
\end{equation}
When $D_i = 0$, $P_t*D_s = 0$. Therefore, $E[P_t * D_s|D_s = 0] = 0$. So we can rewrite Equation \eqref{A6} as
\begin{equation}
\label{A7}
\alpha_1 = E[P_t * D_s|D_s = 1] = E[P_t] = Pr(P_t).
\end{equation}
Because $P_t$ is a dummy variable, $E[P_t]$ is the fraction of individuals in the post-intervention period. We can similarly show that $\alpha_2 = E[D_s] = Pr(D_s)$. To derive $\alpha_0$, let us take the unconditional expectation of Equation \eqref{A1}:
\begin{equation}
    E(P_t*D_s) = \alpha_0 + \alpha_1 \cdot E(D_s) + \alpha_2 \cdot E(P_t)
\end{equation}
\begin{equation}
\label{A24}
    \Rightarrow Pr(P_t) \cdot Pr(D_s) = \alpha_0 + Pr(P_t) \cdot Pr(D_s) + Pr(D_s) \cdot Pr(P_t). 
\end{equation}
This follows from the fact that the expected value of a dummy variable is the fraction of the total sample with the dummy variable value = 1. The proof is shown below. We have also shown in Equation \eqref{A7} that $\alpha_1 = Pr(D_s)$ and $\alpha_2 = Pr(P_t)$.

\textbf{Proof:}
The expected value of a binary dummy variable $D_i$ is given by
\begin{equation}
    E[D_s] = \frac{\sum_{i=1}^{n} D_s}{n}.    
\end{equation}
Because $D_s$ can  take only the value of 0 or 1, we can express it as the count of observations, where $D_s = 1$:
\begin{equation}
    E[D_i] = \frac{\sum_{i=1}^{n} D_s}{n} = \frac{\text{Number of observations where } D_s = 1}{n}.
\end{equation}
The same holds for $E(P_t)$. We can now re-arrange and simplify Equation \eqref{A24} as follows:
\begin{equation}
\label{A25}
\alpha_0 = - Pr(P_t) \cdot Pr(D_s).  
\end{equation}
In other words, $\alpha_0$ is equal to the negative expected fraction of individuals with both $P_t=1$ and $D_i=1$, $\alpha_1$ is the fraction of individuals in the post-intervention period, and $\alpha_2$ is the fraction of individuals in the pre-intervention period.

With equal sample sizes in the four cells, $Pr(P_t) = Pr(D_s) = 1/2$. This follows from the fact that approximately half of the total observations are in the post-intervention period and approximately half of the observations are in the treated group. So, $\alpha_1 = 1/2$, $\alpha_2 = 1/2$, and $\alpha_0 = -1/4$. Following this, the sum of squared residuals $A$ from Equation \eqref{equation: didconventionalbeta3} can be expressed as:
\begin{equation*}
\begin{split}
    A = & \sum_{i \in \{ t \in pre, d \in T \}} \left(1/4 - 1/2 \right)^2 + \sum_{i \in \{ t \in post, d \in T \}}\left(1+1/4-1/2-1/2\right)^2 
     \\ & + \sum_{i \in \{ t \in pre, d \in C\}}\left(-1/4\right)^2 + \sum_{i \in \{ t \in post, d \in C \}}\left(1/4-1/2\right)^2.
\end{split}
\end{equation*}
\begin{equation*}
        \Rightarrow A = N/16, \;\;\; \mbox{where N is the total sample size}.
\end{equation*}
\vskip -12pt

Using the same values, we can show that $\left(1-\hat{\alpha}_0-\hat{\alpha}_1-\hat{\alpha}_2\right) = \left(\hat{\alpha}_0+\hat{\alpha}_2\right) = \left(\hat{\alpha}_0 + \hat{\alpha}_2\right) = \left(-\hat{\alpha}_0\right) = 1/4$ from Equation \eqref{equation: didconventionalbeta3}. Therefore, we can show that $\frac{\left(1-\hat{\alpha}_0-\hat{\alpha}_1-\hat{\alpha}_2\right)}{A} = \frac{\left(\hat{\alpha}_0+\hat{\alpha}_2\right)}{A} = \frac{\left(\hat{\alpha}_0 + \hat{\alpha}_2\right)}{A} = \frac{\left(-\hat{\alpha}_0\right)}{A} = 1/(N/4)$. With equal sample sizes, $N_{1,1} = N_{0,1} = N_{1,0} = N_{0,0} = N/4$.

Following this, we can conclude that  $\frac{\left(1-\hat{\alpha}_0-\hat{\alpha}_1-\hat{\alpha}_2\right)}{A}$ is the reciprocal of the number of observations in the treated group and the post-intervention period ($N_{1,1})$ and $\frac{\left(\hat{\alpha}_0+\hat{\alpha}_2\right)}{A}$ is the reciprocal of the number of observations in the treated group in the pre-intervention period ($N_{1,0}$). Similarly, $\frac{\left(\hat{\alpha}_0 + \hat{\alpha}_2\right)}{A}$ is the reciprocal of the number of observations in the control group in the post-intervention period $(N_{0,1})$, and $\frac{\left(-\hat{\alpha}_0\right)}{A}$ is reciprocal of the number of observations in the control group in the pre-intervention period ($N_{0,0})$. The same results hold for cases with unequal sample sizes. 

With time-varying covariates, $\alpha_3$ is interpreted as $Pr(X_{i,t} = x)$. Therefore, $\frac{\left(1-\hat{\alpha}_0-\hat{\alpha}_1-\hat{\alpha}_2 -\hat{\alpha}_3 x \right)}{G'}$ and $\frac{\left(\hat{\alpha}_0 + \hat{\alpha}_2+\hat{\alpha}_3 x\right)}{G'}$ are equivalent to the reciprocals of the number of observations in the treated group in the post-intervention period and the pre-intervention period, respectively, with $X_{i,t}=x$. Similarly, $\frac{\left(\hat{\alpha}_0 + \hat{\alpha}_1+\hat{\alpha}_3 x\right)}{G'}$ and $\frac{\left(-\hat{\alpha}_0-\hat{\alpha}_3 x\right)}{G'}$ represent the probabilities for the observations being in the control group in the post- and pre-intervention periods, respectively, conditional on $X_{i,t}$.
\end{proof}

\clearpage

\section[Proof: Lemma 1]{Proof of Lemma \eqref{lemma: conventionaltimevaryingatt}} \label{sec:proof_conventionaltimevaryingatt}

\begin{proof}[Proof: Lemma \eqref{lemma: conventionaltimevaryingatt}]

For the proof of Lemma \eqref{lemma: conventionaltimevaryingatt}, we only require Assumptions \eqref{assumption: binary} and \eqref{assumption:pooled}. Assumptions \eqref{assumption: spt} and \eqref{assumption: na} are only required to derive the estimand of the ATT shown in Theorem \eqref{theorem: att} \citep{callaway2021difference}. In the conventional regression shown in Equation \eqref{equation: didsimple}, $\hat{\beta_3}$ is the parameter of interest. According to the FWL theorem, $\hat{\beta_3}$ will be the same as the coefficient of $(P_t*D_s - \widehat{P_t*D_s})$ from the regression shown in Equation \eqref{equation: didcovfwl}. Here, $(P_t*D_s - \widehat{P_t*D_s})$ are the residuals from Equation \eqref{equation: didcovfwlresiduals}:
\begin{equation}
\label{equation: didcovfwlresiduals}
    P_t * D_s = \alpha_0 + \alpha_1 P_t + \alpha_2 D_s + \alpha_3 X_{i,s,t} + u_{i,s,t},
\end{equation}
\begin{equation}
\label{equation: didcovfwl}
    Y_{i,s,t} = \widehat{\beta_3} (P_t*D_s - \widehat{P_t*D_s}) + u'_{i,s,t}.
\end{equation}
From Equation \eqref{equation: didcovfwlresiduals}, we calculate residuals ($P_t * D_s - \widehat{P_t * D_s}$) for each observation based on their treatment status and time period. For treated observations in the pre-intervention period, the residual is given by $- \widehat{\alpha}_0 - \widehat{\alpha}_2 - \widehat{\alpha}_3 X_{i,s,t}$, while for the post-intervention period, it is $1 - \widehat{\alpha}_0 - \widehat{\alpha}_1 - \widehat{\alpha}_2 - \widehat{\alpha}_3 X_{i,s,t}$. Similarly, untreated observations have residuals of $- \widehat{\alpha}_0 - \widehat{\alpha}_3 X_{i,s,t}$ in the pre-intervention period and $- \widehat{\alpha}_0 - \widehat{\alpha}_1 - \widehat{\alpha}_3 X_{i,s,t}$ in the post-intervention period. Subsequently, $\widehat{\beta}_3$ from Equation \eqref{equation: didcovfwl} can be estimated using the following OLS formula:
\begin{align}
\label{equation: didcovols}
\begin{split}
\widehat{\beta}_3= & \sum_i \sum_t \frac{Y_{i,s,t}\left(P_t * D_s-\widehat{P_t * D_s}\right)}{\left(P_t * D_s-\widehat{P_t * D_s}\right)^2}.
\end{split}
\end{align}
The expression from Equation \eqref{equation: didcovols} is simplified by substituting the computed residuals into the numerator, as shown below:
\begin{align*}
    \begin{split}
        & \sum_{i \in \{t \in pre,{d \in Treat}\} } Y_{i,t}^{pre,Treat}\left(-\widehat{\alpha}_0-\widehat{\alpha}_2 - \widehat{\alpha}_3 X_{i,t}^{pre,Treat} \right) \\ + & \sum_{i \in \{t \in  post,{d \in Treat}\}} Y_{i,t}^{post,Treat}\left(1-\widehat{\alpha}_0-\widehat{\alpha}_1-\widehat{\alpha}_2 - \widehat{\alpha}_3 X_{i,t}^{post,Treat} \right) \\ + & \sum_{i \in \{t \in  pre,{d \in Control}\}} Y_{i,t}^{pre,Control}\left(-\widehat{\alpha}_0- \widehat{\alpha}_3 X^{pre,Control}_{i,t}\right) \\ + & \sum_{i \in \{t \in  post,{d \in Control}\}} Y_{i,t}^{post,Control}\left(-\widehat{\alpha}_0-\widehat{\alpha}_1 - \widehat{\alpha}_3 X_{i,t}^{post,Control}\right).
    \end{split}
\end{align*}
In the above expression, $Y_{i,t}^{p,d}$ and $X_{i,t}^{p,d}$ are the observed outcomes and the observed covariates for units in group $d$ in period $p$, respectively. Similar to the expression above, we substitute the residuals and simplify in the denominator of Equation \eqref{equation: didcovols} and recover the following expression:
\begin{align*}
    \begin{split}
        G = & \sum_{i \in\{t \in pre,d \in Treat\}} \left(-\widehat{\alpha}_0-\widehat{\alpha}_2 - \widehat{\alpha}_3 X^{pre,Treat}_{i,t}\right)^2 + \sum_{i \in\{t \in post,d \in Treat\}}\left(1-\widehat{\alpha}_0-\widehat{\alpha}_1-\widehat{\alpha}_2 - \widehat{\alpha}_3 X^{post,Treat}_{i,t}\right)^2 \\ & + \sum_{i \in\{t \in pre,d \in Control\}}\left(-\widehat{\alpha}_0 - \widehat{\alpha}_3 X^{pre,Control}_{i,t}\right)^2 + \sum_{i \in\{t \in post,d \in Control\}}\left(-\widehat{\alpha}_0-\widehat{\alpha}_1 - \widehat{\alpha}_3 X^{post,Control}_{i,t}\right)^2. 
    \end{split}
\end{align*}
Combining the numerator and the denominator, $\widehat{\beta}_3$ can be written as:
\begin{align}
\label{equation: didcovbeta3}
\begin{split}
\widehat{\beta}_3= &  \frac{\sum_{i \in \{t \in  post,{d \in Treat}\}} Y_{i,t}^{post,Treat}\left(1-\widehat{\alpha}_0-\widehat{\alpha}_1-\widehat{\alpha}_2 - \widehat{\alpha}_3 X_{i,t}^{post,Treat} \right)}{G} \\ - & \frac{\sum_{i \in \{t \in pre,{d \in Treat}\}} Y_{i,t}^{pre,Treat}\left(\widehat{\alpha}_0 +\widehat{\alpha}_2 + \widehat{\alpha}_3 X_{i,t}^{pre,Treat} \right)}{G} \\ + & \frac{\sum_{i \in \{t \in  post,{d \in Control}\}} Y_{i,t}^{post,Control} \left(\widehat{\alpha}_0+\widehat{\alpha}_1 + \widehat{\alpha}_3 X_{i,t}^{post,Control}\right)}{G} \\ - & \frac{\sum_{i \in \{t \in  pre,{d \in Control}\}} Y_{i,t}^{pre,Control}\left(-\widehat{\alpha}_0- \widehat{\alpha}_3 X_{i,t}^{pre,Control}\right)}{G} .
\end{split}
\end{align}

For simplicity, let us assume that $X_{i,s,t}$ is a dummy variable. In this case, we can split the expression shown in Equation \eqref{equation: didcovbeta3} based on the realized value of $X_{i,s,t}$, and dividing both the numerator and the denominator by $G'$:

\begin{align}
\label{equation: didcovbeta3splitX}
\begin{split}
\widehat{\beta}_3= &  \frac{(1/G')\sum_{i \in \{t \in  post,{d \in Treat}, x \in X, x = 1\}} Y_{i,t}^{post,Treat}\left(1-\widehat{\alpha}_0-\widehat{\alpha}_1-\widehat{\alpha}_2 - \widehat{\alpha}_3 \right)}{G/G'} \\
& - \frac{(1/G')\sum_{i \in \{t \in pre,{d \in Treat}, x \in X, x = 1\}} Y_{i,t}^{pre,Treat}\left(\widehat{\alpha}_0 +\widehat{\alpha}_2 + \widehat{\alpha}_3  \right)}{G/G'} \\
& + \frac{(1/G')\sum_{i \in \{t \in  post,{d \in Control}, x \in X, x = 1\}} Y_{i,t}^{post,Control} \left(\widehat{\alpha}_0+\widehat{\alpha}_1 + \widehat{\alpha}_3 \right)}{G/G'} \\
& - \frac{(1/G')\sum_{i \in \{t \in  pre,{d \in Control}, x \in X, x = 1\}} Y_{i,t}^{pre,Control}\left(-\widehat{\alpha}_0- \widehat{\alpha}_3 \right)}{G/G'} \\
& + \frac{(1/G')\sum_{i \in \{t \in  post,{d \in Treat}, x \in X, x = 0\}} Y_{i,t}^{post,Treat}\left(1-\widehat{\alpha}_0-\widehat{\alpha}_1-\widehat{\alpha}_2 \right)}{G/G'} \\
& - \frac{(1/G')\sum_{i \in \{t \in pre,{d \in Treat}, x \in X, x = 0\}} Y_{i,t}^{pre,Treat}\left(\widehat{\alpha}_0 +\widehat{\alpha}_2 \right)}{G/G'} \\
& + \frac{(1/G')\sum_{i \in \{t \in  post,{d \in Control}, x \in X, x = 0\}} Y_{i,t}^{post,Control} \left(\widehat{\alpha}_0+\widehat{\alpha}_1 \right)}{G/G'} \\
& - \frac{(1/G')\sum_{i \in \{t \in  pre,{d \in Control}, x \in X, x = 0\}} Y_{i,t}^{pre,Control}\left(-\widehat{\alpha}_0 \right)}{G/G'} .
\end{split}
\end{align}

Where, 
\begin{align*}
    \begin{split}
        G' = & \sum_{i \in\{t \in pre,d \in Treat\}} \left(-\widehat{\alpha}_0-\widehat{\alpha}_2 - \widehat{\alpha}_3 x\right)^2 + \sum_{i \in\{t \in post,d \in Treat\}}\left(1-\widehat{\alpha}_0-\widehat{\alpha}_1-\widehat{\alpha}_2 - \widehat{\alpha}_3 x\right)^2 \\ & + \sum_{i \in\{t \in pre,d \in Control\}}\left(-\widehat{\alpha}_0 - \widehat{\alpha}_3 x\right)^2 + \sum_{i \in\{t \in post,d \in Control\}}\left(-\widehat{\alpha}_0-\widehat{\alpha}_1 - \widehat{\alpha}_3 x\right)^2. 
    \end{split}
\end{align*}

and 
\begin{align*}
    \begin{split}
        G'' = & \sum_{i \in\{t \in pre,d \in Treat\}} \left(-\widehat{\alpha}_0-\widehat{\alpha}_2\right)^2 + \sum_{i \in\{t \in post,d \in Treat\}}\left(1-\widehat{\alpha}_0-\widehat{\alpha}_1-\widehat{\alpha}_2 \right)^2 \\ & + \sum_{i \in\{t \in pre,d \in Control\}}\left(-\widehat{\alpha}_0 \right)^2 + \sum_{i \in\{t \in post,d \in Control\}}\left(-\widehat{\alpha}_0-\widehat{\alpha}_1 \right)^2. 
    \end{split}
\end{align*}

We can simplify the above expression as follows:
\begin{equation}
    \widehat{\beta_3} = \omega_{x=1} \widehat{\beta_3}^{(x= 1)} + \omega_{x=0} \widehat{\beta_3}^{(x= 0)}.
\end{equation}

Here, $\widehat{\beta_3}^{(x= 1)}$ is the coefficient from the conventional regression shown in Equation \eqref{equation: didsimple2} which is run on a sample using individuals for whom $x = 1$. Similarly, $\widehat{\beta_3}^{(x= 0)}$ is coefficient of the regression run on a sample using individuals for whom $x = 0$. $\omega_{x=1}$ and $\omega_{x=0}$ are their respective weights which sum up to 1.

\begin{equation}
\label{equation: didsimple2}
    Y_{i,t} = \beta_0 + \beta_1 D_s + \beta_2 P_t + \beta^{(x)}_3 P_t * D_s + \epsilon_{i,t}.
\end{equation}

Now let us derive $\widehat{\beta_3}^{(x=1)}$, $\widehat{\beta_3}^{(x=0)}$ and their weights to show how we can simplify expression \eqref{equation: didcovbeta3} to expression \eqref{equation: didcovbeta3splitX}. Following the proof of Lemma \eqref{lemma: equalATT}, $\widehat{\beta_3}^{(x=1)}$ can be written as:   
\begin{align}
\label{equation: didcovbeta3discreetx1}
\footnotesize
\begin{split}
\widehat{\beta}_3^{(x=1)}= & \frac{\sum_{i \in \{t \in  post,{d \in Treat}, x \in X, x = 1\}} Y_{i,t}^{post,Treat}\left(1-\widehat{\alpha}_0-\widehat{\alpha}_1-\widehat{\alpha}_2- \widehat{\alpha}_3 \right)}{G'} \\- & \frac{\sum_{i \in \{t \in  pre,{d \in Treat}, x \in X, x = 1\}} Y_{i,t}^{pre,Treat}\left(\widehat{\alpha}_0 +\widehat{\alpha}_2 + \widehat{\alpha}_3 \right)}{G'} \\ - & \frac{\sum_{i \in \{t \in  post,{d \in Control}, x \in X, x = 1\}} Y_{i,t}^{post,Control}\left(\widehat{\alpha}_0+\widehat{\alpha}_1 + \widehat{\alpha}_3\right)}{G'} \\ + & \frac{\sum_{i \in \{t \in  pre,{d \in Control}, x \in X, x = 1\}} Y_{i,t}^{pre,Control}\left(-\widehat{\alpha}_0- \widehat{\alpha}_3 \right)}{G'}.
\end{split}
\end{align}

Similarly, $\widehat{\beta_3}^{(x=0)}$ can be written as:

\begin{align}
\label{equation: didcovbeta3discreetx0}
\begin{split}
\widehat{\beta}_3^{(x=0)}= & \frac{\sum_{i \in \{t \in  post,{d \in Treat}, x \in X, x = 0\}} Y_{i,t}^{post,Treat}\left(1-\widehat{\alpha}_0-\widehat{\alpha}_1-\widehat{\alpha}_2 \right)}{G'} \\ - & \frac{\sum_{i \in \{t \in  pre,{d \in Treat}, x \in X, x = 0\}} Y_{i,t}^{pre,Treat}\left(\widehat{\alpha}_0 +\widehat{\alpha}_2 \right)}{G'} \\ - &  \frac{\sum_{i \in \{t \in  post,{d \in Control}, x \in X, x = 0\}} Y_{i,t}^{post,Control}\left(\widehat{\alpha}_0+\widehat{\alpha}_1 \right)}{G'} \\ + & \frac{\sum_{i \in \{t \in  pre,{d \in Control}, x \in X, x = 0\}} Y_{i,t}^{pre,Control}\left(-\widehat{\alpha}_0 \right)}{G'}.
\end{split}
\end{align}

Replacing Equation \eqref{equation: didcovbeta3discreetx0} and \eqref{equation: didcovbeta3discreetx1} into Equation \eqref{equation: didcovbeta3splitX}, we get:

\begin{equation}
\label{equation: weightedavgconventional}
    \widehat{\beta_3} = \underbrace{(G'/G)}_{\omega_{x=1}} \widehat{\beta_3}^{(x= 1)} + \underbrace{(G''/G)}_{\omega_{x=0}} \widehat{\beta_3}^{(x= 0)}.
\end{equation}

Based on the reasoning from the proof outlined in Appendix \ref{ssection: samplesizes}, we can further simplify expression \eqref{equation: didcovbeta3discreetx1} and \eqref{equation: didcovbeta3discreetx0} as follows:
\begin{align}
\label{equation: condattestimates}
\begin{split}
\hat{\beta}_3^{(x)} = & \left( \overline{Y}_{D_s = 1, P_t = 1,X_{i,s,t}=x} - \overline{Y}_{D_s = 1, P_t = 0,X_{i,s,t}=x} \right) - \left( \overline{Y}_{D_s = 0, P_t = 1,X_{i,s,t}=x} - \overline{Y}_{D_s = 0, P_t = 0,X_{i,s,t}=x} \right).
\end{split}
\end{align}

Equation \eqref{equation: condattestimates} are the estimates of the conditional ATTs shown in Theorem \ref{theorem: attcond}. The above results can be extended to $X_{i,s,t}$ with multiple observed values. $\overline{Y}_{D_s, P_t ,X_{i,s,t}}$ represents the covariate-adjusted mean outcomes for the respective groups and periods for a given value of $X_{i,s,t}$. $\widehat{\beta_3}$ is a weighted average of the conditional ATTs for all values of covariates X's, and is therefore an estimate of the unconditional ATT shown in Theorem \ref{theorem: att}. Applying the \textbf{Weak Law of Large Numbers (WLLN)} on the four means in the expression for $\widehat{\beta_3}^{(x)}$ in Equation \eqref{equation: condattestimates}:
    \begin{align*}
        \overline{Y}_{D_s = 1, P_t = 1,X_{i,s,t} = x} & \xrightarrow{p} E[Y_{i,s,t}|D_{s} = 1,P_t = 1,X_{i,s,t} = x], \\
        \overline{Y}_{D_s = 1, P_t = 0,X_{i,s,t} = x} & \xrightarrow{p} E[Y_{i,s,t}|D_{s} = 1, P_{t} = 0,X_{i,s,t}= x], \\
        \overline{Y}_{D_s = 0, P_t = 1,X_{i,s,t} = x} & \xrightarrow{p} E[Y_{i,s,t}|D_{s} = 0, P_{t} = 1,X_{i,s,t}= x], \\
        \overline{Y}_{D_i = 0, P_t = 0,X_{i,s,t} = x} & \xrightarrow{p} E[Y_{i,s,t}|D_{s} = 0, P_t = 0,X_{i,s,t}= x].
    \end{align*}

Since the four covariate-adjusted sample means converge in probability to the population means as sample size grows, their differences will also converge in probability.

\begin{align}
\label{expression: ATTC}
\begin{split}
     \widehat{\beta_3^{(x)}} \xrightarrow{p} & \biggr[ \biggr(E[Y_{i,s,t}|D_{s} = 1,P_t = 1,X_{i,s,t}= x] - E[Y_{i,s,t}|D_{s} = 1, P_{t} = 0,X_{i,s,t}= x] \biggr) \\ - & \biggr(E[Y_{i,s,t}|D_{s} = 0, P_{t} = 1,X_{i,s,t}= x] - E[Y_{i,s,t}|D_{s} = 0, P_t = 0,X_{i,s,t}= x] \biggr) \biggr].
\end{split}
\end{align}

Since $\widehat{\beta_3^{(x)}}$ converge in probability to the conditional ATTs for all realizations of $X_{i,t}$'s, $\widehat{\beta_3}$ also converges in probability to the unconditional ATT shown in Theorem \ref{theorem: att}. 

\begin{align}
\label{expression: ATTCfinal}
\begin{split}
     \widehat{\beta_3} \xrightarrow{p} ATT^{Conditional}.
\end{split}
\end{align}
\end{proof}

\clearpage

\section[Proof of Lemma 2]{ Proof of Lemma \eqref{lemma: undidtimevaryingatt}}
\label{section:proof_undidtimevaryingatt}


\begin{proof}[Proof: Lemma \eqref{lemma: undidtimevaryingatt}]

    To prove Lemma \eqref{lemma: undidtimevaryingatt}, we may assume that data is no longer poolable, while Assumption \eqref{assumption: binary} still holds. It is important to note that, the unpooled regression can be run with both poolable and unpoolable datasets. Therefore, we can compare the results between the conventional and UN-DID with a poolable dataset. However, the conventional regression is no longer feasible when the dataset is unpoolable. With covariates and with some modifications, Equations \eqref{equation: undid1} and \eqref{equation: undid2} can be rewritten in the following way:
    \begin{equation}
    \label{equation: undidunmodified}
        Y^j_{i,t} = \lambda_1^j pre_t^j + \lambda_2^j post_t^j + \lambda_3^j X^j_{i,t} + \nu^j_{i,t}.
    \end{equation}
	\begin{equation}
		\label{equation: undidcovmodified}
		  \equiv \; Y^j_{i,t} = \gamma_0^j + \gamma_1^j post_t^j + \gamma_2^j X^j_{i,t} + \nu^j_{i,t}, \;\;\;\;\;\; \mbox{where} \; j= \{ T, C \}.
	\end{equation}
	
	A proof of this equivalence is shown in Appendix \ref{section: transformedequation}. From this transformed equation, $(\widehat{\lambda}_2^{Treat} - \widehat{\lambda}_1^{Treat}) - (\widehat{\lambda}_2^{Control} - \widehat{\lambda}_1^{Control}) = \hat{\gamma}_1^{Treat} - \hat{\gamma}_1^{Control}$. In Equation \eqref{equation: undidcovmodified}, the coefficient of interest is $\hat{\gamma}^j_1$. It is worthwhile to note that, the coefficient of the covariate $X^j_{i,t}$, $\gamma_2^j$ may vary between the treated and control group if Assumption \eqref{assumption: stateinvariantccc} does not hold. Using the FWL theorem, it follows that $\widehat{\gamma}_1^j$ is equal to the coefficient of $(post^j_t -\widehat{post}^j_t)$ in the regression presented in Equation \eqref{equation: undidcovfwl}. Here, $(post^j_t -\widehat{post}^j_t)$ are the residuals from the model shown in Equation \eqref{equation: undidcovfwlresiduals}:
	\begin{equation}
		\label{equation: undidcovfwlresiduals}
		post_t^j = \eta_0^j + \eta_1^j X^j_{i,t} + \omega^j_{i,t}, \;\;\;\;\;\; \mbox{where} \; \eta_0^j= \overline{post}_t^j,
	\end{equation}
	\begin{equation}
		\label{equation: undidcovfwl}
		Y_{i,t}^j = \gamma_1^j (post^j_t -\widehat{post^j_t}) + \nu^{j'}_{i,t}.
	\end{equation}
	We now compute the residuals $(post_t^j - \widehat{post}_t^j)$ for each observation using Equation \eqref{equation: undidcovfwlresiduals}. For an observation in group $j$ during the pre-intervention period, the residual is given by $- \widehat{\eta}_0^j - \widehat{\eta}_1^j X^j_{i,t}$. Similarly, for an observation in group $j$ during the post-intervention period, the residual is expressed as $1 - \widehat{\eta}_0^j - \widehat{\eta}_1^j X^j_{i,t}$.
	Following this, we move on to derive $\widehat{\gamma}_1^j$ from Equation \eqref{equation: undidcovfwl} using the OLS formula:
	\begin{align*}
		\begin{split}
			\widehat{\gamma}_1^j= & \frac{\sum_i \sum_{t} Y_{i, t}^j \left(post^j_t-\widehat{post}^j_t\right)}{\sum_i \sum_t\left(post^j_t-\widehat{post}^j_t\right)^2} \\
			\Rightarrow \widehat{\gamma}_1^j= & \frac{\sum_{i \in \{t \in  post, x \in X^j\}} Y_{i, t}^{post,j}\left(-\widehat{\eta}_0^j- \widehat{\eta}_1^j X^{post,j}_{i,t}\right)+\sum_{i \in \{t \in  pre, x \in X^j\}} Y_{i, t}^{pre,j}\left(1-\widehat{\eta}_0^j- \widehat{\eta}_1^j X^{pre,j}_{i,t}\right)}{\sum_{i \in \{t \in  post, x \in X^j\}}\left(-\widehat{\eta}_0^j- \widehat{\eta}_1^j X^{post,j}_{i,t}\right)^2+\sum_{i \in \{t \in  pre,x \in X^j\}}\left(1-\widehat{\eta}_0^j- \widehat{\eta}_1^j X^{pre,j}_{i,t}\right)^2} \\
			\Rightarrow \widehat{\gamma}_1^j= & \frac{1-\widehat{\eta}_0^j- \widehat{\eta}_1^j X^{post,j}_{i,t}}{\sum_{i \in \{t \in post,x \in X^j \}}\left(-\widehat{\eta}_0^j- \widehat{\eta}_1^j X^{post,j}_{i,t}\right)^2+\sum_{i \in \{t \in post,x \in X^j\}}\left(1-\widehat{\eta}_0^j- \widehat{\eta}_1^j X^{post,j}_{i,t}\right)^2} \sum_{i \in \{t \in post, x \in X^j\}} Y^{post,j}_{i,t} \\ &-\frac{\widehat{\eta}_0^j- \widehat{\eta}_1^j X^{pre,j}_{i,t}}{\sum_{i \in \{t \in pre, x \in X^j\}}\left(-\widehat{\eta}_0^j- \widehat{\eta}_1^j X^{pre,j}_{i,t}\right)^2+\sum_{i \in \{t \in pre, x \in X^j\}}\left(1-\widehat{\eta}_0^j- \widehat{\eta}_1^j X^{pre,j}_{i,t}\right)^2} \sum_{i \in \{t \in pre, x \in X^j\}} Y^{pre,j}_{i,t}
		\end{split}
	\end{align*}
	\begin{equation}
		\label{equation: undidcovreduced}
		\Rightarrow \widehat{\gamma}_1^j= \frac{1-\widehat{\eta}_0^j- \widehat{\eta}_1^j X^{post,j}_{i,t}}{F} \sum_{i \in \{t \in post, x \in X^j\}} Y^{post,j}_{i,t} -\frac{\widehat{\eta}_0^j- \widehat{\eta}_1^j X^{pre,j}_{i,t}}{F} \sum_{i \in \{t \in pre,x \in X^j\}} Y^{pre,j}_{i,t}.
	\end{equation}
	Here, 
	\begin{equation*}
		F = \sum_{i \in \{t \in post\}}\left(-\widehat{\eta}_0^j- \widehat{\eta}_1^j X^{post,j}_{i,t}\right)^2+\sum_{i \in \{t \in pre\}}\left(1-\widehat{\eta}_0^j- \widehat{\eta}_1^j X^{pre,j}_{i,t}\right)^2.
	\end{equation*}

Similar to the proof shown of Equation \eqref{equation: weightedavgconventional}, let us assume that $X_{i,t}$ is a dummy variable. In that case, we can split the $X_{i,t}$'s in the expression above, based on the realization of the value of $X_{i,t}$:

\begin{equation}
\label{equation: weightedavgundid}
    \widehat{\gamma_1^j} = \underbrace{(F'/F)}_{\omega^j_{x=1}} \widehat{\gamma}_1^{j,(x= 1)} + \underbrace{(F''/F)}_{\omega^j_{x=0}} \widehat{\gamma}_1^{j,(x= 0)}.
\end{equation}

where, 
	\begin{equation*}
		F' = \sum_{i \in \{t \in post\}}\left(-\widehat{\eta}_0^j- \widehat{\eta}_1^j\right)^2+\sum_{i \in \{t \in pre\}}\left(1-\widehat{\eta}_0^j- \widehat{\eta}_1^j\right)^2.
	\end{equation*}

and 
	\begin{equation*}
		F'' = \sum_{i \in \{t \in post\}}\left(-\widehat{\eta}_0^j\right)^2+\sum_{i \in \{t \in pre\}}\left(1-\widehat{\eta}_0^j\right)^2.
	\end{equation*}

In Equation \eqref{equation: weightedavgundid}, $\widehat{\gamma}_1^{j,(x)}$'s are the coefficient of $post^j_t$ if we restrict the sample to observations with a given value of $X_{i,t} = x$. Following the proof in Appendix \ref{ssection: samplesizes}, 
we can further simplify Equation \eqref{equation: undidcovreduced} as follows: 
    \begin{align}
        \label{equation: undidcovgamma1simple}
        \begin{split}
            \widehat{\gamma}^{j,(x)}_1 = & \left( \overline{Y^j}_{P_t = 1,X^j_{i,t}=x} - \overline{Y^j}_{P_t = 0,X^j_{i,t}=x} \right).
        \end{split}
    \end{align}
   The difference in $\widehat{\gamma}^j_1$ for the treated and the control silo can be written as:
    \begin{align}
        \label{equation: undidcovdifference}
        \begin{split}
            \widehat{\gamma}^{Treat,(x)}_1 - \widehat{\gamma}^{Control,(x)}_1 = & \left( \overline{Y^{Treat}}_{P_t = 1,X^{Treat}_{i,t}=x} - \overline{Y^{Treat}}_{P_t = 0,X^{Treat}_{i,t}=x} \right) \\  - & \left( \overline{Y^{Control}}_{P_t = 1,X^{Control}_{i,t}=x} - \overline{Y^{Control}}_{P_t = 0,X^{Control}_{i,t}=x} \right).
        \end{split}
    \end{align}
    
    Applying the \textbf{Weak Law of Large Numbers} on the four means in the expression shown in Equation \eqref{equation: undidcovdifference}:
    \begin{align*}
        \overline{Y^{Treat}}_{P_t = 1,X^{Treat}_{i,t} = x} & \xrightarrow{p} E[Y^{Treat}_{i,t}|P_t = 1,X^{Treat}_{i,t} = x], \\
        \overline{Y^{Treat}}_{P_t = 0,X^{Treat}_{i,t} = x} & \xrightarrow{p} E[Y^{Treat}_{i,t}|P_{t} = 0,X^{Treat}_{i,t} = x], \\
        \overline{Y^{Control}}_{P_t = 1,X^{Control}_{i,t} = x} & \xrightarrow{p} E[Y^{Control}_{i,t}|P_{t} = 1,X^{Control}_{i,t} = x], \\
        \overline{Y^{Control}}_{P_t = 0,X^{Control}_{i,t} = x} & \xrightarrow{p} E[Y^{Control}_{i,t}|P_t = 0,X^{Control}_{i,t} = x].
    \end{align*}

    Since the four sample means converge in probability to the population means as sample size grows, their differences will also converge in probability.

\begin{align}
\begin{split}
    \widehat{\gamma}^{{Treat},(x)}_1 - \widehat{\gamma}^{{Control},(x)}_1 \xrightarrow{p} & \bigr[E[Y^{Treat}_{i,t}|P_t = 1,X^{Treat}_{i,t} = x] - E[Y^{Treat}_{i,t}|P_{t} = 0,X^{Treat}_{i,t} = x]\bigr] \\ - &  \bigr[E[Y^{Control}_{i,t}|P_t = 1,X^{Control}_{i,t} = x] - E[Y^{Control}_{i,t}|P_{t} = 0,X^{Control}_{i,t} = x]\bigr].  
\end{split}
\end{align}

Since the $\widehat{\gamma}^{Treat,(x)}_1 - \widehat{\gamma}^{{Control},(x)}_1$ converges in probability to the conditional ATTs for all realizations of $X_{i,t}'s$, $\widehat{\gamma}^{Treat}_1 - \widehat{\gamma}^{Control}_1$ from Equation \eqref{equation: weightedavgundid} will also converge in probability to the unconditional ATT shown in Equation \eqref{equation: attunpooledestimand}. Therefore:

\begin{equation}
    \widehat{\gamma}^{Treat}_1 - \widehat{\gamma}^{Control}_1 = (\widehat{\lambda}_2^{Treat} - \widehat{\lambda}_1^{Treat}) - (\widehat{\lambda}_2^{Control} - \widehat{\lambda}_1^{Control})  \xrightarrow{p} ATT^{UNDID}.
\end{equation}
\end{proof}

\clearpage

\section[Proof: Theorem 5]{Proof of Theorem \eqref{lemma: RPeqUNDID} in UNDID}
\label{section:proof_RPeqUNDID}

\begin{proof}[Proof: Theorem \eqref{lemma: RPeqUNDID}]
    Let us begin by deriving $\hat{\psi_1}$ from the regression shown in Equation \eqref{equation: DIDINT}. Let $\widehat{pre}_t^{Treat}$ be the fitted value from the following regression:
        \begin{equation}
            \label{equation: DIDINTpre1}
                pre_t^{Treat} = \pi_2^{Treat} post_t^{Treat} + \pi_3^{Control} pre_t^{Control} + \pi_4^{Control} post_t^{Control} + \pi_5^{Treat} X^{Treat}_{i,t} + \pi_6^{Control} X^{Control}_{i,t} + e'_{i,t}
        \end{equation}

It can be shown that, $pre_t^{Treat}$ is independent of $pre_t^{Control}$ ($pre_t^{Treat} \perp pre_t^{Control}$); $pre_t^{Treat}$ is independent of $post_t^{Control}$ ($pre_t^{Treat} \perp post_t^{Control}$) and $pre_t^{Treat}$ is independent of $X_{i,t}^{Control}$ ($pre_t^{Treat} \perp X_{i,t}^{Control}$). A proof of this proposition (Lemma \eqref{lemma: independence}) is shown in Appendix \ref{ssection: independence}. Using Lemma \eqref{lemma: independence}, we can re-write Equation \eqref{equation: DIDINTpre1} as:

\begin{equation}
\label{equation: DIDINTpre1simplified}
    pre_t^{Treat} = \pi_2^{Treat} post_t^{Treat} + \pi_5^{Treat} X^{Treat}_{i,t} + e'_{i,t}
\end{equation}

We proceed to calculate the residuals for each observation based on their treatment status and time period. For a treated observation in the pre-intervention period, the residual is given by $1 - \hat{\pi^{Treat}_5} X^{Treat}_{i,0}$. Similarly, for a treated observation in the post-intervention period, the residual is given by $-\hat{\pi^{Treat}_2} - \hat{\pi^{Treat}_5} X^{Treat}_{i,1}$. According to the FWL theorem, the coefficient $\hat{\psi_1}$ will be the same as the coefficient of $(pre^{Treat}_t -\widehat{pre}^{Treat}_t)$ from the regression shown in Equation \eqref{equation: DIDINTpre1residuals}. $(pre^{Treat}_t -\widehat{pre^{Treat}_t})$ are the residuals from the regression shown in Equation \eqref{equation: DIDINTpre1simplified}.

\begin{equation}
\label{equation: DIDINTpre1residuals}
    Y_{i,t} = \psi_1 (pre^{Treat}_t -\widehat{pre}^{Treat}_t) + e''_{i,t} 
\end{equation}

From Equation \eqref{equation: DIDINTpre1residuals}, we will derive $\psi_1$ using the OLS formula:

\begin{align*}
\begin{split}
    \hat{\psi_1} = & \frac{\sum_i \sum_t Y_{i,t}(pre^{Treat}_t -\widehat{pre}^{Treat}_t)}{\sum_i \sum_t (pre^{Treat}_t -\widehat{pre}^{Treat}_t)^2}
\end{split}
\end{align*}

\begin{equation}
\label{equation: ATTDTfirst}
\resizebox{\textwidth}{!}{$
\begin{aligned}
    \Rightarrow \hat{\psi_1} = & \frac{
        \sum_{i \in \{t \in pre, x \in X^{Treat}\}} 
        \left( 1 - \hat{\pi}^{Treat}_5 X^{pre,{Treat}}_{i,t} \right) Y^{post,{Treat}}_{i,t} + 
        \sum_{i \in \{t \in post, x \in X^{Treat}\}} 
        \left( -\hat{\pi}^{Treat}_2 - \hat{\pi}^{Treat}_5 X^{post,{Treat}}_{i,t} \right) Y^{post,{Treat}}_{i,t}
    }{
        \sum_{i \in \{ t \in pre, x \in X^{Treat} \}} 
        \left( 1 - \hat{\pi}^{Treat}_5 X^{pre,{Treat}}_{i,t} \right)^2 + 
        \sum_{i \in \{ t \in post, x \in X^{Treat} \}} 
        \left( -\hat{\pi}^{Treat}_2 - \hat{\pi}^{Treat}_5 X^{post,{Treat}}_{i,t} \right)^2
    }
\end{aligned}
$}
\end{equation}

We can similarly show,
\begin{equation}
\resizebox{\textwidth}{!}{$
\label{equation: ATTDTsecond}
\begin{aligned}
    \hat{\psi_3} = & \frac{\sum_{i \in \{ t \in pre, x \in X^{Control}\}} \left( 1 - \hat{\pi^{Control}_6} X^{pre,{Control}}_{i,t} \right) Y^{pre,{Control}}_{i,t} + \sum_{i \in \{ t \in post, x \in X^{Control} \}} \left( -\hat{\pi^{Control}_4} - \hat{\pi^{Control}_6} X^{post,{Control}}_{i,t} \right) Y^{post,{Control}}_{i,t}}{\sum_{i \in \{ t \in pre, x \in X^{Control} \}} \left( 1 - \hat{\pi^{Control}_6} X^{pre,{Control}}_{i,t} \right)^2 + \sum_{i \in \{t \in post, x \in X^{Control}\}} \left( -\hat{\pi^{Control}_4} - \hat{\pi^{Control}_6} X^{post,{Control}}_{i,t} \right)^2}
\end{aligned}
$}
\end{equation}

Now, we will derive $\hat{\psi_2}$ from the regression shown in Equation \eqref{equation: DIDINT}. Let $\widehat{post_t^{Treat}}$ be the fitted values from the following regression:
\begin{equation}
\label{equation: DIDINTpre2}
    post_t^{Treat} = \rho_1^{Treat} pre_t^{Treat} + \rho_3^{Control} pre_t^{Control} + \rho_4^{Control} post_t^{Control} + \rho_5^T X^{Treat}_{i,t} + \rho_6^{Treat} X^{Control}_{i,t} + v_{i,t}
\end{equation}

Based on the proof of Lemma \eqref{lemma: independence} in the Appendix,  $post_t^{Treat} \perp post_t^{Control}$; $post_t^{Treat} \perp pre_t^{Control}$ and $post_t^{Treat} \perp X_{i,t}^{Control}$. Therefore, $\rho_3^{Treat} = \rho_4^{Treat} = \rho_6^{Treat} = 0$. We can re-write Equation \eqref{equation: DIDINTpre2} as:
\begin{equation}
\label{equation: DIDINTpre2simplified}
    post_t^{Treat} = \rho_1^{Treat} pre_t^{Treat} + \rho_5^{Treat} X^{Treat}_{i,t} + v_{i,t}
\end{equation}

From Equation \eqref{equation: DIDINTpre2simplified}, we again derive the residuals for each observation based on their treatment status and time period. For a treated observation in the pre-intervention period, the residual can be written as $- \hat{\rho_1}^{Treat} - \hat{\rho}^{Treat}_5 X^{Treat}_{i,0}$. Similarly, for a treated observation in the post-intervention period, the residual is $1 - \hat{\rho}^{Treat}_5 X^{Treat}_{i,1}$l. According to the FWL theorem, the coefficient $\hat{\psi_2}$ will be the same as the coefficient of $(post^{Treat}_t -\widehat{post^{Treat}_t})$ from the regression shown in Equation \eqref{equation: psi2residuals}. Here, $(post^{Treat}_t -\widehat{post^{Treat}_t})$ are the residuals from the regression shown in Equation \eqref{equation: DIDINTpre2simplified}.
\begin{equation}
\label{equation: psi2residuals}
    Y_{i,t} = \hat{\psi_2} (post^{Treat}_t -\widehat{post^{Treat}_t}) + v'_{i,t} 
\end{equation}
From Equation \eqref{equation: psi2residuals}:
\begin{align*}
\begin{split}
    \hat{\psi_2} = & \frac{\sum_i \sum_t Y_{i,t}(post^{Treat}_t -\widehat{post^{Treat}_t})}{\sum_i \sum_t (post^{Treat}_t -\widehat{post^{Treat}_t})^2}
\end{split}
\end{align*}
\begin{equation}
\resizebox{\textwidth}{!}{$
\label{equation: psi2}
\begin{aligned}
    \Rightarrow \hat{\psi_2} = & \frac{\sum_{i \in \{t \in pre, x \in X^{Treat}\}} \left( - \hat{\rho_1^{Treat}} - \hat{\rho^{Treat}_5} X^{pre,{Treat}}_{i,t} \right) Y^{pre,{Treat}}_{i,t} + \sum_{i \in \{t \in post, x \in X^{Treat} \}} \left( 1 - \hat{\rho^{Treat}_5} X^{post,{Treat}}_{i,t} \right) Y^{post,{Treat}}_{i,t} }{\sum_{i \in \{ t \in pre, x \in X^{Treat} \}} \left( - \hat{\rho_1^{Treat}} - \hat{\rho^{Treat}_5} X^{pre,{Treat}}_{i,t} \right)^2 + \sum_{i \in \{ t \in post, x \in X^{Treat}} \left( 1 - \hat{\rho^{Treat}_5} X^{post,{Treat}}_{i,t} \right)^2}
\end{aligned}
$}
\end{equation}
We can similarly show,
\begin{equation}
\resizebox{\textwidth}{!}{$
\label{equation: psi4}
\begin{aligned}
    \hat{\psi_4} = & \frac{\sum_{i \in \{ t \in pre, x \in X^{Control} \}} \left( - \hat{\rho}_3^{Control} - \hat{\rho}^{Control}_6 X^{pre,{Control}}_{i,t} \right) Y^{pre,{Control}}_{i,t} + \sum_{i \in \{ t \in post, x \in X^{Control}  \}} \left( 1 - \hat{\rho}^{Control}_6 X^{post,{Control}}_{i,t} \right) Y^{post,{Control}}_{i,t} }{\sum_{i \in \{ t \in pre, x \in X^{Control}\}} \left( - \hat{\rho}_3^{Control} - \hat{\rho}^{Control}_6 X^{pre,{Control}}_{i,t} \right)^2 + \sum_{i \in \{ t \in post, x \in X^{Control}\}} \left( 1 - \hat{\rho^C_6} X^{post,{Control}}_{i,t} \right)^2}
\end{aligned}
$}
\end{equation}

Combining Equations \eqref{equation: ATTDTfirst}, \eqref{equation: ATTDTsecond}, \eqref{equation: psi2} and \eqref{equation: psi4}, we get an expression for $\left[ (\hat{\psi_2} - \hat{\psi_1}) - (\hat{\psi_4} - \hat{\psi_3})\right]$. This provides an unbiased estimate of the ATT under violations of Assumption \eqref{assumption: stateinvariantccc}, as proven by \cite{karim2024good}.

For the unpooled regression, we use the specification in Equation \eqref{equation: undid1} and \eqref{equation: undid2}. Provided we are using the same underlying sample for the two regressions, $X^{Treat}_{i,t}$ from the treated UN-DID regression in Equation \eqref{equation: undid1} is the same as $X^{Treat}_{i,t}$ from the regression shown in Equation \eqref{equation: DIDINT}.  Using the FWL theorem, it can be shown that $\hat{\lambda}_1^{Treat}$ from the regression shown in Equation \eqref{equation: undid1} is the same as the coefficient of $(pre^{Treat}_t - \widehat{pre^{Treat}_t})$ from the following regression:
\begin{equation}
\label{equation: undid1fwlcov}
    Y_{i,t} = \lambda^{Treat}_1 (pre^{Treat}_t -\widehat{pre^{Treat}_t}) + \nu^{Treat}_{i,t} 
\end{equation}
Here, $(pre^{Treat}_t - \widehat{pre^{Treat}_t})$ are the residuals from the following regression:
\begin{equation}
\label{equation: undid1fwlcovresiduals}
    pre_t^{Treat} = \Lambda_2^{Treat} post_t^{Treat} + \Lambda_3^{Treat} X^{Treat}_{i,t} + \nu'^{{Treat}}_{i,t}
\end{equation}

Using the same underlying sample, Equation \eqref{equation: undid1fwlcovresiduals} is exactly the same as Equation \eqref{equation: DIDINTpre1simplified} from the DID-INT regression. Therefore the residuals from the two regressions will also be the same. This implies, $\hat{\lambda}_1^{Treat}$ from the unpooled regression of the treated group is equal to $\hat{\psi_1}$ from the DID-INT regression. We can similarly show, $\hat{\lambda}_1^{Control}$ from the unpooled regression of the untreated group is equal to $\hat{\psi_3}$ from the DID-INT regression.

Now, we will derive $\hat{\lambda}_2^{Treat}$ from the regression shown in Equation \eqref{equation: undid2}. Using the FWL theorem, it can be shown that $\hat{\lambda}_2^{Treat}$ from the regression shown in Equation \eqref{equation: undid2} is the same as the coefficient of $(post^{Treat}_t - \widehat{post^{Treat}_t})$ from the following regression:
\begin{equation}
\label{equation: undid1fwlcov}
    Y_{i,t} = \lambda^{Treat}_2 (post^{Treat}_t -\widehat{post^{Treat}_t}) + \nu^{{Treat}''}_{i,t} 
\end{equation}
Here, $(post^T_t - \widehat{post^T_t})$ are the residuals from the following regression:
\begin{equation}
\label{equation: undid1fwlcovresiduals}
    post_t^{Treat} = \Theta_1^{Treat} pre_t^{Treat} + \Theta_3^{Treat} X^{Treat}_{i,t} + \nu_{i,t}^{{Treat}'''}
\end{equation}

Similar to the proof of $\widehat{\lambda}^{Treat}_1$, Equation \eqref{equation: undid1fwlcovresiduals} is exactly the same as Equation \eqref{equation: DIDINTpre2simplified} from the DID-INT regression provided we are using the same sample. Therefore the residuals from the two regressions will also be the same. This implies, $\hat{\lambda}_2^{Treat}$ from the unpooled regression of the treated group is equal to $\hat{\psi_2}$ from the DID-INT regression. We can similarly show, $\hat{\lambda}_2^{Control}$ from the unpooled regression of the untreated group is equal to $\hat{\psi_4}$ from the DID-INT regression.

    \begin{equation}
        \left[ (\hat{\psi_2} - \hat{\psi_1}) - (\hat{\psi_4} - \hat{\psi_3})\right] = (\widehat{\lambda}_2^{Treat} - \widehat{\lambda}_1^{Treat}) - (\widehat{\lambda}_2^{Control} - \widehat{\lambda}_1^{Control})
    \end{equation}
\end{proof}

\section{Proof of equivalence between two versions of UN-DID}
\label{section: transformedequation}

\begin{lemma}
\begin{equation}
\label{equation: prop1eq1}
    \; Y^j_{i,t} = \lambda_1^j pre_t^j + \lambda_2^j post_t^j + \lambda_3^j X^j_{i,t} + \nu^j_{i,t}, \;\;\;\;\;\; \mbox{where} \; j= \{ T, C \}.
\end{equation}
\begin{equation}
\label{equation: prop1eq2}
    \equiv \; Y^j_{i,t} = \gamma_0^j + \gamma_1^j post_t^j + \gamma_2^j X^j_{i,t} + \nu^j_{i,t}, \;\;\;\;\;\; \mbox{where} \; j= \{ T, C \}.
\end{equation}    
\end{lemma}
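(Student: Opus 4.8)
The plan is to recognize the two equations as the same linear model written in two coordinate systems for the regressors, and then to conclude that they produce identical fitted values and residuals, with the coefficient vectors in one-to-one correspondence.

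First I would invoke the defining identity $pre_t^j = 1 - post_t^j$, noted in the text just after Equations \eqref{equation: undid1}--\eqref{equation: undid2}. Substituting it into Equation \eqref{equation: prop1eq1} gives
\[
Y^j_{i,t} = \lambda_1^j(1-post_t^j) + \lambda_2^j post_t^j + \lambda_3^j X^j_{i,t} + \nu^j_{i,t} = \lambda_1^j + (\lambda_2^j-\lambda_1^j)post_t^j + \lambda_3^j X^j_{i,t} + \nu^j_{i,t},
\]
which is exactly the form of Equation \eqref{equation: prop1eq2} under $\gamma_0^j = \lambda_1^j$, $\gamma_1^j = \lambda_2^j - \lambda_1^j$, $\gamma_2^j = \lambda_3^j$. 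This is a linear bijection between parameter vectors, with inverse $\lambda_1^j = \gamma_0^j$, $\lambda_2^j = \gamma_0^j + \gamma_1^j$, $\lambda_3^j = \gamma_2^j$.

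Next I would pass from this algebraic identity to the OLS estimators. The regressor set $\{pre^j, post^j, X^j\}$ of \eqref{equation: prop1eq1} and the set $\{\mathbf{1}, post^j, X^j\}$ of \eqref{equation: prop1eq2} span the same column space, since $pre^j = \mathbf{1} - post^j$; in particular, one design matrix has full column rank if and only if the other does, so both regressions are well posed under the same condition. Because the OLS fit is the orthogonal projection of $Y^j$ onto that common column space, the fitted values, and hence the residuals $\widehat{\nu}^j_{i,t}$, are identical across the two specifications, and the estimated coefficients are related by the same linear bijection: $\widehat{\gamma}_0^j = \widehat{\lambda}_1^j$, $\widehat{\gamma}_1^j = \widehat{\lambda}_2^j - \widehat{\lambda}_1^j$, $\widehat{\gamma}_2^j = \widehat{\lambda}_3^j$. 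The middle equation $\widehat{\lambda}_2^j - \widehat{\lambda}_1^j = \widehat{\gamma}_1^j$ is precisely the identity used repeatedly elsewhere in the paper.

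There is no genuinely hard step here; the only points requiring care are (i) phrasing the conclusion as an equivalence of regressions — same fitted values, same residuals, coefficients in one-to-one correspondence — rather than a literal coefficient-by-coefficient equality, and (ii) recording the column-rank remark so that the no-constant form \eqref{equation: prop1eq1} and the intercept form \eqref{equation: prop1eq2} are simultaneously estimable. The substance is simply that dropping the intercept while including both period dummies is the same model as keeping the intercept with one period dummy.
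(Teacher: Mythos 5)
Your proof is correct and follows essentially the same route as the paper's: substitute $pre_t^j = 1 - post_t^j$ into the first specification and read off the reparametrization $\gamma_0^j=\lambda_1^j$, $\gamma_1^j=\lambda_2^j-\lambda_1^j$, $\gamma_2^j=\lambda_3^j$. Your added column-space/projection remark, which justifies that the OLS estimates (and fitted values and residuals) inherit the same bijection, is a point the paper leaves implicit, but it does not change the substance of the argument.
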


\begin{proof}
    We know that $pre_t^j = (1 - post_t^j)$. Substituting this into Equation \eqref{equation: prop1eq1} gives
\begin{equation*}
Y^j_{i,t} = \lambda_1^j (1 - post_t^j) + \lambda_2^j post_t^j + \lambda_3^j X^j_{i,t} + \nu^j_{i,t}
\end{equation*}
\begin{equation}
\label{equation: prop1eq3}
    \Rightarrow {Y^j_{i,t}} = \lambda_1^j + (\lambda_2^j - \lambda_1^j) \text{post}_t^j + \lambda_3^j X^j_{i,t} + \nu^j_{i,t}.
\end{equation}
Comparing equations \eqref{equation: prop1eq2} and \eqref{equation: prop1eq3}, we can see that $\lambda_1^j = \gamma_0^j$ and $(\lambda_2^j - \lambda_1^j) = \lambda_2^j$. This proof also implies that  $(\widehat{\lambda}_2^T - \widehat{\lambda}_1^T) - (\widehat{\lambda}_2^C - \widehat{\lambda}_1^C) = \hat{\gamma}_1^T - \hat{\gamma}_1^C $. We can also see that, $\lambda_3^j = \gamma_2^j$, since it captures the effect of the covariate on the outcome variable.

\end{proof}

\section{Proof of Lemma 7}
\label{ssection: independence}

\begin{lemma}
\label{lemma: independence}
            \begin{equation}
                pre_t^{Treat} \perp pre_t^{Control}, pre_t^{Treat} \perp post_t^{Control}, pre_t^{Treat} \perp X_{i,t}^{Control}  
            \end{equation}
\end{lemma}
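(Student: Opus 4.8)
The plan is to exploit the fact that each ``$Treat$'' regressor and each ``$Control$'' regressor in Equation \eqref{equation: DIDINT} is supported on a disjoint set of observations, so that what is really being asserted — and what is actually used later — is an \emph{orthogonality} statement rather than statistical independence in the literal sense. Recall the construction of the variables: $pre_t^{Treat} = pre_t\cdot D_s$, $pre_t^{Control} = pre_t\cdot(1-D_s)$, $post_t^{Control} = post_t\cdot(1-D_s)$ and $X_{i,t}^{Control} = X_{i,t}\cdot(1-D_s)$, where $D_s\in\{0,1\}$ is the binary treated-silo indicator of Assumption \eqref{assumption: binary}.

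First I would record the elementary pointwise identity $D_s(1-D_s)=0$, which holds for every observation because $D_s$ is binary. Multiplying the defining expressions, this yields $pre_t^{Treat}\cdot pre_t^{Control} = pre_t^2\, D_s(1-D_s) = 0$, and likewise $pre_t^{Treat}\cdot post_t^{Control}=0$ and $pre_t^{Treat}\cdot X_{i,t}^{Control}=0$, all as equalities that hold observation by observation. In particular the sample cross-moments vanish, $\overline{pre_t^{Treat}\, pre_t^{Control}}=\overline{pre_t^{Treat}\, post_t^{Control}}=\overline{pre_t^{Treat}\, X_{i,t}^{Control}}=0$; the same calculation also shows that every treated-silo regressor ($post_t^{Treat}$, $X_{i,t}^{Treat}$) is orthogonal to every control-silo regressor, again because the product carries a factor $D_s(1-D_s)$.

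Then I would translate this into the form actually invoked in the proof of Theorem \eqref{lemma: RPeqUNDID}: that the control-silo coefficients in the auxiliary regression \eqref{equation: DIDINTpre1} are zero, i.e.\ the passage to Equation \eqref{equation: DIDINTpre1simplified} is legitimate. Because that regression has no intercept, write its regressor matrix in the two blocks $X_1=[\,post_t^{Treat},\ X_{i,t}^{Treat}\,]$ and $X_2=[\,pre_t^{Control},\ post_t^{Control},\ X_{i,t}^{Control}\,]$, with dependent variable $y=pre_t^{Treat}$. The previous step gives $X_1'X_2=0$ and $X_2'y=0$, so the OLS normal equations decouple and the coefficient block on $X_2$ equals $(X_2'X_2)^{-1}X_2'y=0$ (and if $X_2$ is rank-deficient, zero is still among the minimizers and the fitted values are unchanged). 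The identical argument with $y=post_t^{Treat}$ supplies the analogous orthogonality used for $\hat\psi_2$.

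The only real subtlety, which I would be careful to flag, is that $pre_t^{Treat}$ and $pre_t^{Control}$ are \emph{not} independent as random variables — they cannot both equal $1$, yet each equals $1$ with positive probability — so the lemma must be read as an orthogonality statement, and the step from Equation \eqref{equation: DIDINTpre1} to Equation \eqref{equation: DIDINTpre1simplified} genuinely relies on the absence of a constant term in Equation \eqref{equation: DIDINT} and in the auxiliary regressions, without which zero raw cross-moments would not force the coefficients to vanish. Beyond that, the entire argument reduces to the one-line observation $D_s(1-D_s)=0$ together with the block structure of least squares.
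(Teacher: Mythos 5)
Your proposal is correct and takes essentially the same route as the paper: both rest on the pointwise identity $D_s(1-D_s)=0$, which makes every treated-silo regressor orthogonal to every control-silo regressor, so the inner products in the lemma vanish. You go slightly further than the paper by spelling out why zero cross-moments force the control-block coefficients in the intercept-free auxiliary regression \eqref{equation: DIDINTpre1} to be zero (the decoupling of the normal equations), and by correctly flagging that the claim is orthogonality rather than statistical independence --- both points the paper leaves implicit.
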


\begin{proof}[Proof: Lemma \eqref{lemma: independence}.]
    \begin{equation}
        \label{equation: lemma8first}
    pre_t^{Treat} = pre_t * D_i
    \end{equation}
    \begin{equation}
        \label{equation: lemma8second}
        pre_t^{Control} = pre_t * (1-D_i)
    \end{equation}

If $D_i = 1$ in Equation \eqref{equation: lemma8first} and \eqref{equation: lemma8second}, $pre_t^{Treat} = pre_t$ and $pre_t^{Control} = 0$. Similarly, if $D_i = 0$, $pre_t^{Treat} = 0$ and $pre_t^{Control} = pre_t$. As a result, the inner product of $pre_t^T$ and $pre_t^{Control}$ is 0, which implies $pre_t^{Treat} \perp pre_t^{Control}$. Using the same intuition, the inner product of $pre_t^{Treat}$ and $post_t^{Control}$ as well as $pre_t^{Treat}$ and $X_{i,t}^{Control}$ is 0. So, $pre_t^{Treat} \perp post_t^C$ and $pre_t^{Treat} \perp X_{i,t}^{Control}$. Therefore, $\widehat{\pi}^{Treat}_3 = \widehat{\pi^{Treat}_4} = \widehat{\pi^{Treat}_6} = 0$.

\end{proof}

\section{Different CCC Violations for Merit Example} \label{sec:merit_ccc}

	\begin{table}[htbp]
		\centering
		\caption{DID-INT ATT and Jackknife SE Estimates}
		\label{tab:didintjl_results_rounded}
		\begin{tabular}{lcc}
			\hline
			\textbf{CCC variation} & \textbf{Aggregate ATT} & \textbf{Jackknife SE} \\
			\hline
			
			Homogeneous                   & 0.0450                 & 0.0092                \\
			Time                  & 0.0412                 & 0.0101                \\
			Year                 & 0.0458                 & 0.0084                \\
			Two One-Way                   & 0.0420                 & 0.0091                \\
			Two Way                   & 0.0511                 & 0.0209                \\
			\hline \hline
		\end{tabular}
	\vskip 6pt
	{\footnotesize Comparison of ATT estimates using cohort aggregation with DID-INT}
	\end{table}

\end{document}